\documentclass[a4paper,11pt]{article}
\usepackage{amsthm}
\usepackage{cite}
\usepackage{indentfirst}
\usepackage{stmaryrd}

\usepackage[top=1 in,bottom=1 in,left=1.0 in,right=1.0 in]{geometry}

\usepackage{xcolor}
\usepackage[colorlinks,linkcolor=black,anchorcolor=blue,citecolor=blue]{hyperref}

\newtheorem{lemma}{Lemma}[section]

\newtheorem{theorem}{Theorem}[section]
\newtheorem{proposition}{Proposition}[section]
\newtheorem{corollary}{Corollary}[section]
\newtheorem{remark}{Remark}[section]

\newcommand{\bsb}{\begin{subequations}}
\newcommand{\esb}{\end{subequations}}

\newcommand{\ti}[1]{\tilde{#1}}

\usepackage{amsmath}

\usepackage{amsfonts}
\usepackage{latexsym}
\usepackage{amssymb}
\usepackage{appendix}
\usepackage{mathrsfs}
\usepackage{graphicx}
\usepackage{bm}
\usepackage{cases}
\usepackage{enumitem}
\allowdisplaybreaks

\usepackage{color}
\newcommand{\bred}{\begin{color}{red}}
\newcommand{\ecl}{\end{color}}
\newcommand{\bblue}{\begin{color}{blue}}
\newcommand{\bgre}{\begin{color}{green}}
\newcommand{\bora}{\begin{color}{orange}}

\numberwithin{equation}{section}

\title{Constraints of the D$\Delta$KP hierarchy to the semi-discrete
AKNS and Burgers hierarchies}
\author{Jin Liu$^{1}$,~~ Da-jun Zhang$^{1,2}$\footnote{
Corresponding author: djzhang@staff.shu.edu.cn}\\
\small{$^{1}$Department of Mathematics, Shanghai University, Shanghai 200444, China}\\
\small{$^{2}$Newtouch Center for Mathematics of Shanghai University, Shanghai 200444, China}}

\date{\today}
			
\begin{document}
				
\maketitle

\begin{abstract}
The paper investigates three eigenfunction constraints of two (2+1)-dimensional differential-difference
integrable systems.
First, we revisit the known squared eigenfunction symmetry constraint of
the differential-difference Kadomtsev-Petviashvili (D$\Delta$KP) hierarchy,
which gives rise to a semi-discrete Ablowitz-Kaup-Newell-Segur hierarchy.
Second, we introduce a linear eigenfunction constraint for the D$\Delta$KP system
and obtain a combined semi-discrete Burgers (sdBurgers) hierarchy.
In the third one, we consider another linear eigenfunction constraint for
the modified D$\Delta$KP system and obtain the same combined sdBurgers hierarchy.
All these constraint results are proved by using recursive algebraic structures
of the involved integrable hierarchies generated by their master symmetries.

\begin{description}
\item[Keywords:] D$\Delta$KP hierarchy,
semi-discrete AKNS hierarchy,
semi-discrete Burgers hierarchy, eigenfunction constraint, master symmetry

\end{description}
\end{abstract}
				
%
				
\section{Introduction}\label{sec-1}

It is well known that the Kadomtsev-Petviashvili (KP) equation,
\begin{equation}\label{KP}
4u_{xt}=(u_{xxx}+12uu_x)_x+3u_{yy},
\end{equation}
is a representative in $(2+1)$-dimensional integrable systems.
Its hierarchy are characterized by the Lax equation
\begin{equation}\label{Lax-eq}
L_{t_m}=[A_m,L]=A_mL-LA_m
\end{equation}
as the compatibility of
\begin{subequations}\label{Lax-pair}
\begin{align}
& L \varphi=\lambda \varphi, \label{Lax-pair-L}\\
& \varphi_{t_m}=A_m \varphi, \label{Lax-pair-t}
\end{align}
\end{subequations}
where $L$ is the pseudo-differential operator
\begin{equation}
L=\partial_x +u_2 \partial^{-1}_x  +u_3 \partial^{-2}_x+ \cdots,
\end{equation}
$u_j=u_j(x, t_1=x, t_2=y, t_3=t, t_4, \cdots)$,
$A_m=(L^m)_{\geq 0}$ contains the differential part of $L^m$,
e.g., $A_2=\partial_x^2+2u_2$.
The KP hierarchy and the pseudo-differential operator formulation
not only are the building blocks of the universal Sato theory \cite{MJD-book-2000}
but also exhibit very rich mathematical structure in integrable systems, e.g. \cite{K-book-2000}.

As $(2+1)$-dimensional integrable systems, the KP hierarchy shows various links with
 $(1+1)$-dimension integrable equations.
Apart form the well known dimension reductions that give rise to the Korteweg-de Vries (KdV)
and Boussinesq hierarchies,
the KP hierarchy can yield the Ablowitz-Kaup-Newell-Segur (AKNS) hierarchy.
The KP hierarchy admit a ``squared'' eigenfunction symmetry \cite{MSS-JMP-1990}
$\sigma=(\varphi \varphi^*)_x$ where  $\varphi^*$ stands for a solutions of the adjoint form of \eqref{Lax-pair-t}:
\begin{equation}\label{Lax-pair-t*}
\varphi^*_{t_m}=-A^*_m \varphi^*.
\end{equation}
It turns out that the symmetry constraint $u_2=\varphi \varphi^*$
leads \eqref{Lax-pair-L} to the AKNS spectral problem
and converts \eqref{Lax-pair-t} and \eqref{Lax-pair-t*} to the AKNS hierarchy.
This remarkable fact was first revealed in 1991 independently in \cite{CL-PLA-1991} and \cite{KSS-PLA-1991}.
The rigorous proofs were given one year later  in \cite{X-IP-1992} and \cite{CL-JPA-1992},
respectively, from different ways.
The AKNS hierarchy has a recursion operator but the KP hierarchy does not.
Ref.\cite{X-IP-1992} showed that under the constraint $u_2=\varphi \varphi^*$, the
flows $(\varphi_{t_{m+1}}, \varphi^*_{t_{m+1}})^T$ and $(\varphi_{t_{m}}, \varphi^*_{t_{m}})^T$
are connected by the AKNS recursion operator.
On the other hand,  the recursive structures
of the AKNS flows and the KP flows can be described by using their master symmetries.
Ref.\cite{CL-JPA-1992} completed the proof by comparing
the  recursive structures of the AKNS flows and $\{(\varphi_{t_{m}}, \varphi^*_{t_{m}})^T\}$
that are generated by master symmetries.
The  ``squared'' eigenfunction symmetry constraint should be general,
because such constraints and links with  $(1+1)$-dimensional integrable hierarchies
have been found for the modified KP (mKP),
differential-difference KP (D$\Delta$KP),
differential-difference mKP (D$\Delta$mKP) hierarchies.
The obtained $(1+1)$-dimensional integrable systems
are the Chen-Lee-Liu (CLL) hierarchy which is linked to mKP \cite{C-JMP-2002,CL-JPA-1992},
the semi-discrete AKNS (sdAKNS) hierarchy which is linked to D$\Delta$KP \cite{CDZ-JNMP-2017},
and relativistic Toda hierarchy and semi-discrete CLL hierarchy,
both of which are linked to the D$\Delta$mKP \cite{CZZ-SAMP-2021,LZZ-SAMP-2024}.
All these links haven been proved along the line of Ref.\cite{X-IP-1992}
in \cite{C-JMP-2002,CDZ-JNMP-2017,CZZ-SAMP-2021,LZZ-SAMP-2024},
i.e., using recursion operators of the involved $(1+1)$-dimensional integrable hierarchies.

In fact, the master symmetry approach developed by Cheng and Li in \cite{CL-JPA-1992}
reveals many interesting algebraic structures associated with the KP hierarchy.
It was also found in \cite{CL-JPA-1992} that the constraint $u_2=2\varphi_x$
converts \eqref{Lax-pair-t} to the Burgers hierarchy.
To our knowledge, these structures and links  have not yet
been uncovered on the differential-difference level.

In this paper, we will revisit the D$\Delta$KP hierarchy and their constraints.
It is notable that this is not a research simply parallel to Ref.\cite{CL-JPA-1992},
because  there are many new structures in the  differential-difference case
which are different from the continuous case.
For example, the isospectral flows and nonisospectral flows
exhibit new Lie algebraic structures, e.g. \cite{ZC-SAPM-2010,FHTZ-Nonl-2013};
nonisospectral equations allow infinitely many symmetries, e.g.\cite{ZNBC-PLA-2006,ZTOF-JMP-1991};
the  D$\Delta$mKP system allows two different squared eigenfunction symmetry constraints
which give rise to the relativistic Toda hierarchy \cite{CZZ-SAMP-2021}
and semi-discrete CLL hierarchy \cite{LZZ-SAMP-2024}, respectively.
In the present paper we will describe the recursive Lie algebraic structure
in the D$\Delta$KP system based on the master symmetry.
Then we can identity the sdAKNS hierarchy by comparing their  structures.
In a similar way we may show that two new linear eigenfunction constraints  yield
the (combined) semi-discrete Burgers (sdBurgers) hierarchy
from the D$\Delta$KP system and D$\Delta$mKP system, respectively.
Note that this sdBurgers  hierarchy is a combination of the classical sdBurgers hierarchy \cite{Z-PDEAM-2022}
so that it matches to the constraint results.

The paper is organized as follows.
In Section \ref{sec-2},
after introducing some notions and notations as preliminary,
we briefly review the D$\Delta$KP hierarchy, their symmetries,
algebraic structures and especially the recursive structures formulated by its master symmetry.
Then in Section \ref{sec-3}, we prove the squared eigenfunction symmetry constraint of the D$\Delta$KP hierarchy
yields the sdAKNS hierarchy.
Section \ref{sec-4} devotes to the sdBurgers hierarchy, their symmetries and recursive structure.
In Section \ref{sec-5} we show that the combined sdBurgers hierarchy can be obtained
from a new constraint of the D$\Delta$KP system.
In Section \ref{sec-6} we first review the D$\Delta$mKP hierarchy and their symmetries.
Then we show that  a linear eigenfunction constraint leads the D$\Delta$mKP system to
the  same combined sdBurgers hierarchy.
Finally, conclusions  are provided in Section \ref{sec-7}.

\section{D$\Delta$KP hierarchy and algebraic structures}\label{sec-2}	

In this section, we  review the D$\Delta$KP hierarchy, their symmetries,
algebraic structures and especially the recursive structures formulated by the master symmetry.

\subsection{Preliminary}\label{sec-2-1}
				
We briefly list some notions and notations used in this part.
One can also refer to \cite{ZTOF-JMP-1991,FHTZ-Nonl-2013}.
Let $u=u(n, x, \mathbf{t}=(t_1,t_2,\cdots,))$ be a function
of $n\in \mathbb{Z}$ and $x, t_j \in \mathbb{R}$, which is $C^{\infty}$ with respect to $(x,t_j)$,
and goes to zero rapidly as $|n|, |x| \to \infty$.
By $S[u]$ we denote a Schwartz space composed by all functions $f(u)$
that are $C^{\infty}$ differentiable with respect to $u$ and its shifts and derivatives.
Without confusion we write $f(u)=f^{(n)}$.
For two functions $f,g \in S[u]$, the G$\hat{\mathrm{a}}$teaux derivative of $f$ with respect to $u$ in direction $g$
is defined as
\begin{equation}\label{gateaux}
f'[g]\doteq f'(u)[g]=\frac{\mathrm{d}\, f(u+\varepsilon g)}{\mathrm{d}\,\varepsilon}\Big|_{\varepsilon=0},
\end{equation}
using which a Lie product $\llbracket \cdot,\cdot \rrbracket$ is defined as
\begin{equation}\label{lie}
\llbracket f,g\rrbracket=f'[g]-g'[f].
\end{equation}
For an evolution equation
\begin{equation}\label{ut-K}
u_t=K(u)
\end{equation}
where $K(u)\in S[u]$,  we say $\omega=\omega(u)$ is its symmetry
if for all solutions $u$ of \eqref{ut-K} there is
\begin{equation}\label{sym}
\omega_t=K'[\omega].
\end{equation}
For an operator $T$ living on $S[u]$, $T$ is called a strong symmetry  of the equation \eqref{ut-K},
if for any symmetry $\omega$ of  \eqref{ut-K}, the function $T \omega$ is also its symmetry,
which equivalently requires \cite{FF-PD-1981}
\begin{equation}
	T_t=[K', T]\doteq K'T-TK'.
\end{equation}
Here $T_t$ is the total derivative  with respect to $t$, i.e.
$T_t=\frac{\partial T}{\partial t}+T'[u_t]$,
and when $T$ does not contain $t$ explicitly, we have $\frac{\partial T}{\partial t}=0$ and then
\begin{equation}\label{strong-sym}
	T'[K]=[K', T].
\end{equation}
$T$ is hereditary if it satisfies the relation \cite{FF-PD-1981}
\begin{equation}
	T'[T f]g-T'[T g] f =T (T'[f]g-T'[g]f),~~\forall f,g \in S[u].
\end{equation}

By $\Delta$ and $E$ we denote the difference operator and shift operator with respect to $n$,
i.e. $\Delta f^{(n)}=(E-1) f^{(n)}=f^{(n+1)}-f^{(n)}$.
There is an extended Leibniz rule for $\Delta$ \cite{E-book-2005},
\begin{equation}\label{2.7}
\Delta^s g^{(n)}=\sum_{i=0}^{\infty} \mathrm{C}_s^i(\Delta^i g^{(n+s-i)})\Delta^{s-i},~~~ s\in \mathbb{Z},
\end{equation}
where
\begin{equation}\label{C-sj}
\mathrm{C}_0^0=1,~~ \mathrm{C}_s^i=\frac{s(s-1)(s-2)\cdots(s-i+1)}{i!}.
\end{equation}
Note that the above definition indicates that $\mathrm{C}_s^i \equiv 0$, if $i>s>0, ~i,s\in \mathrm{Z}$.
For two functions in $f,g\in S[u]$ we introduce their binary product
\begin{equation}\label{inner-bracket-1}
\langle\!\langle f^{(n)},g^{(n)}\rangle\!\rangle
=\sum_{n=-\infty}^{\infty}\int_{-\infty}^{\infty}f^{(n)} g^{(n)}\, \mathrm{d}x,
\end{equation}
and for an operator $T$  its adjoint operator $T^*$ is defined via
$\langle\!\langle f^{(n)}, Tg^{(n)}\rangle\!\rangle=\langle\!\langle T^* f^{(n)},g^{(n)}\rangle\!\rangle$.

\subsection{Isospectral and nonisopectral D$\Delta$KP flows}\label{sec-2-2}

For the details of this subsection, one can refer to \cite{FHTZ-Nonl-2013}.
To construct isospectral and nonisopectral D$\Delta$KP flows,
we start from the pseudo-difference operator
\begin{equation}\label{L-dKP}
L= \Delta + u + u_1 \Delta^{-1}+ \cdots + u_j \Delta^{-j} + \cdots,
\end{equation}
where $u_j\in S[u]$.
Consider the isospectral Lax triad \cite{FHTZ-Nonl-2013}
\begin{subequations}\label{dkp-iso-lax}
\begin{align}
&L \phi = \eta \phi,~~\eta_{t_m}=0, \label{dkp-iso-lax-L}\\
&\phi_x = A_1 \phi,~~A_1=\Delta+u,\label{dkp-phi_x}\\
&\phi_{t_m} = A_m \phi,~~m=1,2,\cdots, \label{dkp-iso-lax-t}
\end{align}
\end{subequations}
where $A_m=(L^m)_{\geq 0}$ contains the difference part (in terms of $\Delta$)
of $L^m$, and the first three are
\begin{subequations}\label{dkp-Aj}
\begin{align}
&A_1=\Delta + u,\label{A1}\\
&A_2=\Delta^2 + (\Delta u + 2u)\Delta
+(\Delta u + u^2 + \Delta u_1 + 2 u_1),\label{A2}\\
&A_3=\Delta^3 + a_1 \Delta^2 + a_2 \Delta + a_3,
\end{align}
\end{subequations}
where the coefficients $\{a_j\}$ are scalars:
\begin{subequations}
\begin{align*}
a_1 =& \Delta u^2 + 3 \Delta u + 3u,\\
a_2 =& 2 \Delta^2 u + 3 \Delta u + 3u^2 + u \Delta u + \Delta u^2 + 3u_1 + 3 \Delta u_1 + \Delta^2 u_1,\\
a_3 =& \Delta^2 u + u^3 + u\Delta u + \Delta u ^2 + 5u u_1 + (\Delta u )\Delta u_1 + 3u\Delta u_1\\
&+ u_1 \Delta u + u_1 E^{-1} u + 2 \Delta^2 u_1 + 3\Delta u_1 + 3 u_2 + 3\Delta u_2 + \Delta^2 u_2.
\end{align*}
\end{subequations}
 $A_m$ obeys the asymptotic condition
\begin{equation}\label{Am-asy}
	A_m|_{u=0}=\Delta^m.
\end{equation}
The compatibility  of \eqref{dkp-iso-lax} gives rise to
\begin{subequations}\label{dkp-iso-com}
\begin{align}
&L_{x}=[A_1,L],\label{dkp-iso-Lx}\\
&L_{t_m}=[A_m,L],\label{dkp-iso-Ltm}\\
&A_{1,t_m}-A_{m,x}+[A_1,A_m]=0.\label{dkp-iso-zcc}
\end{align}
\end{subequations}
Among them, Eq.\eqref{dkp-iso-Lx} leads to the expression of $u_j~(j >0)$ in terms of $u$, which are
\begin{subequations}
\begin{align}
&u_1 = \Delta^{-1} u_x,\\
&u_2 = \Delta^{-2} u_{xx} - \Delta u_x
- \Delta^{-1}\big(u \Delta^{-1} u_x\big)
+ \Delta^{-1} \big(\big(\Delta^{-1} u_x\big)
E^{-1} u\big),\\
&\cdots.\nonumber
\end{align}
\end{subequations}
Eq.\eqref{dkp-iso-Ltm} determines $A_m$ uniquely under the condition \eqref{Am-asy}.
In other words, if we assume $A_m$ has a form
\[A_m= \Delta^m+  a_1\Delta^{m-1}+ \cdots + a_m,~~
A_m |_{u=0}=\Delta^m,
\]
then $\{a_j\}$ can be uniquely determined from \eqref{dkp-iso-Ltm} and it gives rise to
$A_m=(L^m)_{\geq 0}$.
This fact also indicates the following property.
\begin{proposition}\label{prop-3-1}
For the pseudo-difference operator $L$ given in \eqref{L-dKP}
and any difference operator $\mathcal{A}_m$ with the form:
$\mathcal{A}_m=\sum_{j=0}^m a_j \Delta^{m-j}$,
if they satisfy
\begin{equation}\label{A-L}
[\mathcal{A}_m,L]=0, ~ ~~ \mathcal{A}_m |_{u=0}=0,
\end{equation}
then $\mathcal{A}_m=0$.
\end{proposition}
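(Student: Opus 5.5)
We argue by contradiction on the leading coefficient, comparing highest powers of $\Delta$ in $[\mathcal{A}_m,L]$. Suppose $\mathcal{A}_m\neq 0$. Let $k$ be the smallest index with $a_k\neq 0$, so that
\[\mathcal{A}_m=a_k\Delta^{m-k}+a_{k+1}\Delta^{m-k-1}+\cdots+a_m .\]
We compute the top-order term of $[\mathcal{A}_m,L]$, where $L=\Delta+u+u_1\Delta^{-1}+\cdots$. By the extended Leibniz rule \eqref{2.7}, $\Delta\, a_k\Delta^{m-k}=(Ea_k)\Delta^{m-k+1}+\cdots$, while $a_k\Delta^{m-k}\Delta=a_k\Delta^{m-k+1}$. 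All other products (with $u$, the $u_j\Delta^{-j}$, and the lower $a_j$) contribute only powers $\le \Delta^{m-k}$. The coefficient of $\Delta^{m-k+1}$ in $[\mathcal{A}_m,L]$ is therefore
\[a_k-Ea_k=-\Delta a_k .\]

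The hypothesis $[\mathcal{A}_m,L]=0$ then forces $\Delta a_k=0$, i.e.\ $a_k^{(n+1)}=a_k^{(n)}$ for all $n$. Since $a_k\in S[u]$ is a function of $u$ and its shifts and derivatives, being $n$-independent means it is constant as a function of $n$. The decay of $u$ as $|n|\to\infty$ then gives $a_k=a_k|_{u=0}$. Alternatively, one can argue directly in $S[u]$: a shift-invariant element must carry no dependence on $u$ at all, since otherwise the $n$-dependence of $u$ would make it non-constant, so it equals its value at $u=0$.

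Now the normalization $\mathcal{A}_m|_{u=0}=0$ gives $a_k|_{u=0}=0$, hence $a_k=0$. This contradicts the choice of $k$, so every $a_j$ vanishes and $\mathcal{A}_m=0$. Equivalently, one can run the same argument as a downward induction on the degree, peeling off the top coefficient at each step.

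The main obstacle is the step ``$\Delta a_k=0\Rightarrow a_k=a_k|_{u=0}$''. This needs either the boundary condition ($u\to0$ as $|n|\to\infty$, together with continuity of $a_k$ at $u=0$) or the algebraic fact that $\ker\Delta$ on $S[u]$ consists only of constants. I would use the decay assumption stated in Section 2.1, which handles it cleanly. The order-counting in the commutator is routine once \eqref{2.7} is applied.
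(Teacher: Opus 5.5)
Your proof is correct and follows essentially the paper's route. The paper justifies the proposition only by remarking that the Lax equation \eqref{dkp-iso-Ltm}, together with the asymptotic condition \eqref{Am-asy}, determines the coefficients of $A_m$ uniquely. Your top-order comparison is exactly the recursive mechanism behind that uniqueness: the leading coefficient of the commutator is $-\Delta a_k$, and the normalisation at $u=0$ combined with decay in $n$ forces $a_k=0$.
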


The third equation \eqref{dkp-iso-zcc} provides the zero curvature representation of
the scalar isopectral D$\Delta$KP hierarchy
\begin{equation}\label{dkp-Km}
u_{t_m} = K_m(u) = A_{m,x} - [A_1, A_m], ~~m=1,2,\cdots.
\end{equation}
The first three equations are
\begin{subequations}\label{dkp-Km-123}
\begin{align}
u_{t_1} = K_1(u)
=& u_x,\label{dkp-k1}\\
u_{t_2} = K_2(u)
=& (1+2\Delta^{-1})u_{xx} - 2 u_x + 2 u u_x,\label{dkp-k2}\\
u_{t_3} = K_3(u)
=& (3\Delta^{-2} + 3\Delta^{-1} +1)u_{xxx} + 3\Delta^{-1} u_x^2 + 3u\Delta^{-1} u_{xx}\nonumber\\
&- 6 \Delta^{-1} u_{xx}
+ 3 u_x + 3 u_x \Delta^{-1} u_x + 3\Delta^{-1}(u u_{xx})\nonumber\\
&+ 3 u u_{xx} - 3 u_{xx} + 3 u_x^2
+ 3 u^2 u_x -6 u u_x.
\end{align}
\end{subequations}
Equation \eqref{dkp-k2} is known as the D$\Delta$KP equation,
which is first presented in \cite{DJM-JPSJ-1982}
and rederived from the pseudo-difference operator $L$ in \cite{KT-CSF-1997}.

To find a master symmetry, we consider the nonisospectral Lax triad \cite{FHTZ-Nonl-2013}
\begin{subequations}\label{dkp-non-lax}
\begin{align}
&L \phi=\eta \phi,~~~\eta_{t_m}=\eta^m+\eta^{m-1},\\
&\phi_x=A_1\phi,\\
&\phi_{t_m}=B_m \phi,~~~m=1,2, \cdots,
\end{align}
\end{subequations}
where $B_m$ is assumed to be a difference operator with the form
\begin{equation}
B_m=\sum_{j=0}^{m} b_j\Delta^{m-j},
\end{equation}
with the asymoptotic condition
\begin{align}\label{dkp-non-bc}
{B_m}|_{u=0}=
x\Delta^m+(x+n)\Delta^{m-1}, ~~~ m \ge 1.
\end{align}
The compatibility of \eqref{dkp-non-lax} is
\begin{subequations}
\begin{align}
& L_x = [A_1, L],\label{dkp-non-Lx}\\
& L_{t_m} = [B_m, L] + L^m + L^{m-1},\label{dkp-non-Ltm}\\
& A_{1,t_m} - B_{m,x} + [A_1, B_m] = 0.\label{dkp-non-zcc}
\end{align}
\end{subequations}
Eq.\eqref{dkp-non-Ltm} determines $B_m$ and first three of them are
\begin{subequations}\label{3.15}
\begin{align}
B_1 =& x A_1 + x + n,\label{B1}\\
B_2 =& x A_2 +(x + n)A_1 + \Delta^{-1} u,\label{B_2},\\
B_3 =& x A_3 +(x+n)A_2 +\Delta^{-1}u\Delta + u\Delta^{-1}u\nonumber\\
&+2\Delta^{-1}u_1 -\Delta^{-1}u +\Delta^{-1}u^2,
\end{align}
\end{subequations}
where $A_j$ are given in \eqref{dkp-Aj}.
Then, \eqref{dkp-non-zcc} defines the  nonisopectral D$\Delta$KP hierarchy
\begin{equation}\label{dkp-sigma-m}
u_{t_m} = \sigma_m(u)
= B_{m,x} - [A_1, B_m],
\end{equation}
which gives
\begin{subequations}\label{dkp-sigma-123}
\begin{align}
u_{t_1} = \sigma_1(u)
=& x K_1 + u,\\
u_{t_2} = \sigma_2(u)
=& x K_2 + (x+n) K_1+ u_x + 3 \Delta^{-1} u_x + u^2 - u,\label{dkp-sigma-2}\\
u_{t_3} = \sigma_3(u)
=& x K_3 + (x+n) K_2 + 5 \Delta^{-2}u_{xx} - 6\Delta^{-1}u_x + 5 \Delta^{-1}(u u_x)\nonumber\\
&+ u_x \Delta^{-1} u + 4u\Delta^{-1}u_x - 2u^2 + u + u^3 + 3u u_x\nonumber\\
&+3\Delta^{-1}u_{xx} + u_{xx} -2u_x,\\
&\cdots,\nonumber
\end{align}
\end{subequations}
where $\{K_m\}$ are defined in \eqref{dkp-Km}.

\subsection{Master symmetry, Lie algebraic structure and recursive structure}\label{sec-2-3}

It is known that the D$\Delta$KP flows $\{K_l\}$ and $\{\sigma_r\}$ forms a Lie algebra with recursive structures, which also leads to symmetries of the isopectral D$\Delta$KP hierarchy.
The algebra of these flows is the following \cite{FHTZ-Nonl-2013}.

\begin{proposition}\label{prop-3-2}
The  flows $\{K_l(u)\}$ and $\{\sigma_r(u)\}$ span a Lie algebra with  structure
\begin{subequations}
\begin{align}
&\llbracket K_l, K_r \rrbracket =0,\\
&\llbracket K_l, \sigma_r \rrbracket = l(K_{l+r-1} + K_{l+r-2}),\label{dkp-Km-sigma-r}\\
&\llbracket \sigma_l, \sigma_r \rrbracket = (l-r) (\sigma_{l+r-1} + \sigma_{l+r-2}),
\end{align}
\end{subequations}
where $l,r \geq 1$ and $K_0(u)$, $\sigma_0(u)$ are set to be $0$.
Eq.\eqref{dkp-Km-sigma-r} indicates that $\sigma_2$ is a master symmetry
which acts as a flow generator through the following relations
\begin{subequations}
\begin{align}
&K_{s+1} = \frac{1}{s} \llbracket K_s, \sigma_2\rrbracket - K_s,~~s\geq1,\label{dkp-master-sym}\\
&\sigma_{s+1} = \frac{1}{s-2} \llbracket \sigma_s, \sigma_2 \rrbracket - \sigma_s,~~s \geq 3,
\end{align}
\end{subequations}
with initial flows $K_1 = u_x$ given in \eqref{dkp-Km-123} and $\sigma_1$ and $\sigma_3$ given in \eqref{dkp-sigma-123}.
\end{proposition}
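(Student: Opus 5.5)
The second group of relations is immediate once the bracket relations are known. Taking $r=2$ in \eqref{dkp-Km-sigma-r} gives $\llbracket K_s,\sigma_2\rrbracket=s(K_{s+1}+K_s)$, which is \eqref{dkp-master-sym}. Taking $r=2$ in the $\sigma$–$\sigma$ relation gives $\llbracket\sigma_s,\sigma_2\rrbracket=(s-2)(\sigma_{s+1}+\sigma_s)$, which can be solved for $\sigma_{s+1}$ when $s\ge3$. So the whole task is to prove the three bracket relations. I would do this with the standard zero-curvature/Lax-operator homomorphism argument, closing each case with the uniqueness statement of Proposition \ref{prop-3-1}.

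\textbf{Step 1 (flows as images of operators).} For a difference operator $N$, write $\Phi(N)=N_x-[A_1,N]$, so that $K_l=\Phi(A_l)$ and $\sigma_r=\Phi(B_r)$. Differentiate $K_l=A_{l,x}-[A_1,A_l]$ in the direction $\sigma_r$, and use $A_1'[\sigma_r]=\sigma_r=\Phi(B_r)$ because $A_1=\Delta+u$. Do the same with $l$ and $r$ interchanged, subtract, and apply the Jacobi identity. The goal is the identity
\[
\llbracket K_l,\sigma_r\rrbracket=\Phi(N_{lr}),\qquad N_{lr}=A_l'[\sigma_r]-B_r'[K_l]+[A_l,B_r].
\]
The analogous identities hold for the pairs $(A_l,A_r)$ and $(B_l,B_r)$. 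This step is purely algebraic, but one must track that $B_r$ depends explicitly on $x$, so $\partial_x$ and the Gâteaux derivative do not commute with multiplication by $x$ without extra terms. I expect these terms to cancel inside $\Phi$.

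\textbf{Step 2 (commutator with $L$).} Next I compute $[N_{lr},L]$. The inputs are $L'[K_l]=[A_l,L]$ and $L'[\sigma_r]=[B_r,L]+L^r+L^{r-1}$. Differentiating the first of these in the direction $\sigma_r$, the second in the direction $K_l$, subtracting, and using Jacobi once more should give
\[
[N_{lr},L]=[(L^r+L^{r-1})\text{-terms},\,A_l]\text{-contributions}=l\,[(L^{l+r-1}+L^{l+r-2})_{\ge0},L].
\]
The key sub-identity here is that $A_l'[\sigma_r]-[B_r,A_l]$ contains $\big((L^l)'[\sigma_r]\big)_{\ge0}$, and $(L^l)'[\sigma_r]=[B_r,L^l]+l(L^{l+r-1}+L^{l+r-2})$ since $L^r$ commutes with $L$. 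Consequently
\[
\mathcal{A}=N_{lr}-l(A_{l+r-1}+A_{l+r-2})
\]
satisfies $[\mathcal{A},L]=0$. In the $\sigma$–$\sigma$ case the non-isospectral terms $L^l+L^{l-1}$ and $L^r+L^{r-1}$ combine to give the coefficient $(l-r)$, and the comparison operator becomes $(l-r)(B_{l+r-1}+B_{l+r-2})$. In the $K$–$K$ case the comparison operator is $0$.

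\textbf{Step 3 (asymptotics and uniqueness).} It remains to check that $\mathcal{A}$ vanishes at $u=0$, so that Proposition \ref{prop-3-1} forces $\mathcal{A}=0$; applying $\Phi$ then yields the bracket relations. At $u=0$ all flows $K_l,\sigma_r$ vanish, so the Gâteaux-derivative terms drop. What is left is $[\Delta^l,x\Delta^r+(x+n)\Delta^{r-1}]$. Using $[\Delta,n]=E=\Delta+1$, so that $[\Delta^l,n]=lE\Delta^{l-1}$, this equals $l(\Delta^{l+r-1}+\Delta^{l+r-2})$, which matches $l(A_{l+r-1}+A_{l+r-2})|_{u=0}$. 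The $\sigma$–$\sigma$ case is similar, using $[x\Delta^l+(x+n)\Delta^{l-1},\,x\Delta^r+(x+n)\Delta^{r-1}]$.

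I expect the main obstacle to be Step 2. One must verify carefully that $N_{lr}$ is a genuine difference operator of the form required by Proposition \ref{prop-3-1}, with no negative powers of $\Delta$. One must also handle the projection $(\cdot)_{\ge0}$ when commuting it with the Gâteaux derivative and with $[B_r,\cdot]$, where $B_r$ carries explicit $x,n$ dependence. Here I would first write $A_l'[\sigma_r]=\big([B_r,L^l]+l(L^{l+r-1}+L^{l+r-2})\big)_{\ge0}$. I would then argue that $\big([B_r,(L^l)_{<0}]\big)_{\ge0}$ is exactly what cancels against the remaining terms.
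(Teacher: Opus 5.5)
There is a genuine gap in Step 2. The rest is sound: your reduction of the recursion formulas to the bracket relations is correct, and so is Step 1, which is the same $\mathcal{L}=A_1-\partial_x$ mechanism the paper uses to prove Proposition \ref{prop-3-3}. Your asymptotic values in Step 3 also check out.

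The computation you describe in Step 2 does not produce $[\mathcal{A},L]=0$. Differentiate $L'[K_l]=[A_l,L]$ along $\sigma_r$, differentiate $L'[\sigma_r]=[B_r,L]+L^r+L^{r-1}$ along $K_l$, and subtract. By $(f'[g])'[h]-(f'[h])'[g]=f'[\llbracket g,h\rrbracket]$, the left-hand sides give $L'[\llbracket K_l,\sigma_r\rrbracket]$. The nonisospectral terms cancel identically, because $(L^r+L^{r-1})'[K_l]=[A_l,L^r+L^{r-1}]$. What you actually obtain is
\[
[N_{lr},L]=L'[\llbracket K_l,\sigma_r\rrbracket],\qquad [\mathcal{A},L]=L'[\kappa],\quad \kappa=\llbracket K_l,\sigma_r\rrbracket-l(K_{l+r-1}+K_{l+r-2}).
\]
Since $L'[\kappa]=\kappa+(u_1'[\kappa])\Delta^{-1}+\cdots$, the assertion $[\mathcal{A},L]=0$ is equivalent to $\kappa=0$, which is exactly what you are trying to prove.

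Your fallback does not close the gap either. The identity $A_l'[\sigma_r]=([B_r,L^l])_{\ge0}+l(A_{l+r-1}+A_{l+r-2})$ is correct, but it only reduces $\mathcal{A}$ to $([B_r,(L^l)_{<0}])_{\ge0}-B_r'[K_l]$. You have no handle on $B_r'[K_l]$: $B_r$ is defined only implicitly, by the nonisospectral Lax equation plus the boundary condition. There is no closed form $B_r=(\cdots)_{\ge0}$ available; an Orlov--Schulman-type construction would be needed. So Proposition \ref{prop-3-1}, which requires $[\mathcal{A},L]=0$ exactly, cannot be invoked.

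The repair is to strengthen the uniqueness lemma rather than compute $[\mathcal{A},L]$. Write $\mathcal{A}=\sum_{j=0}^{N}y_j\Delta^{N-j}$. Three facts control the orders:
\begin{itemize}
\item $[y\Delta^s,\Delta]=-(\Delta y)(\Delta^{s+1}+\Delta^s)$;
\item $[y\Delta^s,u]$ has order at most $s$;
\item $[y\Delta^s,-\partial_x]=y_x\Delta^s$ has order $s$.
\end{itemize}
Hence, for $0\le k\le N$, the coefficient of $\Delta^{N+1-k}$ in $[\mathcal{A},\mathcal{L}]$ is $-\Delta y_k$ plus terms involving only $y_0,\dots,y_{k-1}$.

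So suppose $[\mathcal{A},\mathcal{L}]$ is merely a multiplication operator and $\mathcal{A}|_{u=0}=0$. Induction on $k$ gives $\Delta y_k=0$, hence $y_k=0$, for every $k$, by the same reasoning as in Proposition \ref{prop-3-1}.

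Your Step 1 shows $[\mathcal{A},\mathcal{L}]=\kappa$ is a function, and your Step 3 shows $\mathcal{A}|_{u=0}=0$. Therefore $\mathcal{A}=0$ and $\kappa=0$. The same argument handles the $K$--$K$ and $\sigma$--$\sigma$ brackets.

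With this lemma, Step 2 can be dropped entirely. Your concerns about projections also disappear, and $N_{lr}$ is trivially a difference operator since all three of its terms are. The paper quotes this proposition from the literature rather than proving it; the corrected argument is the same $\mathcal{L}$-comparison it uses for Proposition \ref{prop-3-3}.
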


Apart from the recursive structures for the flows $\{K_l\}$ and $\{\sigma_r\}$,
there   exist  a set of recursive structures for operators $\{A_m\}$ and $\{B_s\}$ as well.
In the following, similar to the continuous case in \cite{CLB-JPA-1988},
we derive such structures for $\{A_m \}$ and $\{B_s\}$ in semi-discrete case.
They will play important roles to identity the connection
between the  D$\Delta$KP system and the sdAKNS hierarchy.

\begin{proposition}\label{prop-3-3}
The operators $\{A_m\}$ and $\{B_s\}$ satisfy the following relation
\begin{subequations}
\begin{align}
&A_{m+1} = \frac{1}{m}
\left(
A_m'[\sigma_2] - B_2'[K_m] + [A_m, B_2]
\right)
-A_m,~~m\geq 1,\label{A_{m+1}-res}\\
&B_{s+1} = \frac{1}{s-2}
\big(
B_s'[\sigma_2] - B_2'[\sigma_s] + [B_s, B_2]
\big) -B_s, ~~~ s \geq 3,\label{B_{n+1}-res}
\end{align}
\end{subequations}
where $A_1$ is defined as \eqref{A1}, $B_1, B_2$ and $B_3$ are given in \eqref{3.15}.
Thus, $B_2$ acts as a ``master operator'' in the above recursive structures.
\end{proposition}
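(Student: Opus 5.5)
The plan is to obtain both recursions from Proposition \ref{prop-3-1} (uniqueness of difference operators commuting with $L$), applied to a carefully chosen operator built from the Lax equations. For \eqref{A_{m+1}-res} I define
\[
\mathcal{A}=A_m'[\sigma_2]-B_2'[K_m]+[A_m,B_2]-m(A_{m+1}+A_m).
\]
This is a difference operator of degree at most $m+1$ in $\Delta$: $A_m'[\sigma_2]$ has degree $\le m-1$, $B_2'[K_m]$ has degree $\le 1$, and $[A_m,B_2]$ has degree $\le m+1$. Its leading parts need the explicit $x$, $n$ dependence of $B_2$; one uses $[\Delta,n]=E$. The aim is to show $[\mathcal{A},L]=0$ and $\mathcal{A}|_{u=0}=0$, which gives $\mathcal{A}=0$.

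To compute $[\mathcal{A},L]$, I treat $u$ as depending on two times $t_m$ (isospectral) and $\tau$ (nonisospectral, $m=2$), so that $L_{t_m}=[A_m,L]$ and $L_\tau=[B_2,L]+L^2+L$. Taking the Gâteaux derivative of the first identity in the direction $\sigma_2$ and of the second in the direction $K_m$ gives
\[
L'[\sigma_2]=[B_2,L]+L^2+L, \qquad L'[K_m]=[A_m,L].
\]
Differentiating these again, $(L'[K_m])'[\sigma_2]-(L'[\sigma_2])'[K_m]=L'[\llbracket K_m,\sigma_2\rrbracket]$, and Proposition \ref{prop-3-2} gives $\llbracket K_m,\sigma_2\rrbracket=m(K_{m+1}+K_m)$. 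Hence
\[
[A_m'[\sigma_2],L]+[A_m,L'[\sigma_2]]-[B_2'[K_m],L]-[B_2,L'[K_m]]-[L^2+L,\ \cdot\ ]\text{-terms}
= m[A_{m+1}+A_m,L].
\]
Here $(L^2+L)'[K_m]=[A_m,L^2+L]$. I substitute the two identities above and use the Jacobi identity to combine $[A_m,[B_2,L]]-[B_2,[A_m,L]]=[[A_m,B_2],L]$. The terms $[A_m,L^2+L]-[A_m,L^2+L]$ then cancel, and what remains is exactly $[\mathcal{A},L]=0$.

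Next I check $\mathcal{A}|_{u=0}=0$. At $u=0$ we have $K_m=\sigma$-terms $0$ and $A_m=\Delta^m$, so $A_m'[\sigma_2]|_{u=0}=A_m'[0]$. However, $\sigma_2|_{u=0}=0$, so both Gâteaux terms vanish. What remains is $[\Delta^m,x\Delta^2+(x+n)\Delta]-m(\Delta^{m+1}+\Delta^m)$. Using $[\Delta^m,n]=mE\Delta^{m-1}$, this becomes $m(1+\Delta)\Delta^{m}-m\Delta^{m+1}-m\Delta^m=0$. Proposition \ref{prop-3-1} then yields \eqref{A_{m+1}-res}.

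The case \eqref{B_{n+1}-res} is handled the same way, with two nonisospectral times $t_s$ and $\tau$. The new ingredients are $\eta$-dependent terms: $L'[\sigma_s]=[B_s,L]+L^s+L^{s-1}$. In the cross derivative there are extra commutators $[B_2,L^s+L^{s-1}]$ and $[B_s,L^2+L]$, and derivatives of $L^s$ in direction $\sigma_2$ that produce $sL^{s-1}(L^2+L)$-type terms plus commutators. I expect this bookkeeping to be the main obstacle. The non-commutator pieces must combine to $(s-2)(L^{s+1}+2L^s+L^{s-1})$, matching $(s-2)(L^{s+1}+L^s)+(s-2)(L^s+L^{s-1})$ from $\llbracket\sigma_s,\sigma_2\rrbracket=(s-2)(\sigma_{s+1}+\sigma_s)$. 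I would first verify this by computing $(L^k)'[\sigma_2]=[B_2,L^k]+kL^{k-1}(L^2+L)$. The initial-value check $\mathcal{B}|_{u=0}=0$ with $B_s|_{u=0}=x\Delta^s+(x+n)\Delta^{s-1}$ is a direct commutator computation using $[\Delta^k,n]=kE\Delta^{k-1}$ and $[\,\cdot\,,x]$ terms. One also needs the degree bound $\le s+1$. Proposition \ref{prop-3-1} then concludes.
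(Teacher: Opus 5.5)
Your argument is correct, but it takes a different route from the paper's. You work with the Lax equations for the pseudo-difference operator $L$; the paper works with the zero-curvature form.

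\textbf{The paper's route.} It rewrites \eqref{dkp-Km} and \eqref{dkp-sigma-m} as $K_m=[A_m,\mathcal{L}]$ and $\sigma_m=[B_m,\mathcal{L}]$, with $\mathcal{L}=A_1-\partial_x$. Since $\mathcal{L}'=I$, it computes $K_m'[\sigma_2]$ and $\sigma_2'[K_m]$ directly, so no second-variation identity is needed. The Jacobi identity then gives $\llbracket K_m,\sigma_2\rrbracket=[A_m'[\sigma_2]-B_2'[K_m]+[A_m,B_2],\mathcal{L}]$, which it compares with $m[A_{m+1}+A_m,\mathcal{L}]$. It concludes from an analogue of Proposition~\ref{prop-3-1} with $\mathcal{L}$ in place of $L$, which it only asserts, together with the boundary value \eqref{3.31}, which it states without computation. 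On that route the spectral-parameter terms never appear, and the $B$-recursion is literally the same computation with $\sigma_s$ in place of $K_m$.

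\textbf{Your route.} You need $L'[\sigma_s]=[B_s,L]+L^s+L^{s-1}$ to hold identically in $u$. This is \eqref{dkp-non-Ltm} read as an identity, and it is consistent with $L_x=[A_1,L]$ by the zero-curvature equation. You then have to cancel the $L^k$ terms. For the $A$-recursion this is trivial. For the $B$-recursion it goes through $(L^k)'[\sigma_2]=[B_2,L^k]+k(L^{k+1}+L^k)$, which does produce $(s-2)(L^{s+1}+2L^s+L^{s-1})$ on both sides, as you predicted.

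\textbf{What your route buys.} Proposition~\ref{prop-3-1} is used exactly as stated, rather than an unproved variant. The initial values are actually verified rather than quoted. Using $[\Delta^m,n]=mE\Delta^{m-1}$ gives \eqref{3.31}. For the $B$ case, $\sigma_s|_{u=0}=0$ kills the G\^ateaux terms. The same rule then gives $[B_s,B_2]|_{u=0}=(s-2)\bigl(x\Delta^{s+1}+(2x+n)\Delta^s+(x+n)\Delta^{s-1}\bigr)=(s-2)(B_{s+1}+B_s)|_{u=0}$, so that case closes as well.

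\textbf{Presentational fixes.} First, $L'[K_m]=[A_m,L]$ and $L'[\sigma_2]=[B_2,L]+L^2+L$ come from $L_t=L'[u_t]$, since $L$ has no explicit $x,n,t$ dependence. They do not come from ``taking G\^ateaux derivatives'' of the Lax equations. Second, the display containing ``$[L^2+L,\cdot]$-terms'' should be written out as the explicit difference of the two second variations.
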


\begin{proof}
Inserting the expression
\begin{equation}\label{A_{m,x}}
A_{m,x} =\partial_x A_m - A_m \partial_x
=[\partial_x, A_m],
\end{equation}
into the zero curvature representation
\eqref{dkp-Km} and \eqref{dkp-sigma-m} respectively, yields
\begin{subequations}\label{3.24}
\begin{align}
&K_m=[A_m,\mathcal{L}], \label{3.24a}\\
&{\sigma}_m=[B_m,\mathcal{L}], \label{3.24b}
\end{align}
\end{subequations}
where
\begin{equation}\label{def-mathcal-L}
\mathcal{L}=A_1-\partial_x.
\end{equation}
The G$\hat{\mathrm{a}}$teaux derivative of $L$ with respect to $u$ reads
\begin{equation}
\mathcal{L}'=(A_1-\partial_x)'=(\Delta+u-\partial_x)'=I,
\end{equation}
where $I$ stands for the identity operator.
Using this fact and the expressions in \eqref{3.24}, we have
\begin{equation}\label{Km'sigma2}
\begin{aligned}
K_m'[\sigma_2]
&=([A_m, \mathcal{L}])'[\sigma_2]
=[A_m'[\sigma_2],\mathcal{L}]
+[A_m,\mathcal{L}'[\sigma_2]]\\
&=[A_m'[\sigma_2],\mathcal{L}]
+[A_m,\sigma_2]
=[A_m'[\sigma_2],\mathcal{L}]
+[A_m,[B_2,\mathcal{L}]],
\end{aligned}
\end{equation}
and
\begin{equation}\label{sigma2'Km}
\begin{aligned}
\sigma'_2[K_m]
&=([B_2, \mathcal{L}])'[K_m]
=[B_2'[K_m],\mathcal{L}]
+[B_2,\mathcal{L}'[K_m]]\\
&=[B_2'[\sigma_2],\mathcal{L}]
+[B_2,K_m]
=[B_2'[\sigma_2],\mathcal{L}]
+[B_2,[A_m,\mathcal{L}]].
\end{aligned}
\end{equation}
Substracting \eqref{sigma2'Km} from \eqref{Km'sigma2} gives
\begin{align}
\llbracket K_m, \sigma_2 \rrbracket
&=K_m'[\sigma_2] - {\sigma_2}'[K_m] \nonumber \\
&=[A_m'[\sigma_2]-B_2'[K_m],\mathcal{L}] +[A_m,[B_2,\mathcal{L}]]-[B_2,[A_m,\mathcal{L}]]  \nonumber \\
&=[A_m'[\sigma_2]-B_2'[K_m],\mathcal{L}] + [[A_m,B_2],\mathcal{L}] \nonumber\\
&=[A_m'[\sigma_2]-B_2'[K_m]+[A_m,B_2],\mathcal{L}], \label{km-sigma2-1}
\end{align}
where  the Jacobi identity of the Lie bracket $[\cdot, \cdot]$
has been used.
On the other hand, from \eqref{dkp-Km-sigma-r} and \eqref{3.24a} we find
\begin{equation}\label{km-sigma2-2}
\llbracket K_m, \sigma_2 \rrbracket
= m(K_{m+1} + K_m)
=[m(A_{m+1}+A_m),\mathcal{L}].
\end{equation}
Meanwhile, direct calculation shows that (see also Eq.(4.24b) in \cite{FHTZ-Nonl-2013})
\begin{equation}\label{3.31}
\big(
A_m'[\sigma_2]-B_2'[K_m]+[A_m,B_2]
\big) \big|_{u=0}
= m (\Delta^{m+1}+\Delta^{m}),
\end{equation}
which coincides with (in light of \eqref{Am-asy})
\begin{equation}\label{3.32}
m(A_{m+1}+A_m)|_{u=0}
= m (\Delta^{m+1}+\Delta^{m}).
\end{equation}
Similar to Proposition \ref{prop-3-1}, one can prove that
for any difference operator $\mathcal{A}_m$ with the form
$\mathcal{A}_m=\sum_{j=0}^m a_j \Delta^{m-j}$,
the equation
\begin{equation}\label{A-LL}
[\mathcal{A}_m, \mathcal{L}]=0, ~ ~~ \mathcal{A}_m |_{u=0}=0
\end{equation}
has only zero solution $\mathcal{A}_m=0$.
Thus, comparing  \eqref{km-sigma2-1} and \eqref{km-sigma2-2}, and noticing the same boundary condition
in \eqref{3.31} and \eqref{3.32},
we conclude  that   \eqref{A_{m+1}-res} holds.
Eq.\eqref{B_{n+1}-res} can be proved in a similar way.

\end{proof}

\section{Constraint of D$\Delta$KP system to sdAKNS hierarchy}\label{sec-3}

The plan of this section is the following.
First, we recall the sdAKNS hierarchy, including the isospectral flows and nonisospectral flows,
and their Lie algebraic structure and master symmetry.
Then we recall the squared eigenfunction symmetry constraint
of the D$\Delta$KP system and present the  constraint result.
Finally, we prove the result along the line of Ref.\cite{CL-JPA-1992},
but we will see that the extensions to the differential-difference case are quite nontrivial.

\subsection{The sdAKNS hierarchy}\label{sec-3-1}

Let $U= (Q, R)^T$ where $Q\doteq Q^{(n)}, R\doteq R^{(n)}$ are scalar  functions of
$n\in \mathbb{Z}$ and $x, t_j \in \mathbb{R}$, which are $C^{\infty}$ with respect to $(x,t_j)$,
and tend to zero rapidly as $|n|, |x| \to \infty$.
Let $\mathcal{V}_2[U]$ denote the Schwartz space composed by all 2-dimensional vector fields
$f(U)=(f_1(U), f_2(U))^T$
that are $C^{\infty}$ differentiable with respect to $U$ and its shifts and derivatives.
We can extend the notions described in Sec.\ref{sec-2-1} to the case of 2-dimensional vector fields.
For example, for two functions $f,g \in \mathcal{V}_2[U]$, the G$\hat{\mathrm{a}}$teaux derivative
of $f$ with respect to $U$ in direction $g$
is defined as
\begin{equation}\label{gateaux-U}
f'[g]\doteq f'(U)[g]=\frac{\mathrm{d}\, f(U+\varepsilon g)}{\mathrm{d}\,\varepsilon}\Big|_{\varepsilon=0}.
\end{equation}
In addition, we denote
\begin{equation}\label{inner-bracket}
\langle f,g \rangle=f_1g_1+f_2 g_2,~~~ f,g \in \mathcal{V}_2[U].
\end{equation}

Consider the following spectral problem
\begin{equation}\label{sp-DT}
\Theta^{(n+1)} = \left(
                    \begin{array}{cc}
                      \lambda  + Q  R  & Q  \\
                      R  & 1
                    \end{array}
                  \right) \Theta^{(n)},~~~
\Theta^{(n)}=(\theta_{1}^{(n)}, \theta_{2}^{(n)})^T,
\end{equation}
where $ \lambda$ is the spectral parameter.
This spectral problem was first presented by Date, Jimbo and Miwa in \cite{DJM-JPSJ-1983} in 1983,
and later was studied by Tu and his collaborators \cite{ZTOF-JMP-1991,MRT-IP-1994}
for the integrable characteristics of the sdAKNS hierarchy.
It also acts as a Darboux  transformation of the AKNS spectral problem
after replacing $R^{(n)}$ with $R^{(n+1)}$ and $\lambda$ with $2(\lambda-\beta)$
where $R^{(n+1)}$  represents a new solution generated from $(Q^{(n)},R^{(n)})$
by introducing a soliton parameter $\beta$ \cite{AY-JPA-1994}.
On the other hand, the Darboux  transformation can be considered as a
discrete spectral problem to generate the fully discrete AKNS system and
fully discrete KP equation \cite{CXZ-proc-2020}.

For the  sdAKNS hierarchy derived from the spectral problem \eqref{sp-DT},
we employ the formulae presented in \cite{CDZ-JNMP-2017}.
The isospectral and nonisospectral sdAKNS hierarchies are
\begin{align}
&U_{t_s}=
\mathcal{K}_s
= \mathfrak{L}^s  \mathcal{K}_0,~~
\mathcal{K}_0
=\left(\begin{array}{c}
Q \\-R
\end{array}\right),~~s=0,1,2,\cdots , \label{sdakns-kn}
\end{align}
and
\begin{align}
&U_{t_s}
={\gamma}_s
= \mathfrak{L}^s  \gamma_0,~~
\gamma_0
=\left(\begin{array}{c}
(n+\frac{1}{2})Q \\
-(n-\frac{1}{2})R
\end{array}\right),
~~s=0,1,2,\cdots,
\end{align}
in which $\mathfrak{L}$ is the recursion operator
\begin{equation}\label{L-akns}
\mathfrak{L} =
\left(\begin{array}{cc}
1 & 0\\
0 & E^{-1}
\end{array}\right)
\left[
\mu^{(n)} I
-\left(\begin{array}{c}
Q^{(n)}\\-R^{(n+1)}
\end{array}\right)
(E+1)\Delta^{-1}
\left(R^{(n+1)}, Q^{(n)}\right)
\right]
\left(\begin{array}{cc}
E & 0\\
0 & 1
\end{array}\right)
{-I},
\end{equation}
with $\mu^{(n)} = 1 + Q^{(n)} R^{(n+1)}$
and $I$ being the identity matrix of the second order.
The first two nonlinear isospectral  flows are
\begin{align}
\mathcal{K}_1
=\left(\begin{array}{c}
\mathcal{K}_{1,1}\\
\mathcal{K}_{1,2}
\end{array}\right)
=\left(\begin{array}{c}
Q^{(n+1)} - Q^{(n)} - {(Q^{(n)})}^2 R^{(n)} \\
-R^{(n-1)} + R^{(n)} + {(R^{(n)})}^2 Q^{(n)}
\end{array}\right),
\end{align}
\begin{subequations}\label{K2}
\begin{align}
\mathcal{K}_2
=\left(\begin{array}{c}
\mathcal{K}_{2,1}\\
\mathcal{K}_{2,2}
\end{array}\right),
\end{align}
where
\begin{align}
\mathcal{K}_{2,1}&=
Q^{(n+2)} - (Q^{(n+1)})^2 R^{(n+1)}
- (Q^{(n)})^2 R^{(n-1)} - 2 Q^{(n+1)} Q^{(n)} R^{(n)} \nonumber \\
&~~~~~ + (Q^{(n)})^3 (R^{(n)})^2 -2 Q^{(n+1)}
+2 (Q^{(n)})^2 R^{(n)} + Q^{(n)},\\
\mathcal{K}_{2,2}&=
-R^{(n-2)} + {(R^{(n-1)})}^2 Q^{(n-1)}
+ {(R^{(n)})}^2 Q^{(n+1)}
+ 2 R^{(n-1)} R^{(n)} Q^{(n)} \nonumber\\
&~~~~~
- {(R^{(n)})}^3 {(Q^{(n)})}^2 + 2 R^{(n-1)}
- 2 {(R^{(n)})}^2 Q^{(n)} - R^{(n)},
\end{align}
\end{subequations}
and the first nonlinear  nonisospectral flow is
\begin{align}
\gamma_1
=\left(\begin{array}{c}
(n + \frac{3}{2}) Q^{(n+1)}
- (n + \frac{3}{2}) (Q^{(n)})^2 R^{(n)}
- 2 Q^{(n)} \Delta^{-1} (Q^{(n)} R^{(n)})
- (n + \frac{1}{2}) Q^{(n)}\\
- (n  {-} \frac{3}{2}) R^{(n-1)}
+ (n + \frac{1}{2}) {(R^{(n)})}^2 Q^{(n)}
+ 2 R^{(n)} \Delta^{-1} (R^{(n)} Q^{(n)})
+ (n - \frac{1}{2}) R^{(n)}
\end{array}\right).\label{sdakns-gamma-1}
\end{align}

These flows compose a Lie algebra by which the recursive relations are described.

\begin{proposition}\label{prop-4-1}\cite{CDZ-JNMP-2017}
The flows $\{\mathcal{K}_s,\gamma_l\}_{s \geq 0, l \geq 0}$
compose a Lie algebra with the following structure
\begin{subequations}
\begin{align}
&\llbracket
\mathcal{K}_s, \mathcal{K}_l
\rrbracket = 0,\\
&\llbracket
\mathcal{K}_s, \gamma_l
\rrbracket = s(\mathcal{K}_{s+l} + \mathcal{K}_{s+l-1}),\label{sdakns-ks-gamma-l} \\
&\llbracket
\gamma_s, \gamma_l
\rrbracket = (s-l) (\gamma_{s+l} + \gamma_{s+l-1}),
\end{align}
\end{subequations}
where we define $\mathcal{K}_{-1} = \gamma_{-1} =0 $.
Such a structure  indicates $\gamma_1$ is the master symmetry by which
the flows can be generated recursively:
\begin{subequations}
\begin{align}
\mathcal{K}_{s+1} = \frac{1}{s} \llbracket \mathcal{K}_s, \gamma_1 \rrbracket - \mathcal{K}_{s},~~s\geq 1,
\label{3.11a} \\
\gamma_{s+1} = \frac{1}{s-1} \llbracket \gamma_s, \gamma_1 \rrbracket - \gamma_{s}, ~~s\ge 2.
\end{align}
\end{subequations}
\end{proposition}

Note that the above sdAKNS hierarchy obtained from \eqref{sp-DT}
are apparently different from the Ablowitz-Ladik hierarchy which admits many reductions
\cite{AL-JMP-1976,ZC-SAPM-2010}.
For the isospectral sdAKNS hierarchy \eqref{sdakns-kn},
so far only one reduction was found in \cite{CZZ-SAMP-2021}, which yields the Volterra hierarchy.

\subsection{Squared eigenfunction symmetry, constraint and sdAKNS hierarchy}\label{sec-3-2}

\subsubsection{Main result}\label{sec-3-2-1}

For the Lax triad \eqref{dkp-iso-lax} where $\phi$ is the eigenfunction,
consider the following adjoint form:
\begin{subequations}\label{dkp-iso-adj-lax}
\begin{align}
&-\phi_x^* =   A_1^* \phi^*,~~A_1^*= \Delta^*+u,\label{dkp-adj-phi_x}\\
&-\phi^*_{t_m} = A_m^* \phi^*,~~m=1,2,\cdots, \label{dkp-adj-phi_t}
\end{align}
\end{subequations}
where $\phi^*$ is a solution of this system, $A^*_m$ is the adjoint form of $A_m$,
which is defined with respect to the binary form \eqref{inner-bracket-1}.
Then, it can be proved that
$(\phi\phi^*)_x$ is a symmetry of the isospectral D$\Delta$KP hierarchy \eqref{dkp-Km},
provided $\phi$ and $\phi^*$ satisfy  \eqref{dkp-iso-lax} and \eqref{dkp-iso-adj-lax}
\cite{LCTLH-MM-2016,CDZ-JNMP-2017,CZZ-SAMP-2021}.
Note that such a symmetry is conventionally called a ``squared eigenfunction symmetry''
because in the self-adjoint case (e.g. for the KdV hierarchy),
the symmetry reads $\frac{1}{2}(\phi^2)_x$ due to $\phi^*=\phi$.

Since $u_x$ is a symmetry of the D$\Delta$KP hierarchy \eqref{dkp-Km} as well,
one can consider the constraint $u_x-(\phi\phi^*)_x=0$,
which leads to the sdAKNS hierarchy.

\begin{theorem}\label{Th-3-1}
Consider the constraint
\begin{equation}\label{u=-qr}
u  = - Q  R,
\end{equation}
where we have taken $\phi=Q$ and $\phi^*=-R$.
Under this constraint,
the coupled system composed by \eqref{dkp-phi_x} and \eqref{dkp-adj-phi_x} yields equations
\begin{subequations}\label{K1}
\begin{align}
&Q^{(n)}_{x}
= Q^{(n+1)} - Q^{(n)} - (Q^{(n)})^2 R^{(n)}, \label{3.14a}\\
&R^{(n)}_x
=  - R^{(n-1)} + R^{(n)}  + (R^{(n)})^2 Q^{(n)},
\end{align}
\end{subequations}
which provides the $t_1$-flow $\mathcal{K}_1$ in the sdAKNS hierarchy \eqref{sdakns-kn}.
Under this constraint \eqref{u=-qr} together with \eqref{K1},
the D$\Delta$KP spectral problem \eqref{dkp-iso-lax-L} gives rise to the sdAKNS spectral problem \eqref{sp-DT}
(after gauge transformations),
and \eqref{dkp-iso-lax-t} and \eqref{dkp-adj-phi_t}
together compose the isospectral sdAKNS hierarchy \eqref{sdakns-kn}.
\end{theorem}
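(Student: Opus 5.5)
The plan follows Cheng--Li \cite{CL-JPA-1992}: compare the recursive structure generated by the master symmetry $\sigma_2$ of the D$\Delta$KP system (Proposition \ref{prop-3-3}) with that generated by $\gamma_1$ for the sdAKNS hierarchy (Proposition \ref{prop-4-1}).

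\emph{Step 1: the $x$-flow and the spectral problem.} Substituting $u=-QR$, $\phi=Q$, $\phi^*=-R$ into \eqref{dkp-phi_x} gives \eqref{3.14a} directly. For the adjoint equation \eqref{dkp-adj-phi_x} we use $\Delta^*=E^{-1}-1$; this yields the second equation of \eqref{K1}. For the spectral problem, write $L\psi=\eta\psi$ with $\psi_x=A_1\psi$. Under the constraint, $L$ takes the ``$k$-constrained'' form
\[
L=\Delta+u+\phi\,\Delta^{-1}\phi^*,
\]
which we check via \eqref{dkp-iso-Lx}, using the uniqueness in Proposition \ref{prop-3-1}. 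Setting $\theta_2=\Delta^{-1}(\phi^*\psi)$, up to shifts and a gauge, the scalar equation $L\psi=\eta\psi$ becomes a $2\times2$ first-order difference system. After the gauge and the identification $\lambda=\eta$ (up to an additive constant), this system becomes \eqref{sp-DT}.

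\emph{Step 2: define the constrained flows.} Set
\[
\mathcal{F}_m:=(\phi_{t_m},-\phi^*_{t_m})^T=(A_m\phi,\;A_m^*\phi^*)^T\big|_{u=-QR}.
\]
By Step 1, $\mathcal{F}_1=\mathcal{K}_1$. A direct computation with $A_2$ from \eqref{A2} checks that $\mathcal{F}_2=\mathcal{K}_2+c\,\mathcal{K}_1$, which fixes the normalization. It will presumably be an identity with $\mathcal{K}_m$ up to lower flows. The expected relation is $\mathcal{F}_m=\mathcal{K}_m$, possibly with a triangular combination such as $\sum_j \binom{m}{j}$-type terms arising from the $\eta^m+\eta^{m-1}$ normalization. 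The precise combination is read off from the case $m=2$.

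\emph{Step 3: recursion.} Define
\[
\Gamma:=(B_2\phi,\;-B_2^*\phi^*)^T\big|_{u=-QR}+\text{(correction)}.
\]
We want $\Gamma$ to equal $\gamma_1$, up to adding a multiple of $\mathcal{K}_1$ and of $\mathcal{K}_0$, the latter coming from the $\eta$-dependence via $\eta_{t}=\eta^2+\eta$. Then compute $\llbracket\mathcal{F}_m,\Gamma\rrbracket$ on $\mathcal{V}_2[U]$. By the chain rule, with $u'=-(R\cdot,Q\cdot)$, this bracket equals
\[
\bigl((A_m'[\sigma_2]-B_2'[K_m]+[A_m,B_2])\phi,\;\text{adjoint}\bigr),
\]
since $\sigma_2$ and $K_m$ restrict to $-(QR)_{t}$ along the respective flows. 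By \eqref{A_{m+1}-res}, this is $m(\mathcal{F}_{m+1}+\mathcal{F}_m)$. This is exactly the recursion \eqref{3.11a} satisfied by $\mathcal{K}_m$ with master symmetry $\gamma_1$. With matching initial data from Step 2, induction on $m$ gives $\mathcal{F}_m=\mathcal{K}_m$ for all $m$.

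\emph{Main obstacle.} The hard part is Step 3. First, one must verify that the restricted $u$-flows are consistent: the $u$-flow obtained from $\mathcal{F}_m$, namely $-(QR)_{t_m}$, must equal $K_m(-QR)$. This uses that $(\phi\phi^*)_x$ is a squared eigenfunction symmetry, and this is where the constraint is compatible. Second, $\Gamma$ must be identified exactly with $\gamma_1$. The nonisospectral $B_2$ carries explicit $x$ and $n$, and the discrete shifts make the adjoint $B_2^*$ produce terms like $n\pm\frac12$ and $\Delta^{-1}(QR)$. Matching these with \eqref{sdakns-gamma-1} requires a careful adjoint computation and a choice of the added constant/identity terms. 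I would first compute $\Gamma$ explicitly with $B_2=xA_2+(x+n)A_1+\Delta^{-1}u$. I would drop the explicit $x$-terms, which reproduce $x\,\mathcal{F}_2$-type contributions that cancel in the bracket against $x$-dependence. I would then adjust by multiples of $\mathcal{K}_0,\mathcal{K}_1$, which do not spoil the recursion since these commute appropriately by Proposition \ref{prop-4-1}.
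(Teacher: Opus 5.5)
Your overall plan, which transports the recursion \eqref{A_{m+1}-res} through the constraint and matches it with \eqref{3.11a}, is the paper's. However, Step 3 has a genuine gap, and it sits exactly where the paper does its main work. You expect $(B_2\phi,B_2^*\phi^*)^T|_R$ to coincide with $\gamma_1$ up to explicit $x$-terms and constant multiples of $\mathcal{K}_0,\mathcal{K}_1$. It does not. With $B_2=xA_2+(x+n)A_1+\Delta^{-1}u$ one finds $(B_2\phi,B_2^*\phi^*)^T|_R=\ti\gamma_1+\zeta_1$, where $\ti\gamma_1=x\mathcal{K}_2+x\mathcal{K}_1-\frac12\mathcal{K}_1+\gamma_1$ and (unshifted $Q,R$ meaning $Q^{(n)},R^{(n)}$)
\[
\zeta_1=\bigl(-Q^{(n+1)}+Q^2R+Q\,\Delta^{-1}(QR),\;-R^{(n-1)}+R-R\,\Delta^{-1}(QR)\bigr)^T .
\]
Its nonlocal parts $\pm\Delta^{-1}(QR)$ cannot be absorbed by constant multiples of the local flows $\mathcal{K}_0,\mathcal{K}_1$.

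Consequently your chain-rule derivation needs two incompatible properties of a single $\Gamma$. Converting $(A_m|_R)'[\Gamma]$ into $A_m'[\sigma_2]|_R$ requires $-\langle U,J\Gamma\rangle=\sigma_2|_R$. This holds for $\Gamma=\ti\gamma_1$ but not for $(B_2\phi,B_2^*\phi^*)$, since $-\langle U,J\zeta_1\rangle=Q^{(n+1)}R+QR^{(n-1)}-QR-Q^2R^2\neq0$. Producing $[A_m,B_2]\phi$, on the other hand, requires $\Gamma_1=B_2|_R\phi$. With $\Gamma=\ti\gamma_1$ what you actually get is
\[
\llbracket\mathcal{F}_m,\ti\gamma_1\rrbracket_1=\bigl(A_m'[\sigma_2]-B_2'[K_m]+[A_m,B_2]\bigr)\big|_R\,Q-\bigl(A_m|_R\,\zeta_{1,1}-\zeta_{1,1}'[\mathcal{F}_m]\bigr),
\]
and analogously in the second component. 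Nothing in your plan controls the last term.

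The missing idea is that $\zeta_1$ is itself an (adjoint) eigenfunction pair of the constrained linear problems; for instance $\zeta_{1,1}=-(\phi_x+\phi+\phi\,\Delta^{-1}u)|_R$. Moreover, it evolves linearly along the hierarchy, $\mathrm{diag}(A_s,-A_s^*)|_R\,\zeta_1=\zeta_1'[\mathcal{K}_s]$ for all $s$, and this kills the extra term. Proving this is the content of the paper's Lemma \ref{lem-3.3}, and it is not routine:
\begin{enumerate}
\item Introduce the whole family $\zeta_{m+1}=\zeta_m'[\ti\gamma_1]-\mathrm{diag}(B_2,-B_2^*)|_R\,\zeta_m$.
\item Prove $\mathrm{diag}(\mathcal{L},-\mathcal{L}^*)|_R\,\zeta_m=0$ by induction on $m$, using $(\mathcal{L}|_R)'[\ti\gamma_1]=\sigma_2|_R=[B_2,\mathcal{L}]|_R$. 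This gives the case $s=1$.
\item Induct on $s$ via \eqref{A_{m+1}-res}, Lemma \ref{lem-3.2} and \eqref{K-ga}. The step $s\to s+1$ for $\zeta_m$ uses the hypothesis for $\zeta_{m+1}$, so $\zeta_1$ alone does not suffice.
\end{enumerate}
You also need the identity $\sigma_2|_R=-\langle U,J\ti\gamma_1\rangle$, which the paper obtains by direct computation. For $K_m|_R$, your appeal to invariance of the constraint under the squared eigenfunction symmetry can replace the paper's Lemma \ref{lem-sdakns-1}, provided the inductions are interleaved.
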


This theorem has been proved in \cite{CDZ-JNMP-2017} where \eqref{dkp-iso-lax-t} and \eqref{dkp-adj-phi_t}
are converted into
\begin{equation}
\left(\begin{array}{c}
Q \\
R
\end{array}\right)_{t_{m+1}}
=\mathfrak{L}
\left(\begin{array}{c}
Q \\
R
\end{array}\right)_{t_{m}},~~~ m=1,2,\cdots,
\end{equation}
with $\mathfrak{L}$ given by \eqref{L-akns}
and the first equation $(Q,R)^T_{t_1}=(Q,R)^T_{x}=\mathcal{K}_1$ given by \eqref{K1}.
In the next, we will prove these results along the line of \cite{CL-JPA-1992},
by making use of the master symmetries, Lie algebraic and recursive structures
in the  D$\Delta$KP system and sdAKNS hierarchy.

\subsubsection{Proof of Theorem \ref{Th-3-1}}\label{sec-3-2-2}

The sdAKNS spectral problem \eqref{sp-DT} can be obtained from \eqref{dkp-iso-lax-L}
 but some gauge transformations are needed. One can refer to \cite{CDZ-JNMP-2017} for details.
In the following we only prove that  \eqref{dkp-iso-lax-t} and \eqref{dkp-adj-phi_t}
give rise to the isospectral sdAKNS hierarchy \eqref{sdakns-kn}.
The proof will be completed by proving a series of lemmas.
We employ the following notation used in \cite{CL-JPA-1992} to denote
\begin{equation}
F(u)|_R = F(u)|_{\eqref{u=-qr},\eqref{K1}} ,
\end{equation}
which means
we have replaced all $\partial_x^j Q^{(n+i)}$ and $\partial_x^j R^{(n+i)}$ in $F(u)$
so that $F(u)|_{R}$ has an expression  in terms of only $U$ and its shifts $U^{(n+i)}$.
These lemmas are the following.

\begin{lemma}\label{lem-sdakns-1}
For the isospectral D$\Delta$KP flows $\{K_m(u)\}$ and master symmetry $\sigma_2(u)$
given in \eqref{dkp-Km} and \eqref{dkp-sigma-2},
and for the isospectral sdAKNS flows $\{\mathcal{K}_m(U)\}$
and the master symmetry $\gamma_1(U)$ given in \eqref{sdakns-kn} and \eqref{sdakns-gamma-1},
we have  the following:
\begin{subequations}
\begin{align}
& K_m(u) |_R = -\langle U, J \mathcal{K}_m(U ) \rangle,\label{kn-r-akns}\\
& \sigma_2(u) |_R = -\langle U, J \ti{\gamma}_1(U ) \rangle,\label{sigma-r-akns}
\end{align}
\end{subequations}
where
\begin{equation}
\ti{\gamma}_1(U)
= x \mathcal{K}_2 + x \mathcal{K}_1 - \frac{1}{2} \mathcal{K}_1 + \gamma_1,\label{ti-gamma-1}
\end{equation}
$J = \bigl(\begin{smallmatrix}
0 & 1\\
1 & 0
\end{smallmatrix}\bigr)$
and $\langle \cdot, \cdot \rangle$ is the inner product defined in \eqref{inner-bracket}.
\end{lemma}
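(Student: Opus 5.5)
The plan is to prove both identities by first rewriting them as intertwining relations for the constraint map, and then propagating \eqref{kn-r-akns} from $m=1$ to all $m$ using the master-symmetry recursions. Put $\rho(U)=-QR$. Its G\^ateaux derivative in a direction $f=(f_1,f_2)^T$ is
\[
\rho'[f]=-(Rf_1+Qf_2)=-\langle U,Jf\rangle .
\]
So \eqref{kn-r-akns} and \eqref{sigma-r-akns} say exactly that $K_m(u)|_R=\rho'[\mathcal{K}_m]$ and $\sigma_2(u)|_R=\rho'[\ti{\gamma}_1]$. In other words, the reduction intertwines the D$\Delta$KP flows with the sdAKNS flows.

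\textbf{Base cases.} These are checked directly. For $m=1$ we have $K_1=u_x$. Using \eqref{K1}, $u_x|_R=-(Q_xR+QR_x)=-\langle U,J\mathcal{K}_1\rangle$. For $m=2$, and for $\sigma_2$ in \eqref{dkp-sigma-2}, we substitute $u=-QR$ and replace every $x$-derivative of $Q,R$ by the components of $\mathcal{K}_1$. Terms such as $\Delta^{-1}u_x$ and $\Delta^{-1}u_{xx}$ are simplified using the telescoping identity $\Delta^{-1}(Q^{(n+1)}R^{(n)}-Q^{(n)}R^{(n-1)})=Q^{(n)}R^{(n-1)}$. The results are compared with $-\langle U,J\mathcal{K}_2\rangle$ and $-\langle U,J\ti{\gamma}_1\rangle$. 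The explicit $x$-terms $xK_2+(x+n)K_1$ of $\sigma_2$ account for $x\mathcal{K}_2+x\mathcal{K}_1$ in \eqref{ti-gamma-1}. The $n$-dependent remainder must reproduce $\gamma_1-\frac12\mathcal{K}_1$, and in particular the term $2Q\Delta^{-1}(QR)$ in \eqref{sdakns-gamma-1} must come from $3\Delta^{-1}u_x+u^2$. This is routine bookkeeping.

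\textbf{Induction.} Assume $K_m|_R=\rho'[\mathcal{K}_m]$. We apply \eqref{dkp-master-sym} on the D$\Delta$KP side and \eqref{3.11a} on the sdAKNS side. The key claim is the homomorphism property
\[
\llbracket K_m,\sigma_2\rrbracket\big|_R=\rho'\big[\llbracket \mathcal{K}_m,\ti{\gamma}_1\rrbracket\big].
\]
Formally, it follows from the chain rule. Differentiating $K(\rho(U))=\rho'(U)[\mathcal{K}(U)]$ in a direction $g$ gives
\[
K'[\rho'[g]]=\rho''[\mathcal{K},g]+\rho'[\mathcal{K}'[g]].
\]
Antisymmetrizing, the symmetric second-derivative terms cancel, which leaves the claimed identity. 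Next, $\mathcal{K}_m$ contains neither explicit $x$ nor $x$-derivatives. Hence $\llbracket\mathcal{K}_m,x\mathcal{K}_j\rrbracket=x\llbracket\mathcal{K}_m,\mathcal{K}_j\rrbracket=0$ and $\llbracket\mathcal{K}_m,\mathcal{K}_1\rrbracket=0$. By Proposition \ref{prop-4-1} this gives
\[
\llbracket\mathcal{K}_m,\ti{\gamma}_1\rrbracket=\llbracket\mathcal{K}_m,\gamma_1\rrbracket=m(\mathcal{K}_{m+1}+\mathcal{K}_m).
\]
Dividing by $m$ and subtracting $K_m|_R=\rho'[\mathcal{K}_m]$ then yields $K_{m+1}|_R=\rho'[\mathcal{K}_{m+1}]$. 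This completes the induction; the indices match because both recursions have the same coefficients $\frac1s(\cdot)-(\cdot)$.

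\textbf{Main obstacle.} The difficult step is justifying this chain rule. The reduction $|_R$ is not merely the substitution $u=-QR$. It also replaces $\partial_x$ by the derivation generated by $\mathcal{K}_1$, and $\sigma_2$ carries explicit $x$. I would handle this by viewing the reduction as the map $\Phi(U)=(\rho,\ \rho'[\mathcal{K}_1],\ (\rho'[\mathcal{K}_1])'[\mathcal{K}_1],\dots)$ into the jet variables $(u,u_x,u_{xx},\dots)$. I would then verify that $\Phi'$ intertwines variations. For the $j$-th jet component, the variation in direction $g$ is $D_x^j\rho'[g]$ provided $\llbracket\mathcal{K}_1,g\rrbracket$ is controlled. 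For $g=\mathcal{K}_m$ this bracket vanishes. For $g=\ti{\gamma}_1$ it equals $\mathcal{K}_2+\mathcal{K}_1$, and together with the explicit-$x$ derivative of $\ti{\gamma}_1$ this should cancel. The cancellation is exactly what the choice of $x\mathcal{K}_2+x\mathcal{K}_1$ in $\ti{\gamma}_1$ is designed to achieve. If this general argument proves too delicate, the fallback is the operator route of Proposition \ref{prop-3-3}: use $K_m=[A_m,\mathcal{L}]$ and the recursion \eqref{A_{m+1}-res}, and check at the level of the Lax operators that $(A_m\phi)|_R$ reproduces the first component of $\mathcal{K}_m$.
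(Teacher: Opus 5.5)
Your proposal is correct and is essentially the paper's proof. Both check $m=1,2$ and $\sigma_2|_R$ directly, then induct through $K_{m+1}=\frac1m\llbracket K_m,\sigma_2\rrbracket-K_m$: differentiating the hypothesis gives cross terms $-\langle\ti{\gamma}_1,J\mathcal{K}_m\rangle+\langle\mathcal{K}_m,J\ti{\gamma}_1\rangle$ (your symmetric $\rho''$) that cancel, and $\llbracket\mathcal{K}_m,\ti{\gamma}_1\rrbracket=\llbracket\mathcal{K}_m,\gamma_1\rrbracket$ finishes the step.

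Your jet-level argument, that $|_R$ commutes with G\^ateaux differentiation because $\llbracket\mathcal{K}_1,\ti{\gamma}_1\rrbracket=\partial_x\ti{\gamma}_1=\mathcal{K}_2+\mathcal{K}_1$ and $\llbracket\mathcal{K}_1,\mathcal{K}_m\rrbracket=0$, makes explicit a step the paper uses without comment. One small slip in the base-case bookkeeping: the nonlocal terms of $\gamma_1$ cancel inside $\langle U,J\gamma_1\rangle$, so nothing in $\sigma_2$ has to produce $2Q\Delta^{-1}(QR)$; indeed $3\Delta^{-1}u_x|_R=-3Q^{(n)}R^{(n-1)}$ is local.
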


\begin{proof}
Using \eqref{K1} we immediately have
\begin{align}
K_1(u) |_R &= u_x|_R
= (-Q R)_x=-Q_x R-Q R_x \nonumber\\
&= -( Q^{(n+1)} - Q^{(n)} - (Q^{(n)})^2 R^{(n)})R^{(n)}
  -Q^{(n)}(- R^{(n-1)} + R^{(n)}  + (R^{(n)})^2 Q^{(n)}) \nonumber\\
&= - (Q^{(n)}, R^{(n)})
\left(\begin{array}{cc}
0 & 1\\
1 & 0
\end{array}\right)
\left(\begin{array}{c}
Q^{(n+1)} - Q^{(n)} - (Q^{(n)})^2 R^{(n)}\\
- R^{(n-1)} + R^{(n)}  + (R^{(n)})^2 Q^{(n)}
\end{array}\right) \nonumber \\
&= - \langle U  , J \mathcal{K}_1(U )\rangle. \label{3.19}
\end{align}
For $K_2 (u)$ defined in \eqref{dkp-k2} and $\mathcal{K}_2(U )$ defined in \eqref{K2},
using \eqref{K1}
we can also verify that
\begin{equation}\label{3.20}
K_2 (u) |_R = -\langle U, J \mathcal{K}_2(U ) \rangle.
\end{equation}
The proof of \eqref{sigma-r-akns} is long but straight forward
by using \eqref{3.19} and \eqref{3.20}, i.e. \eqref{kn-r-akns} for $m=1,2$.
Here we skip presenting the verification details.

In the next we prove \eqref{kn-r-akns} is true for general $m$ by means of mathematical induction.
We assume the following relation holds:
\begin{equation}\label{3.19-m}
K_m(u) |_R = -\langle U, J \mathcal{K}_m(U ) \rangle
\end{equation}
and go to prove
\begin{equation}\label{3.19-m+1}
K_{m+1}(u) |_R = -\langle U, J \mathcal{K}_{m+1}(U ) \rangle.
\end{equation}
Note that \eqref{3.19-m} indicates
\begin{equation}\label{3.19-mf}
(K_m(u) |_R)'(U)[f] = -\langle f, J \mathcal{K}_m(U ) \rangle
-\langle U, J  \mathcal{K}_m^{\,\prime}(U )[f] \rangle, ~~~  f\in \mathcal{V}_2[U],
\end{equation}
where the G$\hat{\mathrm{a}}$teaux derivatives are taken with respect to $U$ in direction $f$.
In addition, denoting $\ti{\gamma}_1=(\ti{\gamma}_{1,1}, \ti{\gamma}_{1,2})^T$,
then \eqref{sigma-r-akns} is written as
\begin{equation}\label{sigma2-QR}
\sigma_2(u)|_R
= -( {Q} \ti{\gamma}_{1,2}
+  {R} \ti{\gamma}_{1,1}).
\end{equation}
With all these in hand, we have
\begin{align*}
K_m^{\,\prime}(u)[\sigma_2]|_R
&=
\Bigl(
\frac{\mathrm{d}}{\mathrm{d}\,\varepsilon}
{K_m( u +\varepsilon \sigma_2) }
\big|_{\varepsilon=0}\Bigr)\Big|_{R}
\\
&=\frac{\mathrm{d}}{\mathrm{d}\,\varepsilon}
{K_m( - Q R +\varepsilon (\sigma_2 |_R)) }\big|_{\varepsilon=0}\\
&= \frac{\mathrm{d}}{\mathrm{d}\,\varepsilon}
{K_m( - Q R -\varepsilon Q \ti{\gamma}_{1,2}
-\varepsilon R  \ti{\gamma}_{1,1})} \big|_{\varepsilon=0}\\
&= \frac{\mathrm{d}}{\mathrm{d}\,\varepsilon}
{K_m \bigl(
- (Q  + \varepsilon \ti{\gamma}_{1,1})
(R  + \varepsilon \ti{\gamma}_{1,2})\bigr)} \big|_{\varepsilon=0}\\
&=(K_m(u) |_R)'(U)[\ti{\gamma}_1].
\end{align*}
Then, using \eqref{3.19-mf}, we immediately obtain
\begin{equation}\label{sdakns-km'sigma}
K_m^{\,\prime}(u)[\sigma_2]|_R= -\langle
\ti{\gamma}_1,
J \mathcal{K}_m(U )
\rangle
-\langle
U,
J  \mathcal{K}_m^{\,\prime}(U) [\ti{\gamma}_1]
\rangle.
\end{equation}
Similarly we have
\begin{equation}\label{sdakns-sigma'km}
\sigma_2^{\,\prime}(u)[K_m]|_R
= -\langle
\mathcal{K}_m,
J \ti{\gamma}_1(U)
\rangle
-\langle
U,
J  \ti{\gamma}_1^{\,\prime}(U)[\mathcal{K}_m]
\rangle.
\end{equation}
Subtracting \eqref{sdakns-sigma'km} from \eqref{sdakns-km'sigma} yields
\begin{align*}
\llbracket
K_m, \sigma_2
\rrbracket|_R
&= K_m^{\,\prime}(u)[\sigma_2]|_R
-  \sigma_2^{\,\prime}(u)[K_m]|_R\\
&= - \langle
U,
J (\llbracket\mathcal{K}_m, \ti{\gamma}_1\rrbracket)
\rangle
-\langle
\ti{\gamma}_1,
J \mathcal{K}_m(U)
\rangle
+ \langle
\mathcal{K}_m,
J \ti{\gamma}_1(U)
\rangle\\
&= - \langle
U,
J (\llbracket\mathcal{K}_m,\ti{\gamma}_1\rrbracket)
\rangle.
\end{align*}
From the definition \eqref{ti-gamma-1} for $\ti{\gamma}_1$, and also in light of \eqref{3.11a},
we know that
\begin{equation}\label{sdakns-km-gamma-1}
\llbracket \mathcal{K}_m, \ti{\gamma}_1\rrbracket
=\llbracket \mathcal{K}_m,  {\gamma}_1\rrbracket
= m(\mathcal{K}_{m+1}+\mathcal{K}_m), ~~ m\geq1.
\end{equation}
It then follows that
\begin{equation}\label{sdakns-sim-km-sigma2}
\llbracket
K_m, \sigma_2
\rrbracket|_R
= - m \langle
U,
J (\mathcal{K}_{m+1} + \mathcal{K}_m)
\rangle.
\end{equation}
Thus, using \eqref{dkp-master-sym}   we have
\begin{align*}
K_{m+1}|_R
&= \frac{1}{m}
\llbracket K_m, \sigma_2 \rrbracket \big|_R
- K_m \big|_R\\
&= - \langle
U,
J (\mathcal{K}_{m+1} + \mathcal{K}_m)
\rangle
+ \langle
U,
J \mathcal{K}_m
\rangle\\
&= -\langle U, J \mathcal{K}_{m+1}(U)\rangle,
\end{align*}
which comploetes the proof for \eqref{3.19-m+1}.

\end{proof}

\begin{lemma}\label{lem-3.2}
For the D$\Delta$KP operators $A_m$ and $B_2$,
flows $\{K_m\}$ and master symmetry $\sigma_2$,
and the sdAKNS flows $\{\mathcal{K}_m\}$ and master symmetry $\gamma_1$,
the following formulae hold:
\begin{subequations}
\begin{align}
&A_m^{\,\prime}(u)[\sigma_2(u)] |_R
= (A_m |_R)' (U)[\ti{\gamma}_1(U)],\label{sdakns-Am'sigma2}\\
&B_2^{\,\prime}(u) [K_m(u)] |_R
= (B_2 |_R )'(U)[\mathcal{K}_m (U)].\label{sdakns-B2'km}
\end{align}
\end{subequations}
\end{lemma}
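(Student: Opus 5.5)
The plan is to reproduce, at the operator level, the chain-rule computation used in the proof of Lemma \ref{lem-sdakns-1} for $K_m'(u)[\sigma_2]|_R$. The key observation is that $A_m$ and $B_2$ depend on $u$ only through $u$, its shifts and its $x$-derivatives, the latter also appearing inside $u_j$ (e.g. $u_1=\Delta^{-1}u_x$). Their Gâteaux derivatives in a direction $g$ are therefore obtained by replacing $u$ with $u+\varepsilon g$ everywhere and differentiating at $\varepsilon=0$. The reduction $|_R$ substitutes $u=-QR$ and then eliminates every $x$-derivative of $Q,R$ by means of \eqref{K1}, which is exactly the $t_1$-flow $\mathcal{K}_1$.

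For \eqref{sdakns-Am'sigma2}, I would first use \eqref{sigma2-QR}, i.e. $\sigma_2|_R=-(Q\ti{\gamma}_{1,2}+R\ti{\gamma}_{1,1})$, which agrees with $-(Q+\varepsilon\ti\gamma_{1,1})(R+\varepsilon\ti\gamma_{1,2})$ to first order in $\varepsilon$. This gives
\[
A_m'(u)[\sigma_2]\big|_R=\frac{\mathrm{d}}{\mathrm{d}\varepsilon}A_m\bigl(-(Q+\varepsilon\ti\gamma_{1,1})(R+\varepsilon\ti\gamma_{1,2})\bigr)\Big|_{\varepsilon=0},
\]
exactly as in the displayed computation in the proof of Lemma \ref{lem-sdakns-1}. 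The subtle point is that $A_m$ contains $x$-derivatives of $u$, for instance through $u_1$, and $|_R$ converts $\partial_x$ of $U$ into $\mathcal{K}_1(U)$ and its shifts. Differentiating $A_m|_R$ with respect to $U$ in direction $\ti\gamma_1$ therefore produces terms $\mathcal{K}_1'(U)[\ti\gamma_1]$. By contrast, the left-hand side differentiates $\partial_x^j(-QR)$ in direction $\sigma_2$, which gives $\partial_x^j\sigma_2$, and then reduces. So I must check that these two operations commute: $\partial_x(\sigma_2|_R)=(\partial_x\sigma_2)|_R$ should match $(\mathcal{K}_1\text{-derivative})[\ti\gamma_1]$. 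Here I would use that on the reduced manifold the $x$-derivative along $\mathcal{K}_1$ of a function $F(U)$ is $F'(U)[\mathcal{K}_1]$. The identity $\frac{\mathrm d}{\mathrm dx}\ti\gamma_1-\mathcal{K}_1'[\ti\gamma_1]=\ti\gamma_1'[\mathcal{K}_1]-\mathcal{K}_1'[\ti\gamma_1]=\llbracket\ti\gamma_1,\mathcal{K}_1\rrbracket+\partial_x\ti\gamma_1$ is then needed, with the explicit $x$-dependence of $\ti\gamma_1$ taken into account. Since $\ti\gamma_1=x\mathcal{K}_2+x\mathcal{K}_1-\frac12\mathcal{K}_1+\gamma_1$ and $\llbracket\mathcal{K}_1,\gamma_1\rrbracket=\mathcal{K}_2+\mathcal{K}_1$, the explicit $x$-derivative $\mathcal{K}_2+\mathcal{K}_1$ cancels the bracket. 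This shows that $\ti\gamma_1$ is a genuine symmetry of the $t_1$-flow, so that $\partial_x$ commutes with the variation in direction $\ti\gamma_1$ on the constrained manifold. This compatibility is precisely why $\ti\gamma_1$, rather than $\gamma_1$, appears.

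For \eqref{sdakns-B2'km} the argument is identical, using $K_m|_R=-\langle U,J\mathcal{K}_m\rangle=-(Q\mathcal{K}_{m,2}+R\mathcal{K}_{m,1})$ from Lemma \ref{lem-sdakns-1}. The required commutation of $\partial_x$ with variation along $\mathcal{K}_m$ follows from $\llbracket\mathcal{K}_1,\mathcal{K}_m\rrbracket=0$ (Proposition \ref{prop-4-1}). Here $B_2$ carries explicit $x,n$ dependence, but this is not varied, so it is harmless.

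The main obstacle is making the commutation argument rigorous, that is, proving by induction on $j$ that $(\partial_x^j u)'[\sigma_2]|_R=(\partial_x^j u|_R)'(U)[\ti\gamma_1]$, including the nonlocal terms $\Delta^{-1}$ acting on $x$-derivatives inside $u_j$. I would handle this by establishing the base case $j=0$ from \eqref{sigma2-QR}. The inductive step then rests on the symmetry property of $\ti\gamma_1$ and $\mathcal{K}_m$ with respect to $\mathcal{K}_1$, together with the linearity of $\Delta^{-1}$ and $E$, which commute with Gâteaux differentiation.
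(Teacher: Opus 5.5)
Your proposal is correct and follows the paper's proof, which is the same $\varepsilon$-substitution chain built on $\sigma_2|_R=-\langle U,J\ti{\gamma}_1\rangle$ and $K_m|_R=-\langle U,J\mathcal{K}_m\rangle$ from Lemma \ref{lem-sdakns-1}. The paper does not address the $x$-derivatives of $u$ hidden in $A_m$ and $B_2$. Your extra check supplies the right justification for that step: the explicit derivative $\partial_x\ti{\gamma}_1=\mathcal{K}_2+\mathcal{K}_1$ equals $\llbracket\mathcal{K}_1,\ti{\gamma}_1\rrbracket$, and $\llbracket\mathcal{K}_1,\mathcal{K}_m\rrbracket=0$, so the total $x$-derivative along $U_x=\mathcal{K}_1$ commutes with both variations (only the middle expression in your displayed identity is missing the term $\partial_x\ti{\gamma}_1$).
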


\begin{proof}
First, we have
\begin{align*}
 A_m^{\,\prime}(u)[\sigma_2] |_R
&= \frac{\mathrm{d}}{\mathrm{d}\,\varepsilon}
{A_m( u +\varepsilon \sigma_2) }\big|_{u=-Q R,\,\varepsilon=0}\\
&= \frac{\mathrm{d}}{\mathrm{d}\,\varepsilon}
{A_m( - Q R  -\varepsilon Q  \ti{\gamma}_{1,2}
-\varepsilon R  \ti{\gamma}_{1,1})}\big|_{\varepsilon=0},
\end{align*}
where we have used \eqref{sigma-r-akns} in Lemma \ref{lem-sdakns-1}.
From the above expression, we then have
\begin{align*}
A_m^{\,\prime}(u)[\sigma_2] |_R
&= \frac{\mathrm{d}}{\mathrm{d}\,\varepsilon}
{ A_m (
- (Q + \varepsilon \ti{\gamma}_{1,1})
(R + \varepsilon \ti{\gamma}_{1,2}))} \big|_{\varepsilon=0}\\
&=
\frac{\mathrm{d}}{\mathrm{d}\,\varepsilon}
{ (A_m |_R) (U  + \varepsilon \ti{\gamma}_1) } \big|_{\varepsilon=0}\\
&= (A_m |_R)'(U)[\ti{\gamma}_1],
\end{align*}
which yields \eqref{sdakns-Am'sigma2}.
In a similar way, using \eqref{kn-r-akns} and writing $\mathcal{K}_m=(\mathcal{K}_{m,1},\mathcal{K}_{m,2})^T$,
we have
\begin{align*}
 B_2^{\,\prime}(u) [K_m] |_R
&= \frac{\mathrm{d}}{\mathrm{d}\,\varepsilon}
{B_2( u +\varepsilon K_m) }\big|_{u=-QR,\,\varepsilon=0}\\
&=
\frac{\mathrm{d}}{\mathrm{d}\,\varepsilon}
{B_2( - Q R
-\varepsilon Q  \mathcal{K}_{m,2}
-\varepsilon R  \mathcal{K}_{m,1}
)}
\big|_{\varepsilon=0}\\
&= \frac{\mathrm{d}}{\mathrm{d}\,\varepsilon}
{ B_2 (
- (Q + \varepsilon \mathcal{K}_{m,1})
(R  + \varepsilon \mathcal{K}_{m,2}))} \big|_{\varepsilon=0}\\
&=
\frac{\mathrm{d}}{\mathrm{d}\,\varepsilon}
{ (B_2 |_R) (U  + \varepsilon \mathcal{K}_m) } \big|_{\varepsilon=0}\\
&= (B_2 |_R)'(U )[\mathcal{K}_m],
\end{align*}
which yields \eqref{sdakns-B2'km}.

\end{proof}

\begin{lemma}\label{lem-3.3}
Define flows $\{\zeta_m(U)\}$ by
\begin{subequations}
\begin{equation}
\zeta_1 (U)
=\left(\begin{array}{c}
-Q^{(n+1)} + (Q^{(n)})^2 R^{(n)} + Q^{(n)} \Delta^{-1} Q^{(n)} R^{(n)}\\
-R^{(n-1)} + R^{(n)} - R^{(n)} \Delta^{-1} Q^{(n)} R^{(n)}
\end{array}\right),
\end{equation}
and
\begin{equation}
\zeta_{m+1} (U)
=  \zeta_m^{\,\prime}(U) [\ti{\gamma}_1(U)]
- \left(\begin{array}{cc}
B_2 & 0\\
0 & -B_2^*
\end{array}\right)\Bigg|_R
\zeta_m(U),~~m  {\ge} 1.
\label{3.21b}
\end{equation}
\end{subequations}
where $B_2$ is the operator given in \eqref{B_2} and $\ti{\gamma}_1(U)$ is defined in \eqref{ti-gamma-1}.
Then, for the operator $\mathcal{L}$ defined in \eqref{def-mathcal-L} and operator $A_s$, we have
\begin{subequations}\label{L,An-zeta}
\begin{align}
&\left(\begin{array}{cc}
\mathcal{L} & 0\\
0 & -\mathcal{L}^*
\end{array}\right)\Bigg|_R
\zeta_m
= 0,\label{L-zeta}\\
&\left(\begin{array}{cc}
A_s & 0\\
0 & -A_s^*
\end{array}\right)\Bigg|_R \zeta_m
= \zeta_m^{\,\prime}(U)[\mathcal{K}_s].\label{An-zeta}
\end{align}
\end{subequations}
\end{lemma}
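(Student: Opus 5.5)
Write $\mathbf{D}_X=\mathrm{diag}(X,-X^*)|_R$ for an operator $X$. The proof will be by induction on $m$, proving \eqref{L-zeta} and \eqref{An-zeta} together for all $s\ge1$ at each step.

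\textbf{Base case $m=1$.} Here $\zeta_1$ is explicit, so I would check both identities directly.

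For \eqref{L-zeta}, use $\mathcal{L}=\Delta+u-\partial_x$, so that $\mathcal{L}^*=\Delta^*+u+\partial_x$. Substitute $u=-QR$ and eliminate $x$-derivatives by \eqref{K1}; both components should then cancel identically. The $x$-derivative of $\Delta^{-1}(QR)$ is handled by $(QR)_x=-K_1|_R$, using Lemma~\ref{lem-sdakns-1}, which telescopes under $\Delta^{-1}$.

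For \eqref{An-zeta} at $m=1$, I would not check each $s$ separately. Instead I would prove it for all $s$ by a second induction on $s$, with the master symmetry relation \eqref{A_{m+1}-res} as the engine.

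\textbf{Inductive step in $m$.} Assume both identities hold for $\zeta_m$ and all $s$, and differentiate them.

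\emph{Derivative of \eqref{L-zeta}.} Take the Gâteaux derivative in direction $\ti{\gamma}_1$. Since $\mathcal{L}'=I$ and $\sigma_2|_R=-\langle U,J\ti\gamma_1\rangle$, we get $(\mathbf{D}_{\mathcal L})'[\ti\gamma_1]=\mathbf{D}_{\sigma_2}$, viewing $\sigma_2$ as a multiplication operator. This gives
\[
\mathbf{D}_{\mathcal{L}}\,\zeta_m'[\ti\gamma_1]=-\mathbf{D}_{\sigma_2}\,\zeta_m .
\]

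\emph{Commutator with $B_2$.} From \eqref{3.24b}, $[B_2,\mathcal{L}]=\sigma_2$, and taking adjoints, $[-B_2^*,-\mathcal{L}^*]=-\sigma_2$. Hence $[\mathbf{D}_{\mathcal L},\mathbf{D}_{B_2}]=-\mathbf{D}_{\sigma_2}$. Combining this with the induction hypothesis $\mathbf{D}_{\mathcal L}\zeta_m=0$ gives
\[
\mathbf{D}_{\mathcal L}\,\zeta_{m+1}=-\mathbf{D}_{\sigma_2}\zeta_m-\mathbf{D}_{B_2}\mathbf{D}_{\mathcal L}\zeta_m+\mathbf{D}_{\sigma_2}\zeta_m=0 .
\]

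\emph{Derivative of \eqref{An-zeta}.} Differentiate \eqref{An-zeta} in direction $\ti\gamma_1$, using Lemma~\ref{lem-3.2} to write $(\mathbf{D}_{A_s})'[\ti\gamma_1]=\mathbf{D}_{A_s'[\sigma_2]}$. Then compare with
\[
\zeta_{m+1}'[\mathcal K_s]=\zeta_m''[\ti\gamma_1,\mathcal K_s]+\zeta_m'[\ti\gamma_1'[\mathcal K_s]]-\mathbf{D}_{B_2'[K_s]}\zeta_m-\mathbf{D}_{B_2}\zeta_m'[\mathcal K_s].
\]
Here $\mathbf{D}_{B_2'[K_s]}=(\mathbf{D}_{B_2})'[\mathcal K_s]$, again by Lemma~\ref{lem-3.2}.

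Subtracting the two expressions, the second-derivative terms cancel by symmetry. What remains is a combination of $\zeta_m'[\llbracket\mathcal K_s,\ti\gamma_1\rrbracket]$, which by \eqref{sdakns-km-gamma-1} equals $s\,\zeta_m'[\mathcal K_{s+1}+\mathcal K_s]$, and the operator
\[
\mathbf{D}_{A_s'[\sigma_2]-B_2'[K_s]+[A_s,B_2]}=s\,\mathbf{D}_{A_{s+1}+A_s},
\]
which follows from Proposition~\ref{prop-3-3}; the adjoint component is consistent because $[-A_s^*,-B_2^*]=-[A_s,B_2]^*$.

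By the induction hypothesis, $\mathbf{D}_{A_{s+1}}\zeta_m=\zeta_m'[\mathcal K_{s+1}]$ and $\mathbf{D}_{A_s}\zeta_m=\zeta_m'[\mathcal K_s]$, and these terms cancel. What survives is \eqref{An-zeta} for $\zeta_{m+1}$.

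\textbf{Base case in $s$ and the main obstacle.} The same commutator bookkeeping, with \eqref{A_{m+1}-res} used for $s\to s+1$, reduces the $m=1$ case to checking $s=1$. For $s=1$ we have $A_1=\mathcal L+\partial_x$, so by \eqref{L-zeta} the identity amounts to $\partial_x\zeta_1=\zeta_1'[\mathcal K_1]$. This holds because $\mathcal K_1$ is the $x$-flow by \eqref{K1}. The $s$-recursion itself needs $\zeta_1'$ compatible with $\ti\gamma_1$, which is the delicate part.

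I expect the main obstacle to be this base case together with the sign and adjoint bookkeeping: confirming $(-B_2^*)'=-(B_2')^*$ and that the adjoint acts consistently on the second component, including the nonlocal terms $\Delta^{-1}$ and $x$, $n$ in $B_2$. If the $s$-induction at $m=1$ proves awkward, I would instead restructure the argument as an induction on $s$ for each fixed $m$.
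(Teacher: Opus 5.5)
\textbf{There is a genuine gap in how you organize the induction.} Your argument for \eqref{L-zeta} is correct; it is the paper's argument written in your $\mathbf{D}_X$ notation. The algebra of your $m$-step for \eqref{An-zeta} is also right. The problem is the base of your outer induction. It needs \eqref{An-zeta} for $\zeta_1$ and \emph{every} $s$, and your claim that an induction on $s$ at fixed $m=1$ reduces this to $s=1$ is false.

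Put $\mathcal{E}_{s,m}=\mathbf{D}_{A_s}\zeta_m-\zeta_m'[\mathcal{K}_s]$. The bookkeeping you already use gives the following identity with no induction hypothesis at all. The ingredients are the definition \eqref{3.21b} in the form $\zeta_m'[\ti{\gamma}_1]=\zeta_{m+1}+\mathbf{D}_{B_2}\zeta_m$, Lemma~\ref{lem-3.2}, \eqref{A_{m+1}-res} and \eqref{sdakns-km-gamma-1}:
\begin{equation*}
\mathcal{E}_{s,m+1}+s\,\mathcal{E}_{s+1,m}=\mathcal{E}_{s,m}'[\ti{\gamma}_1]-\mathbf{D}_{B_2}\mathcal{E}_{s,m}-s\,\mathcal{E}_{s,m}.
\end{equation*}
So knowing $\mathcal{E}_{s,m}\equiv0$ only yields $\mathcal{E}_{s,m+1}=-s\,\mathcal{E}_{s+1,m}$. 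At $m=1$ the step $s\to s+1$ therefore requires $\mathcal{E}_{s,2}=0$, which is \eqref{An-zeta} for $\zeta_2$. Your $m$-induction only produces that after the whole row $m=1$ is known. This is exactly the ``compatibility of $\zeta_1'$ with $\ti{\gamma}_1$'' you flag, and it cannot be settled at a fixed level $m$. Your fallback of inducting on $s$ ``for each fixed $m$'' has the same defect.

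\textbf{How the paper closes it.} The identity propagates a full column, not a full row. If $\mathcal{E}_{1,m}=0$ for all $m$, then induction on $s$, with the hypothesis holding for all $m$ simultaneously, gives $\mathcal{E}_{s+1,m}=-\frac{1}{s}\mathcal{E}_{s,m+1}=0$. This is the paper's ordering:
\begin{itemize}
\item it proves \eqref{L-zeta} for all $m$ first (your proof of it does not use \eqref{An-zeta}, so you have this too);
\item it obtains the case $s=1$ for every $m$ from $A_1=\mathcal{L}+\partial_x$ and $U_x=\mathcal{K}_1$;
\item it then inducts on $s$, using \eqref{As} at both levels $m$ and $m+1$.
\end{itemize}
Your computations carry over unchanged.

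\textbf{One caution if you adopt this ordering.} The passage from \eqref{L-zeta} to $\mathcal{E}_{1,m}=0$ uses $(\zeta_m)_x=\zeta_m'[\mathcal{K}_1]$, which assumes $\zeta_m$ has no explicit $x$-dependence. By itself, \eqref{L-zeta} only gives $\mathcal{E}_{1,m}=\partial_x^{\mathrm{expl}}\zeta_m$. Since $\ti{\gamma}_1$ and $B_2$ contain $x$, for $x$-free $\zeta_m$ the coefficient of $x$ in $\zeta_{m+1}$ is $-(\mathcal{E}_{1,m}+\mathcal{E}_{2,m})$. So this $x$-independence is tied to the very identities being proved. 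The paper uses it without comment, and a complete argument needs an independent justification of it, or equivalently of the column $s=1$.
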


\begin{proof}
Denote $\zeta_m=(\zeta_{m,1}, \zeta_{m,2})^T$.
Let us  prove the first component of \eqref{L-zeta},
for which we need to check
\begin{equation*}
\mathcal{L}|_R \,\zeta_{1,1}
=(A_1 - \partial_x)|_R \,\zeta_{1,1}
=(A_1 - \partial_x)|_R
\Big(-Q^{({n+1})} + (Q^{(n)})^2 R^{(n)} + Q^{(n)} \Delta^{-1} Q^{(n)} R^{(n)}\Big).
\end{equation*}
It vanishes after inserting \eqref{K1}. The verification is straight forward.
Then we assume the first component of \eqref{L-zeta} is satisfied for $\zeta_{m,1}$, i.e.
\begin{equation}\label{m}
\mathcal{L}|_R \, \zeta_{m,1}=0.
\end{equation}
For $\zeta_{m+1,1}$ we have
\begin{align*}
\mathcal{L}|_R \,\zeta_{m+1,1}
&= \mathcal{L}|_R \big(  \zeta_{m,1}^{\,\prime}(U)[\ti{\gamma}_1]
- B_2|_R \,\zeta_{m,1} \big)\\
&= \mathcal{L}|_R \zeta_{m,1}^{\,\prime}(U)[\ti{\gamma}_1]
- \mathcal{L}|_R B_2|_R \, \zeta_{m,1}
+ B_2|_R \mathcal{L}|_R \, \zeta_{m,1}\\
&= \mathcal{L}|_R \zeta_{m,1}^{\,\prime}(U) [\ti{\gamma}_1]
+ [B_2, \mathcal{L}]|_R \, \zeta_{m,1},
\end{align*}
where we have used \eqref{m}.
Meanwhile, we have
\begin{align*}
(\mathcal{L}|_R)'(U)[\ti{\gamma}_1]
&=
 (\Delta - Q R -\partial_x)'(U)[\ti{\gamma_1}]
= - (Q R)'[
\ti{\gamma_1}
]
= - R  \ti{\gamma}_{1,1}
- Q  \ti{\gamma}_{1,2}
= - \langle  {U}, J\ti{\gamma}_1 \rangle\\
&= \sigma_2 |_R
= [B_2, \mathcal{L}] |_R,
\end{align*}
where we have successively used \eqref{sigma-r-akns} and \eqref{3.24b}.
It then follows that
\begin{align*}
\mathcal{L}|_R \,\zeta_{m+1,1}
= \mathcal{L}|_R
 \zeta_{m,1}^{\,\prime}(U) [\ti{\gamma}_1]
+ (\mathcal{L}|_R)'(U)[\ti{\gamma}_1] \zeta_{m,1}
= (\mathcal{L}|_R \, \zeta_{m,1})'(U)[\ti{\gamma}_1]
= 0,
\end{align*}
where, again, use has been made of \eqref{m}.
Thus, we have proved the first component of  \eqref{L-zeta}. The second component can be proved in a similar way.
	
In the next, we are going to prove \eqref{An-zeta}.
First, for arbitrary $m \geq 1$, from \eqref{L-zeta} we have
\begin{align*}
0 & = \mathcal{L}|_{R}\, \zeta_{m,1}
  =  A_1 |_R \, \zeta_{m,1}- (\zeta_{m,1})_x
  =  A_1 |_R \, \zeta_{m,1}-  \zeta_{m,1}^{\, \prime}(U)[U_x]
  =  A_1 |_R \, \zeta_{m,1}-  \zeta_{m,1}^{\, \prime}(U)[\mathcal{K}_1],
\end{align*}
where we have used the fact $U_x=U_{t_1}=\mathcal{K}_1$.
Thus we have
\begin{equation}
A_1 |_R \, \zeta_{m,1}=\zeta_{m,1}^{\, \prime}(U)[\mathcal{K}_1],
~~m \geq 1.
\end{equation}
We prove the general case  by induction.
Let us assume the following is valid for arbitrary $m$:
\begin{equation}\label{As}
A_s |_R \, \zeta_{m,1}=\zeta_{m,1}^{\, \prime}(U)[\mathcal{K}_s],
~~m \geq 1.
\end{equation}
Then we look for the same relation  for $A_{s+1}$ and $\mathcal{K}_{s+1}$.
To achieve that, we start from the above relation \eqref{As} to calculate
\begin{equation*}
\begin{aligned}
(\zeta_{m,1}^{\, \prime}(U)[\mathcal{K}_s])'(U)[\ti{\gamma}_1]
&= (A_s|_R \,\zeta_{m,1})'(U)[\ti{\gamma}_1]\\
&= (A_s|_R)'(U)[\ti{\gamma}_1] \zeta_{m,1}
+ A_s|_R\, \zeta_{m,1}^{\, \prime}(U)[\ti{\gamma}_1]\\
&= A_s^{\,\prime}(u)[\sigma_2]|_R \, \zeta_{m,1} + A_s|_R\, \zeta_{m,1}^{\, \prime}(U)[\ti{\gamma}_1],
\end{aligned}
\end{equation*}
where relation  \eqref{sdakns-Am'sigma2} has been used.
Then, from the definition \eqref{3.21b} we have
\begin{align}
(\zeta_{m,1}^{\, \prime}(U)[\mathcal{K}_s])'(U)[\ti{\gamma}_1]
&= A_s^{\,\prime}(u)[\sigma_2]|_R \, \zeta_{m,1}
+ A_s |_R B_2 |_R \, \zeta_{m,1}
- A_s |_R B_2 |_R \,  \zeta_{m,1}
+ A_s|_R \,\zeta_{m,1}^{\,\prime}(U)[\ti{\gamma}_1] \nonumber\\
&= (A_s^{\,\prime}(u)[\sigma_2] + A_s B_2)|_R \, \zeta_{m,1}
+ A_s |_R \, \zeta_{m+1,1}.
\label{3.26}
\end{align}
Similarly, we get
\begin{equation}
(\zeta_{m,1}^{\,\prime}(U) [\ti{\gamma}_1])'(U)[\mathcal{K}_s]
= (B_2^{\,\prime}(u)[K_s] +  B_2 A_s)|_R \, \zeta_{m,1}
+  \zeta_{m+1,1}^{\,\prime}(U)[\mathcal{K}_s].
\label{3.27}
\end{equation}
Note that from the assumption \eqref{As} we have
\begin{equation*}
A_s |_R \, \zeta_{m+1,1}
=\zeta_{m+1,1}^{\,\prime}(U)[\mathcal{K}_s].
\end{equation*}
Meanwhile, noticing the relation
\[(f'[g])'[h]-(f'[h])'[g]=f'[\llbracket g, h \rrbracket],~~ f,g,h\in \mathcal{V}_2[U] \]
and the relation (in light of \eqref{3.11a} and \eqref{ti-gamma-1}):
\begin{align}\label{K-ga}
\mathcal{K}_{s+1} = \frac{1}{s} \llbracket \mathcal{K}_s, \gamma_1 \rrbracket - \mathcal{K}_{s}
= \frac{1}{s} \llbracket \mathcal{K}_s, \ti\gamma_1 \rrbracket - \mathcal{K}_{s},
\end{align}
from \eqref{3.26} and \eqref{3.27} we have
\begin{align*}
\zeta_{m,1}^{\, \prime}(U) [\mathcal{K}_{s+1}]
&= \zeta_{m,1}^{\, \prime}(U) [\frac{1}{s} \llbracket \mathcal{K}_s, \ti\gamma_1 \rrbracket - \mathcal{K}_{s}]\\
&=\frac{1}{s}\big(
(\zeta_{m,1}^{\, \prime}(U)[\mathcal{K}_s])'(U)[\ti{\gamma}_1]
-(\zeta_{m,1}^{\, \prime}(U) [\ti{\gamma}_1])'(U)[\mathcal{K}_s]
\big)
-\zeta_{m,1}^{\, \prime}(U)[\mathcal{K}_{s}]\\
&= \frac{1}{s}\big (
A_s^{\, \prime}(u)[\sigma_2] - B_2^{\, \prime}(u)[K_s] +  [A_s, B_2]\big ) \big|_R \, \zeta_{m,1}- A_s |_R \, \zeta_{m,1}\\
&= A_{s+1} |_R \, \zeta_{m,1},
\end{align*}
where we have made use of \eqref{As} and \eqref{A_{m+1}-res}.
Thus, the first component of \eqref{An-zeta} is proved.
The second component can be proved in a similar manner.

\end{proof}

\begin{lemma}\label{thm-dkp-sdakns}
Under the constraint \eqref{u=-qr}, we have the isospectral sdAKNS flows
\begin{equation}
\left(\begin{array}{l}
A_m (u) |_R \, \phi\\
A_m^* (u) |_R \, \phi^*
\end{array}\right)
=\mathcal{K}_m (U), ~~~ m=1,2,\cdots. \label{sdakns-Am-r}
\end{equation}
In addition, we have
\begin{equation}
\left(\begin{array}{l}
B_2 (u) |_R \, \phi\\
B_2^* (u) |_R \, \phi^*
\end{array}\right)
=\ti{\gamma}_1 (U) + \zeta_1 (U).\label{sdakns-b2-r}
\end{equation}
\end{lemma}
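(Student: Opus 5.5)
We will prove \eqref{sdakns-Am-r} by induction on $m$, using the master-symmetry recursion \eqref{A_{m+1}-res} for $\{A_m\}$ on the D$\Delta$KP side and \eqref{3.11a} (equivalently \eqref{K-ga}) on the sdAKNS side. Throughout, the second component is understood with the sign convention of Lemma \ref{lem-3.3}, i.e. $\phi^*=-R$ together with $-A_m^*$. Write $\Phi=(\phi,\phi^*)^T$, which is $U$ up to the sign of the second entry, and $\mathcal{A}_m=\mathrm{diag}(A_m,-A_m^*)|_R$. The base case $m=1$ is exactly the statement of Theorem \ref{Th-3-1}: \eqref{dkp-phi_x} and \eqref{dkp-adj-phi_x} reduce under \eqref{u=-qr} to \eqref{K1}, i.e. $\mathcal{A}_1 U=\mathcal{K}_1$. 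Before the induction we verify \eqref{sdakns-b2-r} directly. Insert \eqref{B_2}, namely $B_2=xA_2+(x+n)A_1+\Delta^{-1}u$, with $u=-QR$, $u_1=\Delta^{-1}u_x$, and eliminate $x$-derivatives by \eqref{K1}. The $xA_2$ and $(x+n)A_1$ parts should reproduce $x\mathcal{K}_2+x\mathcal{K}_1$ plus an $n$-weighted piece. The remainder is then matched against $\gamma_1-\tfrac12\mathcal{K}_1+\zeta_1$. This is a finite computation; $\zeta_1$ appears precisely to absorb the terms not in $\ti\gamma_1$, such as $Q\Delta^{-1}(QR)$ and $-Q^{(n+1)}$.

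For the inductive step, assume $\mathcal{A}_m U=\mathcal{K}_m$ and differentiate this identity along $\ti\gamma_1$. By \eqref{sdakns-Am'sigma2} this gives
\[
(A_m'[\sigma_2])|_R\,U+\mathcal{A}_m\ti\gamma_1=\mathcal{K}_m'[\ti\gamma_1].
\]
Next take \eqref{sdakns-b2-r}, written as $\mathcal{B}_2U=\ti\gamma_1+\zeta_1$ with $\mathcal{B}_2=\mathrm{diag}(B_2,-B_2^*)|_R$, and differentiate it along $\mathcal{K}_m$. By \eqref{sdakns-B2'km} this gives
\[
(B_2'[K_m])|_R\,U+\mathcal{B}_2\mathcal{K}_m=\ti\gamma_1'[\mathcal{K}_m]+\zeta_1'[\mathcal{K}_m].
\]
Subtracting the second relation from the first yields
\[
\big(A_m'[\sigma_2]-B_2'[K_m]\big)|_R\,U+\mathcal{A}_m\ti\gamma_1-\mathcal{B}_2\mathcal{K}_m=\llbracket\mathcal{K}_m,\ti\gamma_1\rrbracket-\zeta_1'[\mathcal{K}_m].
\]
We now substitute $\ti\gamma_1=\mathcal{B}_2U-\zeta_1$ and $\mathcal{K}_m=\mathcal{A}_mU$ into the left side. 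The left side becomes $\big(A_m'[\sigma_2]-B_2'[K_m]+[A_m,B_2]\big)|_R\,U-\mathcal{A}_m\zeta_1$, where for the lower block one checks that the adjoints combine into $-[A_m,B_2]^*$, consistent with the sign convention. On the right side we use \eqref{sdakns-km-gamma-1}, $\llbracket\mathcal{K}_m,\ti\gamma_1\rrbracket=m(\mathcal{K}_{m+1}+\mathcal{K}_m)$. Crucially, Lemma \ref{lem-3.3}, eq. \eqref{An-zeta}, gives $\mathcal{A}_m\zeta_1=\zeta_1'[\mathcal{K}_m]$, so the $\zeta_1$ terms cancel on both sides. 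Dividing by $m$ and invoking \eqref{A_{m+1}-res}, we obtain $\mathcal{A}_{m+1}U+\mathcal{A}_mU=\mathcal{K}_{m+1}+\mathcal{K}_m$, and hence $\mathcal{A}_{m+1}U=\mathcal{K}_{m+1}$ by the inductive hypothesis.

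The main obstacle is the bookkeeping in the adjoint (second) component. There one needs $(A_m^*)'[\sigma_2]=(A_m'[\sigma_2])^*$ and that restriction $|_R$ commutes with adjoints. We also need the transposed analogues of \eqref{sdakns-Am'sigma2} and \eqref{sdakns-B2'km}, which follow because $A\mapsto A^*$ is linear and independent of $u$. The other genuinely computational point is the direct verification of \eqref{sdakns-b2-r}, in particular tracking the $\Delta^{-1}$ terms coming from $u_1$ and $\Delta^{-1}u$; I would do this component by component using \eqref{K1}.
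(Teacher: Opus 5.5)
Your proposal is correct and follows essentially the same route as the paper: verify \eqref{sdakns-b2-r} by direct computation, then run the induction on $m$. The inductive step uses \eqref{A_{m+1}-res}, Lemma \ref{lem-3.2}, the relation $\llbracket\mathcal{K}_m,\ti{\gamma}_1\rrbracket=m(\mathcal{K}_{m+1}+\mathcal{K}_m)$, and the cancellation of the $\zeta_1$ terms through \eqref{An-zeta} of Lemma \ref{lem-3.3}.

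The differences are only presentational: you differentiate the two identities and subtract rather than expanding $A_{m+1}|_R\,Q$ forward, and you treat both components at once in block form. Note also that your direct check of \eqref{sdakns-b2-r} implicitly needs $A_2|_R\,Q=\mathcal{K}_{2,1}$. That case must be verified by hand rather than taken from the induction, which is why the paper checks $m=1,2$ directly.
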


\begin{proof}
Direct calculation shows that \eqref{sdakns-b2-r} is true.
In the following we prove \eqref{sdakns-Am-r}.
Again, by direct verifications we can see that the relation
\begin{equation}\label{3.30}
A_m (u) |_R \, Q
= \mathcal{K}_{m,1} (U)
\end{equation}
is true for $m=1,2$.
We then assume \eqref{3.30} is correct for $A_m$ and $\mathcal{K}_{m,1}$.
For the $A_{m+1}$ case,
from \eqref{A_{m+1}-res} and using Lemma \ref{lem-3.2},
we have
\begin{equation*}
\begin{aligned}
A_{m+1}(u)|_R \, Q
=& \frac{1}{m}
\Big(
A_m^{\, \prime} (u)[\sigma_2 (u)] - B_2^{\, \prime} (u) [K_m (u)] + [A_m(u), B_2 (u)]
\Big) \Big|_R Q
-A_m (u)|_R \, Q \\
=& \frac{1}{m}
\Big(
 ( A_m |_R )'(U)[\ti{\gamma}_1] Q
-  ( B_2|_R )'(U)[\mathcal{K}_m] Q\\
&~~~~~~ + A_m(u) \big|_R B_2(u) \big|_R Q
- B_2(u) \big|_{R } A_m(u) \big|_R  Q
\Big)
-A_m (u)|_R \, Q.
\end{aligned}
\end{equation*}
Making use of \eqref{sdakns-b2-r}  and \eqref{3.30}, we have
\begin{equation*}
\begin{aligned}
A_{m+1}(u)|_R \, Q
=& \frac{1}{m}
\Big(
 ( A_m |_R)'(U)[\ti{\gamma}_1] Q
-  ( B_2|_R)'(U)[\mathcal{K}_m] Q\\
&~~~~~~ + A_m(u) \big|_R\, \ti{\gamma}_{1,1}+ A_m(u)|_R \,\zeta_{1,1}
- B_2(u) \big|_{R } \mathcal{K}_{m,1}
\Big)
-\mathcal{K}_{m,1}\\
=& \frac{1}{m}
\Big(
  (A_m|_R  Q  )'(U)[\ti{\gamma}_1]
- ( B_2\big|_R Q  )'(U)[\mathcal{K}_m]
+ A_m(u)|_R \,\zeta_{1,1}
\Big)
-\mathcal{K}_{m,1}.
\end{aligned}
\end{equation*}
Again, making use of \eqref{sdakns-b2-r} together with \eqref{3.30}, we  obtain
\begin{align*}
A_{m+1}(u)|_R \, Q
=& \frac{1}{m}
\Big(
\mathcal{K}_{m,1}^{\,\prime}(U)[\ti{\gamma}_1]
-\ti{\gamma}_{1,1}^{\,\prime} (U)[\mathcal{K}_m]
- \zeta_{1,1}^{\,\prime} (U)[\mathcal{K}_m]
+ A_m(u)|_R \,\zeta_{1,1}
\Big)
-\mathcal{K}_{m,1}\\
=& \mathcal{K}_{m+1,1}(U),
\end{align*}
where in the last step we have made use of \eqref{K-ga}  and \eqref{As}.

Now we have proved the first component of \eqref{sdakns-Am-r}.
Taking a similar procedure we can prove the second component.
Thus, we complete the proof of this lemma as well as the proof of Theorem \ref{Th-3-1}.

\end{proof}

\section{The sdBurgers hierarchies and Lie algebraic structure}\label{sec-4}

Similar to the continuous case \cite{CL-JPA-1992},
if we impose another constraint on $u$ (see equation \eqref{u=delta phi}),
we may obtain the sdBurgers hierarchy from \eqref{dkp-iso-lax-t}.
This fact can be proved by using the recursive structure \eqref{A_{m+1}-res}
together with the master symmetry and recursive structure of the sdBurgers hierarchy.
Thus, in this section, we consider the sdBurgers hierarchy
to formulate its recursive structure via Lie algebra and master symmetry.

\subsection{Isospectral and nonisopectral sdBurgers hierarchies}\label{sec-4-1}

The sdBurgers hierarchy can be obtained as a reduction of either the  relativistic Toda
hierarchy or the semi-discrete CLL hierarchy,
both of which are derived from the squared eigenfunction symmetry constraints
of the D$\Delta$mKP system \cite{CZZ-SAMP-2021,LZZ-SAMP-2024}.

To derive the sdBuegers hierarchy, we start from the following spectral problem \cite{CZZ-SAMP-2021,Z-PDEAM-2022},
\begin{subequations}\label{burgers-lax}
\begin{align}
&\Phi^{(n+1)} = M \Phi^{(n)},~~
\Phi^{(n)} =\left(\begin{array}{c}
\phi_{1}^{(n)}\\ \phi_{2}^{(n)}
\end{array}
\right),~~
M =\left(\begin{array}{cc}
\eta^2 + z^{(n)} & -\eta \\
\eta\, z^{(n+1)} & 0
\end{array}
\right),
\end{align}
together with the time part
\begin{align}
&\Phi^{(n)}_{t} = N \Phi^{(n)},~~
N =\left(\begin{array}{cc}
A  & B  \\
C  & D
\end{array}
\right),
\end{align}
\end{subequations}
where $\eta$ is the spectral parameter, $z \doteq z^{(n)}$ is a function of $(n,t)\in \mathbb{Z}\times \mathbb{R}$,
$A, B, C, D$ are undetermined functions of $n, t, z$ and $\eta$.
Note that in this section we use notation $z^{(n+j)}$ more than $E^j z$
for the sake of explicit expression.
The potential function $z$ satisfies
\begin{equation}\label{z-asym}
z \to 1,~~~ (n\to \pm \infty).
\end{equation}
The compatibility of the above system yields
\begin{equation}
M_t = (EN) M - M N,
\end{equation}
which gives rise to
\begin{subequations}\label{acd-b}
\begin{align}
&A^{(n)} = -\frac{1}{\eta} z^{(n)} B^{(n)}
+ 2 \Delta^{-1} \frac{\eta_t}{\eta}
+ a_0,\\
&C^{(n)} = -z^{(n)} B^{(n-1)},\\
&D^{(n)} = A^{(n+1)}
+ \eta B^{(n)}
+ \frac{1}{\eta} z^{(n)} B^{(n)}
- \frac{\eta_t}{\eta},
\end{align}
\end{subequations}
and
\begin{equation}\label{z_n,t}
z_{t} = \eta z E^{-1} \Delta B
-\frac{1}{\eta} z  \Delta z  B
+ 2 \frac{\eta_t}{\eta} z,
\end{equation}
where $a_0$ is an integration constant.
Inserting the expansion
\begin{equation}\label{expand-b}
B= \sum_{j=1}^{m} b_j \eta^{2(m-j)+1}
\end{equation}
into \eqref{z_n,t} and denoting operators
\begin{equation}\label{def-L}
L_1 = z  E^{-1} \Delta,~~
L_2 = z  \Delta z,~~
\bar{L} = L_2 {L_1}^{-1}
= z  \Delta z  \Delta^{-1} E \frac{1}{z},
\end{equation}
one can derive the isospectral sdBurgers equations by setting $\eta_{t_m} =0$:
\begin{align}
z_{t_m}=\bar{K}_{m}
= \bar{L}^{m-1} z  \Delta z ,~~m\ge 1,
\end{align}
where the first three read
\begin{subequations}
\begin{align}
&z_{t_1}=\bar{K}_1 =z \Delta z, \label{bur-k1}\\
&z_{t_2}=\bar{K}_2
=z^{(n)} \Delta (z^{(n+1)} z^{(n)}) - \bar{K}_1 ,\\
&z_{t_3}=\bar{K}_3
=z^{(n)}\Delta (z^{(n+2)} z^{(n+1)} z^{(n)})
-  \bar{K}_2- \bar{K}_1 .
\end{align}
\end{subequations}
The first equation \eqref{bur-k1} gives the sdBurgers equation \cite{CZZ-SAMP-2021,Z-PDEAM-2022}.
It is worth noting that when  calculating these flows, one should notice
the asymptotic condition \eqref{z-asym}. For example,  we have
\begin{equation}
\Delta^{-1} \Delta z
= \Delta^{-1} \Delta
(z  - 1)
=z  - 1.
\end{equation}
In light of the  asymptotic condition \eqref{z-asym} and the form of the recursion operator $\bar L$
defined in \eqref{def-L}, we know that
\begin{equation}
\bar{K}_m \to 0, ~~~ (n\to \pm \infty).
\end{equation}

To obtain nonisospectral equations of our interest,
we assume the spectral parameter $\eta$ evolves with respect to time as the following
\begin{equation}\label{eta-t}
\eta_{t_m} = \frac{1}{2}\sum^m_{j=0}(-1)^j \mathrm{C}^{j}_m \eta^{2(m-j)+1},~~~ (m\geq 1).
\end{equation}
Then, from \eqref{z_n,t} and \eqref{expand-b} we can derive the nonisospectral sdBurgers hierarchy
\begin{align}
z_{t_m}=\tilde{\sigma}_{m}
= (\bar{L}-1)^{m} z ,~~m \ge 1.
\end{align}
For convenience, we  introduce an auxiliary nonisospectral equation
\begin{equation}\label{sigma-0}
z_{t_0}=\tilde{\sigma}_{0}=z,
\end{equation}
which corresponds to
$\eta_{t_0} = \frac{1}{2} \eta$, $A=n, D=n+\frac{1}{2}, B=C=0$ in the Lax pair \eqref{burgers-lax}.

In the following we introduce
\begin{equation}\label{def-tiL}
\tilde{L} = \bar{L}-1
= z  \Delta z  \Delta^{-1} E \frac{1}{z } - 1,
\end{equation}
and consider  the combined sdBurgers hierarchies
\begin{subequations}\label{sdBurgers-flows}
\begin{align}
&z_{t_m}=\tilde{K}_{m }
= \tilde{L}^{m-1} \tilde{K}_1,~~~
\tilde{K}_1= z \Delta z,~~ ~~(m\ge 1), \label{sdBurgers-Km}\\
&z_{t_m}=\tilde{\sigma}_{m}
= \tilde{L}^m \tilde{\sigma}_0,~~~
\tilde{\sigma}_0=z,~~ ~~(m\ge 1).
\label{sdBurgers-Sigm}
\end{align}
\end{subequations}
Note that in light of \eqref{z-asym} these flows are asymptotically characterized by (for $m\ge 1$)
\begin{equation}
\tilde{K}_m \to 0, ~~ \tilde{\sigma}_m \to 0, ~~~ (n\to \pm \infty),
\end{equation}
which also makes possible to calculate higher order flows.\footnote{
From $\tilde{\sigma}_0=z$ it is easy to find $\bar{L}\tilde{\sigma}_0=z  \Delta (n z)$
and both of them tend to 1 when $n \to \pm\infty$.
This makes trouble in calculating the higher order flow
$\bar{L}^2\tilde{\sigma}_0=\bar{L}z  \Delta (n z)= z  \Delta z E \Delta^{-1}
\Delta (n  z)$.
We do not formally write $\Delta^{-1}\Delta (nz)=nz$ without considering any asymptotic condition.
}
The first few of them are
\begin{subequations}
\begin{align}
&\tilde{K}_1
=z\Delta z,\label{sdburgers-k1}\\
&\tilde{K}_2
=z^{(n)}\Delta (z^{(n+1)} z^{(n)})-2\tilde{K}_1,\\
&\tilde{K}_3
=z^{(n)}\Delta (z^{(n+2)}z^{(n+1)} z^{(n)})
-3\tilde{K}_2-3\tilde{K}_1,\\
&\tilde{\sigma}_1
= n\ti{K}_1 +z^{(n+1)} z^{(n)} -z^{(n)},\label{sdBurgers-sigma_1}\\
&\ti{\sigma}_2
=(n+1) \tilde{K}_2 +\tilde{K}_1 +z^{(n+2)}z^{(n+1)} z^{(n)} - 2z^{(n+1)} z^{(n)} + z^{(n)}.
\end{align}
\end{subequations}

Later, we will see that the isospectral hierarchy \eqref{sdBurgers-Km} can be recovered from
certain constraint of the D$\Delta$KP system.

\subsection{Algebraic structure  and  master symmetry}\label{sec-4-2}

To achieve algebraic structure of $\{\tilde{K}_m, \tilde{\sigma}_m\}$
and identify the master symmetry, we make use of some properties of the recursion operator $\tilde L$.
Let us introduce
\begin{equation}\label{w}
w=z-1
\end{equation}
and it follows that $w$ goes to zero rapidly as $|n| \to \infty$.
By $S[w]$ we denote the Schwartz space composed by all functions $f(w)$
that are $C^{\infty}$ differentiable with respect to $w$ and its shifts.

For the sdBurgers equation
\begin{equation}\label{sdbur-1}
z_{t_1}=\tilde K_1,
\end{equation}
one can verify
\begin{equation}\label{L'K-1}
\tilde{L}'(z)[\ti{K}_1] = [\ti{K}_1', \tilde{L}].
\end{equation}
Note that here and after in this section the G$\hat{\mathrm{a}}$teaux derivatives are taken with respect to $z$,
and we simply write $f'[g]$ instead of $f'(z)[g]$ without making confusion.
The above relation \eqref{L'K-1} means $\tilde{L}$ is a strong symmetry of equation \eqref{sdbur-1}.
In addition, via long but straight forward verification we get
\begin{equation}
	\tilde{L}'[\tilde{L} f]g-\tilde{L}'[\tilde{L} g] f =\tilde{L} (\tilde{L}'[f]g-\tilde{L}'[g]f),~~\forall f,g \in S[w],
\end{equation}
i.e., $\tilde{L}$ is a hereditary operator.
This means $\tilde{L}$ is a strong symmetry for any equation in the isospectral sdBurgers
hierarchy \eqref{sdBurgers-Km}, i.e.
\begin{equation}\label{4.23}
\tilde{L}'[\ti{K}_m] = [\ti{K}_m', \tilde{L}],~m=1,2,\cdots.
\end{equation}
It then follows that the isospectral flow $\tilde K_s$ is a symmetry of the hierarchy \eqref{sdBurgers-Km}
and we have
\begin{equation}\label{Km-Kn}
\llbracket \ti{K}_m, \ti{K}_s \rrbracket = 0,~~m,s=1,2,\cdots.
\end{equation}

Next, we look for the algebraic structures related to nonisospectral flows.
Let us start from the following lemma

\begin{lemma}\label{lem-4.1}
With $\ti{L}$ defined in \eqref{def-tiL} and the nonisospectral flows $\{\tilde{\sigma}_m\}$
defined in \eqref{sdBurgers-Sigm},
we have
\begin{equation}\label{L'sig_n}
\tilde{L}'[\tilde{\sigma}_m]
= [\tilde{\sigma}_m', \tilde{L}] + \tilde{L}^{m+1} + \tilde{L}^m, ~~m=0,1,2,\cdots.
\end{equation}
\end{lemma}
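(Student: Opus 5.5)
We argue by induction on $m$. The base case $m=0$ is checked directly, and the inductive step uses only that $\tilde L$ is hereditary, together with the recursive definition $\tilde\sigma_{m+1}=\tilde L\,\tilde\sigma_m$ from \eqref{sdBurgers-Sigm}.

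\emph{Base case $m=0$.} Here $\tilde\sigma_0=z$, so $\tilde\sigma_0'=I$ and $[\tilde\sigma_0',\tilde L]=0$. The claim \eqref{L'sig_n} therefore reduces to
\[
\tilde L'[z]=\tilde L+1=\bar L .
\]
Since $\tilde L=\bar L-1$, we have $\tilde L'=\bar L'$. The operator $\bar L=z\,\Delta\, z\,\Delta^{-1}E\,\frac1z$ is homogeneous of degree one under $z\mapsto(1+\varepsilon)z$: the factors $z\cdot z\cdot z^{-1}$ scale as $(1+\varepsilon)$, and $\Delta$, $\Delta^{-1}$, $E$ are unaffected. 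The G\^ateaux derivative in the direction $z$ is exactly the derivative along this scaling, so Euler's relation gives $\bar L'[z]=\bar L$. This proves the case $m=0$.

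\emph{Inductive step.} Assume \eqref{L'sig_n} holds for $m$. For an arbitrary $g$ we compute $\tilde L'[\tilde L\tilde\sigma_m]g$ using the hereditary identity with $f=\tilde\sigma_m$:
\[
\tilde L'[\tilde L\tilde\sigma_m]g=\tilde L'[\tilde L g]\tilde\sigma_m+\tilde L\,\tilde L'[\tilde\sigma_m]g-\tilde L\,\tilde L'[g]\tilde\sigma_m .
\]
Separately, $(\tilde L\tilde\sigma_m)'[h]=\tilde L'[h]\tilde\sigma_m+\tilde L\tilde\sigma_m'[h]$. Expanding the commutator with this gives
\[
[\tilde\sigma_{m+1}',\tilde L]g=\tilde L'[\tilde Lg]\tilde\sigma_m+\tilde L\tilde\sigma_m'\tilde Lg-\tilde L\,\tilde L'[g]\tilde\sigma_m-\tilde L\tilde L\tilde\sigma_m'g .
\]
Subtracting the two displays, the terms $\tilde L'[\tilde Lg]\tilde\sigma_m$ and $\tilde L\,\tilde L'[g]\tilde\sigma_m$ cancel, leaving
\[
\tilde L'[\tilde\sigma_{m+1}]g-[\tilde\sigma_{m+1}',\tilde L]g=\tilde L\bigl(\tilde L'[\tilde\sigma_m]-[\tilde\sigma_m',\tilde L]\bigr)g .
\]
By the induction hypothesis the bracket on the right equals $\tilde L^{m+1}+\tilde L^m$, so the right-hand side is $(\tilde L^{m+2}+\tilde L^{m+1})g$. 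This is \eqref{L'sig_n} for $m+1$.

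\emph{Main obstacle.} The difficulty is not the algebra but its justification. The hereditary property was stated for $f,g\in S[w]$, whereas $\tilde\sigma_0=z\to1$ does not decay, and $\Delta^{-1}$ inside $\tilde L$ requires the asymptotic conventions of \eqref{z-asym} (see the footnote on $\bar L\tilde\sigma_0$). I would therefore first handle $m=1$ directly as well. For this, $\tilde\sigma_1=n\tilde K_1+z^{(n+1)}z^{(n)}-z^{(n)}\in S[w]$ is decaying, and the identity for $m=1$ can be verified by direct computation from the explicit formula, in the same way \eqref{L'K-1} was checked. The induction can then run from $m=1$ entirely inside $S[w]$. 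Alternatively, one can note that the hereditary identity is purely algebraic and check that the only $\Delta^{-1}$ in $\tilde L$ acts on $\Delta$-images, where the boundary convention is consistent.
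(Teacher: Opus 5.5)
Your proof is correct and is essentially the paper's. The paper also checks $m=0$ and $m=1$ directly; your Euler-homogeneity observation $\bar L'[z]=\bar L$ is a tidy way to do the $m=0$ ``direct calculation''. It then runs the same heredity-based induction, reducing $\bigl(\tilde L'[\tilde\sigma_{m}]-[\tilde\sigma_m',\tilde L]\bigr)g$ to $\tilde L\bigl(\tilde L'[\tilde\sigma_{m-1}]-[\tilde\sigma_{m-1}',\tilde L]\bigr)g$. Drop your closing ``alternatively'' remark: in $\tilde L g$ the $\Delta^{-1}$ acts on $E(g/z)$, which for general $g$ is not a $\Delta$-image of a decaying sequence. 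The direct $m=1$ check, as you and the paper both do, is the right way to start the induction.
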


\begin{proof}
Since $\tilde{\sigma}_0=z$ we have $\tilde{\sigma}_0'=I$ where $I$ stands for an identity operator,
which yields
$[\tilde{\sigma}_0', \tilde{L}]
=[I,\tilde{L}]
= 0$.
Meanwhile, direct calculation shows
\begin{equation}
\tilde{L}'[\tilde{\sigma}_0]
= \tilde{L} + I,\label{L'sigma-0}
\end{equation}
which means \eqref{L'sig_n} is true for $m=0$.
In addition, direct calculation also shows \eqref{L'sig_n} is true for $m=1$.
We prove the general case by induction.
Assuming
\begin{equation}
\tilde{L}'[\tilde{\sigma}_{m-1}]
= [\tilde{\sigma}_{m-1}', \tilde{L}] + \tilde{L}^{m} + \tilde{L}^{m-1},
\end{equation}
then for arbitrary function $g \in S[w]$, we have
\begin{align*}
\big(
\tilde{L}'[\tilde{\sigma}_m] - [\tilde{\sigma}_m' , \tilde{L}]
\big)g
=& \tilde{L}'[\tilde{\sigma}_m] g
- \tilde{\sigma}_m' \tilde{L} g +\tilde{L} \tilde{\sigma}_m' g\\
=& \ti{L}'[\ti{L}\ti{\sigma}_{m-1}] g
-(\ti{L}\ti{\sigma}_{m-1})'\ti{L}g
+ \ti{L} (\ti{L}\ti{\sigma}_{m-1})' g\\
=& \ti{L}'[\ti{L}\ti{\sigma}_{m-1}] g
- \ti{L}'[\ti{L}g]\ti{\sigma}_{m-1}
- \ti{L}\ti{\sigma}_{m-1}'[\ti{L}g]
 +\ti{L}\ti{L}'[g]\ti{\sigma}_{m-1}
+\ti{L}\ti{L}\ti{\sigma}_{m-1}' [g].
\end{align*}
Noticing that $\ti{L}$ is heredity, the above relation reduces to
\begin{align*}
\big(
\tilde{L}'[\tilde{\sigma}_m] - [\tilde{\sigma}_m' , \tilde{L}]
\big)g
= \ti{L} \big(
\ti{L}'[\ti{\sigma}_{m-1}]-[\ti{\sigma}_{m-1}', \ti{L}]
\big) g
=  \ti{L} \big( \ti{L}^m + \ti{L}^{m-1}\big) g
= \big( \ti{L}^{m+1} + \ti{L}^{m}\big) g,
\end{align*}
which proves the general case.

\end{proof}

Using this lemma we can prove the following.

\begin{lemma}\label{lem-4.2}
For the flows $\{ \tilde{K}_{m }, \tilde{\sigma}_{s}\}$ define by \eqref{sdBurgers-flows},
we have the following relation:
\begin{equation}\label{Km-sig_n}
\llbracket \ti{K}_m, \ti{\sigma}_s \rrbracket = m (\ti{K}_{m+s} + \ti{K}_{m+s-1}),
\end{equation}
where  $m \ge 1$, $s \ge 0$ and  we set $\ti{K}_0 = 0$.
\end{lemma}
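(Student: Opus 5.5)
The plan is to use the standard argument for hereditary recursion operators. The inputs are Lemma \ref{lem-4.1}, the strong symmetry relation \eqref{4.23}, and the definitions $\ti{K}_m=\ti{L}^{m-1}\ti{K}_1$ and $\ti{\sigma}_s=\ti{L}^s\ti{\sigma}_0$. I first establish the base case $m=1$ by direct computation, and then induct on $m$.

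\textbf{Base case $m=1$.} First take $s=0$. Since $\ti{\sigma}_0=z$ and $\ti{\sigma}_0'=I$, we get
\[
\llbracket \ti{K}_1,\ti{\sigma}_0\rrbracket=\ti{K}_1'[z]-\ti{K}_1=(z\Delta z)'[z]-z\Delta z=\ti{K}_1,
\]
because $\ti{K}_1$ is homogeneous of degree two in $z$. This matches $1\cdot(\ti{K}_1+\ti{K}_0)$.

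For general $s$ with $m=1$, I would induct on $s$. The key identity, valid for any $f$ and any operator $\ti{L}$, is
\[
\llbracket f,\ti{L}g\rrbracket=\ti{L}\llbracket f,g\rrbracket+\big(\ti{L}'[g]f-\ti{L}'[f]g\big)...
\]
More precisely, expanding gives
\[
\llbracket f,\ti{L}g\rrbracket=f'[\ti{L}g]-\ti{L}'[f]g-\ti{L}g'[f]
=\ti{L}\llbracket f,g\rrbracket+\big([f',\ti{L}]-\ti{L}'[f]\big)g.
\]
Taking $f=\ti{K}_m$, the bracket term vanishes by \eqref{4.23}. Hence
\[
\llbracket \ti{K}_m,\ti{L}g\rrbracket=\ti{L}\llbracket \ti{K}_m,g\rrbracket.
\]
With $g=\ti{\sigma}_{s-1}$ this gives $\llbracket \ti{K}_m,\ti{\sigma}_s\rrbracket=\ti{L}\llbracket \ti{K}_m,\ti{\sigma}_{s-1}\rrbracket$. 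Therefore the whole $s$-dependence reduces to the case $s=0$:
\[
\llbracket \ti{K}_m,\ti{\sigma}_s\rrbracket=\ti{L}^s\llbracket \ti{K}_m,\ti{\sigma}_0\rrbracket .
\]
This holds for every $m$, not only $m=1$.

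\textbf{Case $s=0$, general $m$.} It then suffices to show $\llbracket \ti{K}_m,\ti{\sigma}_0\rrbracket=m(\ti{K}_m+\ti{K}_{m-1})$. Applying $\ti{L}^s$ then gives $m(\ti{K}_{m+s}+\ti{K}_{m+s-1})$, since $\ti{L}^s\ti{K}_j=\ti{K}_{j+s}$ for $j\ge1$ and $\ti{K}_0=0$.

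To prove the $s=0$ claim I swap roles and use Lemma \ref{lem-4.1} with $m=0$, namely $\ti{L}'[\ti{\sigma}_0]=[\ti{\sigma}_0',\ti{L}]+\ti{L}+I$. The same expansion with $f=\ti{\sigma}_0$ and $g=\ti{K}_{m-1}$ gives
\[
\llbracket \ti{\sigma}_0,\ti{K}_m\rrbracket=\ti{L}\llbracket \ti{\sigma}_0,\ti{K}_{m-1}\rrbracket-(\ti{L}+I)\ti{K}_{m-1}.
\]
Writing $c_m=\llbracket \ti{K}_m,\ti{\sigma}_0\rrbracket$, this reads
\[
c_m=\ti{L}c_{m-1}+\ti{K}_m+\ti{K}_{m-1}.
\]
Starting from $c_1=\ti{K}_1$, induction gives
\[
c_m=(m-1)(\ti{K}_m+\ti{K}_{m-1})+\ti{K}_m+\ti{K}_{m-1}=m(\ti{K}_m+\ti{K}_{m-1}).
\]

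\textbf{Main obstacle.} The delicate point is the legitimacy of these operator manipulations in the presence of $\Delta^{-1}$ inside $\ti{L}$. Concretely, the identity $\ti{L}\ti{K}_{m-1}=\ti{K}_m$, and the application of $\ti{L}$ to brackets, require the relevant functions to lie in $S[w]$, i.e. to decay as $n\to\pm\infty$. I would check that all brackets $\llbracket \ti{K}_m,\ti{\sigma}_s\rrbracket$ decay; this holds because $\ti{K}_m\to0$ and $\ti{\sigma}_s\to0$ for $s\ge1$. The case $s=0$, where $\ti{\sigma}_0=z\to1$, needs a separate check: there the bracket $c_m$ is a combination of the $\ti{K}$'s, which decay. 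This asymptotic bookkeeping, in the spirit of the footnote on $\bar{L}^2\ti{\sigma}_0$, is where the argument needs care.
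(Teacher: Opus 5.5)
Your central reduction $\llbracket \ti{K}_m,\ti{\sigma}_s\rrbracket=\ti{L}^s\llbracket \ti{K}_m,\ti{\sigma}_0\rrbracket$ is false for $m=1$. Its first step ($s=1$) applies the strong-symmetry identity \eqref{4.23} with argument $g=\ti{\sigma}_0=z$, and $z\notin S[w]$. Here $\ti{L}z$ is only defined through the formal rule $\Delta^{-1}1=n$, and the identity genuinely fails: one finds $\big([\ti{K}_1',\ti{L}]-\ti{L}'[\ti{K}_1]\big)z=\ti{K}_1\neq 0$, whatever summation constant is chosen. Concretely, direct calculation from \eqref{sdBurgers-sigma_1} gives $\llbracket \ti{K}_1,\ti{\sigma}_1\rrbracket=\ti{K}_2+\ti{K}_1$. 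Your formula instead gives $\ti{L}\llbracket \ti{K}_1,\ti{\sigma}_0\rrbracket=\ti{L}\ti{K}_1=\ti{K}_2$.

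Your final formula at $m=1$ only looks right because of a second slip. With $\ti{K}_0=0$ one has $\ti{L}^s(\ti{K}_1+\ti{K}_0)=\ti{K}_{s+1}$, not $\ti{K}_{s+1}+\ti{K}_s$, so your own justification actually yields the wrong value.

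The same slip occurs in your $c_m$-induction at $m=2$. Carried out honestly, the recursion gives $c_2=\ti{L}c_1+\ti{K}_2+\ti{K}_1=2\ti{K}_2+\ti{K}_1$. The correct value, from the explicit $\ti{K}_2=z^{(n)}\Delta(z^{(n+1)}z^{(n)})-2\ti{K}_1$ and $\ti{\sigma}_0'=I$, is $c_2=2(\ti{K}_2+\ti{K}_1)$. The recursion fails there because it differentiates $\ti{K}_2=\ti{L}\ti{K}_1$ in the direction $z$. That identity holds only under $z\to 1$, and moving in the direction $z$ shifts the asymptotic value.

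So your closing paragraph misplaces the difficulty. The issue is not whether the brackets decay. It is that identities established on $S[w]$ are being used with $z$ as argument or as direction.

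The paper avoids this in the $m=1$ step. It checks $s=0$ and $s=1$ directly, so \eqref{4.23} is only invoked with $g=\ti{\sigma}_{s-1}$ for $s\ge 2$, where $g\in S[w]$. It then inducts on $m$ for each $s$ via Lemma \ref{lem-4.1}.

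Your swapped-order strategy can be repaired by anchoring at $\ti{\sigma}_1$ instead of $\ti{\sigma}_0$:
\begin{enumerate}
\item For $s\ge 1$, the reduction $\llbracket \ti{K}_m,\ti{\sigma}_s\rrbracket=\ti{L}^{s-1}\llbracket \ti{K}_m,\ti{\sigma}_1\rrbracket$ is legitimate.
\item Prove $\llbracket \ti{K}_m,\ti{\sigma}_1\rrbracket=m(\ti{K}_{m+1}+\ti{K}_m)$ by the direct $m=1$ check plus induction on $m$ using Lemma \ref{lem-4.1} at index $1$. All directions then decay, and $\ti{L}$ only ever acts on $\ti{K}_j$ with $j\ge 1$.
\item Settle $s=0$ separately. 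There $\llbracket \ti{K}_m,\ti{\sigma}_0\rrbracket=\ti{K}_m'[z]-\ti{K}_m$ is a scaling count on the explicit $\ti{K}_m$, and it must be computed without differentiating $\ti{L}$-identities in the direction $z$.
\end{enumerate}
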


\begin{proof}
Let us first prove
\begin{equation}\label{K1-sig_n}
\llbracket \ti{K}_1, \ti{\sigma}_s \rrbracket = \ti{K}_{s+1} + \ti{K}_s, ~~s= 0, 1, 2, \cdots.
\end{equation}
Direct calculation yields
\begin{equation}
\llbracket \ti{K}_1, \ti{\sigma}_0 \rrbracket =  \ti{K}_1,~~~
\llbracket \ti{K}_1, \ti{\sigma}_1 \rrbracket = \ti{K}_2 + \ti{K}_1.
\end{equation}
Assuming
\begin{equation}\label{4.31}
\llbracket \ti{K}_1, \ti{\sigma}_{s-1} \rrbracket = \ti{K}_{s} + \ti{K}_{s-1},
\end{equation}
in which we set $\ti{K}_{0}=0$,
then we have
\begin{align*}
\llbracket \ti{K}_1, \ti{\sigma}_s \rrbracket
&= \llbracket \ti{K}_1, \ti{L} \ti{\sigma}_{s-1} \rrbracket\\
&= \ti{K}_1'  \ti{L} \ti{\sigma}_{s-1}
- (\ti{L} \ti{\sigma}_{s-1})'[ \ti{K}_1]\\
&= \ti{K}_1' \ti{L} \ti{\sigma}_{s-1}
- \ti{L}' \ti{K}_1 \ti{\sigma}_{s-1}
- \ti{L} \ti{\sigma}_{s-1}' [\ti{K}_1].
\end{align*}
Using relation \eqref{4.23} we further have
\begin{align*}
\llbracket \ti{K}_1, \ti{\sigma}_s \rrbracket
&= \ti{K}_1' \ti{L} \ti{\sigma}_{s-1}
- [\ti{K}_1', \ti{L}] \ti{\sigma}_{s-1}
- \ti{L} \ti{\sigma}_{s-1}' [\ti{K}_1]\\
&= \ti{L} \ti{K}_1' [\ti{\sigma}_{s-1}]
- \ti{L} \ti{\sigma}_{s-1}' [\ti{K}_1]\\
&= \ti{L} \llbracket \ti{K}_1, \ti{\sigma}_{s-1} \rrbracket\\
&= \ti{L} (\ti{K}_s + \ti{K}_{s-1})\\
&= \ti{K}_{s+1} + \ti{K}_s,
\end{align*}
where \eqref{4.31} is used. Thus, \eqref{K1-sig_n} is proved.

Relation \eqref{K1-sig_n} indicates \eqref{Km-sig_n} holds for $m=1$.
Next, we prove the general case with respect to $m$ by induction.
We assume
\begin{equation}\label{4.32}
\llbracket \ti{K}_{m-1}, \ti{\sigma}_n \rrbracket
= (m-1)(\ti{K}_{m+n-1} + \ti{K}_{m+n-2}).
\end{equation}
Then, we have
\begin{align*}
\llbracket \ti{K}_m, \ti{\sigma}_s \rrbracket
&= \llbracket \ti{L} \ti{K}_{m-1}, \ti{\sigma}_s \rrbracket\\
&= \big( \ti{L} \ti{K}_{m-1} \big)' \ti{\sigma}_s
- \ti{\sigma}_s' \big( \ti{L} \ti{K}_{m-1} \big) \\
&= \ti{L}' [\ti{\sigma}_s] \ti{K}_{m-1}
+ \ti{L} \ti{K}_{m-1}'[\sigma_s]
- \ti{\sigma}_s' \ti{L} \ti{K}_{m-1}.
\end{align*}
Inserting  \eqref{L'sig_n} gives rise to
\begin{align*}
\llbracket \ti{K}_m, \ti{\sigma}_s \rrbracket
&= \big(
[\ti{\sigma}_s', \ti{L}]
+ \ti{L}^{s+1}
+\ti{L}^s \big) \ti{K}_{m-1}
+ \ti{L} \ti{K}_{m-1}'[\ti{\sigma}_s]
- \ti{\sigma}_s' \ti{L} \ti{K}_{m-1}\\
&= [\ti{\sigma}_s', \ti{L}]  \ti{K}_{m-1}
+ \ti{K}_{m+s} + \ti{K}_{m+s-1}
+ \ti{L} \ti{K}_{m-1}'[\ti{\sigma}_s]
- \ti{\sigma}_s' \ti{L} \ti{K}_{m-1}\\
&= \ti{K}_{m+s} + \ti{K}_{m+s-1}
+ \ti{L} \llbracket \ti{K}_{m-1}, \ti{\sigma}_s \rrbracket\\
&= \ti{K}_{m+s} + \ti{K}_{m+s-1}
+ \ti{L} (m-1) (\ti{K}_{m+s-1} + \ti{K}_{m+s-2})\\
&= m (\ti{K}_{m+s} + \ti{K}_{m+s-1}),
\end{align*}
where \eqref{4.32} is used.
Thus the proof of the lemma is completed.

\end{proof}

We also have the follow structure for  the nonisospectral flows.

\begin{lemma}\label{lem-4.3}
The nonisospectral flows defined in \eqref{sdBurgers-Sigm} satisfy
\begin{equation}\label{sig-m,sig-n}
\llbracket \ti{\sigma}_m, \ti{\sigma}_s \rrbracket = (m-s)(\ti{\sigma}_{m+s} + \ti{\sigma}_{m+s-1}),
~~m\geq 1,~s\geq 0.
\end{equation}
\end{lemma}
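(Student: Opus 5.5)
We prove \eqref{sig-m,sig-n} by the same two-stage induction used for Lemma \ref{lem-4.2}, now with Lemma \ref{lem-4.1} as the basic ingredient. The key identity is that, for any flows $f,g$, the hereditary property of $\ti L$ together with Lemma \ref{lem-4.1} lets us move $\ti L$ out of a Lie product with a nonisospectral flow.

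\textbf{Step 1.} Compute, for arbitrary $f\in S[w]$ and $s\ge 0$,
\begin{align*}
\llbracket \ti L f, \ti\sigma_s\rrbracket
&=\ti L'[\ti\sigma_s]f+\ti L f'[\ti\sigma_s]-\ti\sigma_s'\ti L f\\
&=\big([\ti\sigma_s',\ti L]+\ti L^{s+1}+\ti L^s\big)f+\ti L f'[\ti\sigma_s]-\ti\sigma_s'\ti L f\\
&=\ti L\llbracket f,\ti\sigma_s\rrbracket+(\ti L^{s+1}+\ti L^s)f .
\end{align*}
This is exactly the manipulation already done in the proof of Lemma \ref{lem-4.2}, but with $f$ general. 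Only Lemma \ref{lem-4.1} is used; heredity enters only through that lemma.

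\textbf{Step 2.} Fix $s\ge0$ and induct on $m\ge 0$, proving the statement for all $m\ge0$ (the case $m=0$ is an auxiliary start).

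For $m=0$ we need $\llbracket\ti\sigma_0,\ti\sigma_s\rrbracket=-s(\ti\sigma_s+\ti\sigma_{s-1})$, with $\ti\sigma_{-1}=0$. Since $\ti\sigma_0=z$ has $\ti\sigma_0'=I$, we get
\[\llbracket\ti\sigma_0,\ti\sigma_s\rrbracket=\ti\sigma_s-\ti\sigma_s'[z].\]
So we must show $\ti\sigma_s'[\ti\sigma_0]=(s+1)\ti\sigma_s+s\ti\sigma_{s-1}$. We prove this by a sub-induction on $s$, writing $\ti\sigma_s=\ti L\ti\sigma_{s-1}$ and using \eqref{L'sigma-0}:
\[\ti\sigma_s'[z]=(\ti L+I)\ti\sigma_{s-1}+\ti L\,\ti\sigma_{s-1}'[z].\]
Inserting the hypothesis $\ti\sigma_{s-1}'[z]=s\ti\sigma_{s-1}+(s-1)\ti\sigma_{s-2}$ gives $(s+1)\ti\sigma_s+s\ti\sigma_{s-1}$. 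The base case is $\ti\sigma_0'[z]=z$.

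For the induction step, apply Step 1 with $f=\ti\sigma_{m-1}$:
\begin{align*}
\llbracket\ti\sigma_m,\ti\sigma_s\rrbracket
&=\ti L(m-1-s)(\ti\sigma_{m+s-1}+\ti\sigma_{m+s-2})+\ti\sigma_{m+s}+\ti\sigma_{m+s-1}\\
&=(m-s)(\ti\sigma_{m+s}+\ti\sigma_{m+s-1}).
\end{align*}
Here we use $\ti L\ti\sigma_j=\ti\sigma_{j+1}$. When $m+s-2=-1$ the term is zero by convention. This works because $(m-1-s)+1=m-s$.

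\textbf{Main obstacle.} The delicate point is not the algebra but the legitimacy of the manipulations with $\Delta^{-1}$. The footnote warns that $\ti\sigma_0=z$ and $\bar L\ti\sigma_0$ do not decay. So Lemma \ref{lem-4.1} and \eqref{L'sigma-0} must be applied to functions where $\ti L$ is well defined under the asymptotics \eqref{z-asym}. In particular, $\ti L\ti\sigma_0=\ti\sigma_1$ must be interpreted consistently with $\ti L=\bar L-1$, which is what makes it decay.

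I would check carefully that each Lie product above only involves $\ti L$ acting on $\ti\sigma_j$ (already defined) or on expressions built from them. If necessary, I would verify the low cases $m=1$, $s=0,1$ by direct computation, as the paper does in Lemma \ref{lem-4.2}, so that the induction starts from safely computed cases.
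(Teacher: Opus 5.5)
Your argument is correct and is essentially the two-stage induction the paper has in mind when it says the proof parallels Lemma \ref{lem-4.2}. Step 1 is exactly the Lemma \ref{lem-4.1} manipulation from the second half of that proof, and your $m=0$ row, obtained from $\ti L'[z]=\ti L+I$, plays the role of \eqref{K1-sig_n}. The only point to tighten is the one you flagged: the step $m=0\to m=1$ applies Lemma \ref{lem-4.1} to $f=\ti\sigma_0=z\notin S[w]$. You can avoid this by obtaining $\llbracket\ti\sigma_s,\ti\sigma_1\rrbracket=(s-1)(\ti\sigma_{s+1}+\ti\sigma_s)$ from Step 1 with $f=\ti\sigma_{s-1}$ for $s\ge 2$ (using antisymmetry and your $m=0$ row for $s=0$), and then starting the induction on $m$ at $m=1$, so that every $f$ fed into Step 1 decays.
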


The proof is similar to the one for Lemma \ref{lem-4.2}. Here we skip it.

Now we come to the stage to summarize the algebraic structure for the flows
$\{\ti{K}_m, \ti{\sigma}_s\}$.

\begin{theorem}\label{Th-4-1}
The flows $\{\ti{K}_m, \ti{\sigma}_s\}$ defined by \eqref{sdBurgers-flows} span a Lie algebra with  structure
\begin{subequations}\label{lie-sdburgers}
\begin{align}
&\llbracket \ti{K}_m, \ti{K}_s \rrbracket = 0,\\
&\llbracket \ti{K}_m, \ti{\sigma}_s \rrbracket = m(\ti{K}_{m+s} + \ti{K}_{m+s-1}),\label{sdburgers-Km-tau-r}\\
&\llbracket \ti{\sigma}_m, \ti{\sigma}_s \rrbracket = (m-s) (\ti{\sigma}_{m+s} + \ti{\sigma}_{m+s-1}),
\label{4.34c}
\end{align}
\end{subequations}
where $m \geq 1$, $s \geq 0$ and $\ti{K}_0(z)$ is set to be $0$.
The above structure indicates $\ti{\sigma}_1$ play a role of master symmetry for the
sdBurgers flows by
\begin{subequations}
\begin{align}
&\ti{K}_{s+1}
= \frac{1}{s} \llbracket \ti{K}_s, \ti{\sigma}_1 \rrbracket
- \ti{K}_{s}, ~~s\geq 1, \label{4.35a}\\
&\ti{\sigma}_{s+1} = \frac{1}{s-1} \llbracket \ti{\sigma}_s, \ti{\sigma}_1 \rrbracket - \ti{\sigma}_s, ~~s> 1.	
\end{align}
\end{subequations}

\end{theorem}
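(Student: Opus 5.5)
Theorem \ref{Th-4-1} mostly collects the structural facts already established in this section, and I would assemble it in three short steps. The first relation, $\llbracket \ti{K}_m, \ti{K}_s \rrbracket = 0$, is exactly \eqref{Km-Kn}. It follows from the hereditary property of $\ti{L}$ together with the strong symmetry relation \eqref{L'K-1}. The standard Fuchssteiner argument extends \eqref{L'K-1} to \eqref{4.23}. Then one inducts: $\llbracket \ti{K}_m, \ti{L}\ti{K}_{s-1}\rrbracket = \ti{L}\llbracket \ti{K}_m,\ti{K}_{s-1}\rrbracket$ by \eqref{4.23}, so everything reduces to $\llbracket \ti{K}_m,\ti{K}_1\rrbracket$. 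That bracket vanishes by the symmetric version of the same computation, using $\llbracket \ti K_1,\ti K_1\rrbracket=0$. The second relation \eqref{sdburgers-Km-tau-r} is precisely Lemma \ref{lem-4.2}. The third relation \eqref{4.34c} is Lemma \ref{lem-4.3}, so closure of the span under $\llbracket\cdot,\cdot\rrbracket$ is immediate.

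For Lemma \ref{lem-4.3}, whose proof was skipped, I would mirror the proof of Lemma \ref{lem-4.2}. First I would show $\llbracket \ti\sigma_m,\ti\sigma_0\rrbracket = m(\ti\sigma_m+\ti\sigma_{m-1})$ by induction on $m$. The key tool is Lemma \ref{lem-4.1} with $s=0$, i.e. $\ti L'[\ti\sigma_0]=\ti L+I$, applied to the expansion
\[
\llbracket \ti L\ti\sigma_{m-1},\ti\sigma_0\rrbracket=\ti L'[\ti\sigma_0]\ti\sigma_{m-1}+\ti L\ti\sigma_{m-1}'[\ti\sigma_0]-\ti L\ti\sigma_{m-1}.
\]
Since $\ti\sigma_0'=I$, this reduces to $(\ti L+I)\ti\sigma_{m-1}+\ti L\llbracket\ti\sigma_{m-1},\ti\sigma_0\rrbracket$, which gives the claimed form. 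Next I would induct on $m$ in $\llbracket \ti L\ti\sigma_{m-1},\ti\sigma_s\rrbracket$, using Lemma \ref{lem-4.1} for $\ti L'[\ti\sigma_s]$. The expansion gives
\[
\llbracket \ti\sigma_m,\ti\sigma_s\rrbracket=\ti\sigma_{m+s}+\ti\sigma_{m+s-1}+\ti L\llbracket\ti\sigma_{m-1},\ti\sigma_s\rrbracket+R,
\]
where $R$ collects the leftover terms. The term $\ti\sigma_s'[\ti L\ti\sigma_{m-1}]$ does not cancel automatically here, because $\ti\sigma_s$ is not a flow attached to $\ti L$ by a strong symmetry relation. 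So I expect a residual $-\ti L'[\ti\sigma_{m-1}]\ti\sigma_s$-type term. That term is controlled by using Lemma \ref{lem-4.1} a second time, now for index $m-1$, together with antisymmetry. The resulting coefficient recursion is $(m-1-s)+1=m-s$ minus a correction from the $\ti L^{m}\ti\sigma_s$ term. Handling this bookkeeping, especially keeping the asymptotic conditions at $n\to\pm\infty$ so that $\Delta^{-1}$ acts legitimately on the terms involving $\ti\sigma_0=z\to1$, is the main obstacle. If the direct route stalls, I would instead fix $s$ and use the Jacobi identity with the already proven $\llbracket\cdot,\ti\sigma_0\rrbracket$ and $\llbracket\cdot,\ti\sigma_1\rrbracket$ relations, inducting on $s$.

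The master symmetry formulas are then direct specializations. Setting $s=1$ in \eqref{sdburgers-Km-tau-r}, with the roles renamed, gives $\llbracket\ti K_s,\ti\sigma_1\rrbracket=s(\ti K_{s+1}+\ti K_s)$. Dividing by $s\ge1$ yields \eqref{4.35a}. Setting the second index to $1$ in \eqref{4.34c} gives $\llbracket\ti\sigma_s,\ti\sigma_1\rrbracket=(s-1)(\ti\sigma_{s+1}+\ti\sigma_s)$, which can be solved for $\ti\sigma_{s+1}$ when $s>1$. This shows that $\ti\sigma_1$ generates all $\ti K_m$ from $\ti K_1$, and all $\ti\sigma_s$ ($s\ge 3$) from $\ti\sigma_2$.
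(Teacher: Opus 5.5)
Your assembly of the theorem is the same as the paper's. The first relation is \eqref{Km-Kn}, obtained via hereditariness and \eqref{4.23}. The second is Lemma \ref{lem-4.2}, and the third is Lemma \ref{lem-4.3}. The master-symmetry formulas are the specializations of the second index to $1$. Citing Lemma \ref{lem-4.3} is legitimate, since the paper does the same and skips its proof. The flaw is confined to your supplementary proof of Lemma \ref{lem-4.3}: it is unfinished, and it rests on an incorrect expectation.

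\textbf{There is no residual term $R$.} Lemma \ref{lem-4.1} is precisely a perturbed strong-symmetry relation for $\ti{\sigma}_s$:
\[
\ti{L}'[\ti{\sigma}_s]=\ti{\sigma}_s'\ti{L}-\ti{L}\ti{\sigma}_s'+\ti{L}^{s+1}+\ti{L}^{s}.
\]
Substitute this into
\[
\llbracket \ti{L}\ti{\sigma}_{m-1},\ti{\sigma}_s\rrbracket=\ti{L}'[\ti{\sigma}_s]\ti{\sigma}_{m-1}+\ti{L}\ti{\sigma}_{m-1}'[\ti{\sigma}_s]-\ti{\sigma}_s'\ti{L}\ti{\sigma}_{m-1}.
\]
The term $\ti{\sigma}_s'\ti{L}\ti{\sigma}_{m-1}$ then cancels exactly. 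This is verbatim the computation in the proof of Lemma \ref{lem-4.2}, with $\ti{\sigma}_{m-1}$ in place of $\ti{K}_{m-1}$. The result is
\[
\llbracket \ti{\sigma}_m,\ti{\sigma}_s\rrbracket=\ti{\sigma}_{m+s}+\ti{\sigma}_{m+s-1}+\ti{L}\llbracket \ti{\sigma}_{m-1},\ti{\sigma}_s\rrbracket ,
\]
so the coefficient recursion is simply $(m-1-s)+1=m-s$, with no correction. Any extra contribution from a $\ti{L}^{m}\ti{\sigma}_s$ term would actually contradict \eqref{4.34c}, so the ``bookkeeping'' you anticipate could not be made to close.

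\textbf{The base case.} For fixed $s$, start the induction in $m$ at $m=0$. There $\llbracket\ti{\sigma}_0,\ti{\sigma}_s\rrbracket=-s(\ti{\sigma}_s+\ti{\sigma}_{s-1})$, which follows from your (correct) first step by antisymmetry. Your first step is itself the case $s=0$ of the displayed identity.

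\textbf{Other remarks.} The asymptotic issue you flag is not where any difficulty lies. It is already absorbed into Lemma \ref{lem-4.1}, in particular into $\ti{L}'[\ti{\sigma}_0]=\ti{L}+I$, and the induction above is purely algebraic. Your Jacobi-identity fallback is not self-contained either. The $\llbracket\ti{\sigma}_m,\ti{\sigma}_1\rrbracket$ relation you would feed into it is the $s=1$ case of Lemma \ref{lem-4.3} itself, not something already established.
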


\begin{corollary}\label{cor-4-1}
In the isospectral sdBurgers hierarchy \eqref{sdBurgers-Km}, each equation $z_{t_s} = \ti{K}_s$
possesses
two sets of symmetries
\begin{equation}
\{\ti{K}_l\},~~
\{\ti{\tau}_{s,r}
=s t_s (\ti{K}_{s+r}
+\ti{K}_{s+r-1} )
+\ti{\sigma}_r \},
\end{equation}
which generate a Lie algebra with   structure
\begin{subequations}\label{lie-sdburgers-2}
\begin{align}
&\llbracket \ti{K}_l, \ti{K}_r \rrbracket = 0,\\
&\llbracket \ti{K}_l, \ti{\tau}_{s,r} \rrbracket = l(\ti{K}_{l+r} + \ti{K}_{l+r-1}),\\
&\llbracket \ti{\tau}_{s,l}, \ti{\tau}_{s,r} \rrbracket = (l-r) (\ti{\tau}_{s,l+r} + \ti{\tau}_{s,l+r-1}),
\end{align}
\end{subequations}
where $l,s \geq 1$, $ r \geq 0$, and $\ti{K}_0 $ is set to be $0$.
\end{corollary}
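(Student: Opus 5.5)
We prove Corollary~\ref{cor-4-1} from Theorem~\ref{Th-4-1}, using the standard criterion for time-dependent symmetries. For an equation $z_{t_s}=\ti{K}_s$, a function $\omega=\omega(z,t_s)$ is a symmetry if
\[
\frac{\partial \omega}{\partial t_s}+\omega'[\ti{K}_s]=\ti{K}_s'[\omega],
\]
that is,
\[
\frac{\partial\omega}{\partial t_s}=\llbracket \ti{K}_s,\omega\rrbracket .
\]
This is the definition \eqref{sym}, in which the total derivative is split into the explicit $t_s$ part and the part from $z_{t_s}=\ti{K}_s$.

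\emph{Symmetries $\ti{K}_l$.} These carry no explicit $t_s$, so the criterion reduces to $\llbracket \ti{K}_s,\ti{K}_l\rrbracket=0$, which is the first relation of \eqref{lie-sdburgers}.

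\emph{Symmetries $\ti{\tau}_{s,r}$.} By definition
\[
\frac{\partial\ti{\tau}_{s,r}}{\partial t_s}=s(\ti{K}_{s+r}+\ti{K}_{s+r-1}).
\]
On the other hand, bilinearity together with $\llbracket \ti{K}_s,\ti{K}_j\rrbracket=0$ and \eqref{sdburgers-Km-tau-r} give
\[
\llbracket \ti{K}_s,\ti{\tau}_{s,r}\rrbracket=\llbracket \ti{K}_s,\ti{\sigma}_r\rrbracket=s(\ti{K}_{s+r}+\ti{K}_{s+r-1}).
\]
The two expressions coincide, so $\ti{\tau}_{s,r}$ is a symmetry for every $r\ge0$. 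For $r=0$ this uses $\ti{K}_{s-1}$, with the convention $\ti{K}_0=0$ when $s=1$.

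\emph{Algebra structure.} The Lie product $\llbracket\cdot,\cdot\rrbracket$ is a Gâteaux derivative with respect to $z$, with $t_s$ treated as a parameter. Therefore:
\begin{itemize}
\item $\llbracket \ti{K}_l,\ti{K}_r\rrbracket=0$ is immediate.
\item $\llbracket \ti{K}_l,\ti{\tau}_{s,r}\rrbracket=st_s\cdot 0+\llbracket\ti{K}_l,\ti{\sigma}_r\rrbracket=l(\ti{K}_{l+r}+\ti{K}_{l+r-1})$.
\end{itemize}
For the last relation, expand bilinearly:
\[
\llbracket \ti{\tau}_{s,l},\ti{\tau}_{s,r}\rrbracket
= st_s\llbracket \ti{K}_{s+l}+\ti{K}_{s+l-1},\ti{\sigma}_r\rrbracket
- st_s\llbracket \ti{K}_{s+r}+\ti{K}_{s+r-1},\ti{\sigma}_l\rrbracket
+\llbracket\ti{\sigma}_l,\ti{\sigma}_r\rrbracket ,
\]
where the $\ti K$–$\ti K$ terms vanish. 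Applying \eqref{sdburgers-Km-tau-r} to the first two brackets gives
\[
st_s\Big[(s+l)(\ti{K}_{s+l+r}+\ti{K}_{s+l+r-1})+(s+l-1)(\ti{K}_{s+l+r-1}+\ti{K}_{s+l+r-2})
-(s+r)(\ti{K}_{s+l+r}+\ti{K}_{s+l+r-1})-(s+r-1)(\ti{K}_{s+l+r-1}+\ti{K}_{s+l+r-2})\Big],
\]
which simplifies to
\[
st_s(l-r)\big(\ti{K}_{s+l+r}+2\ti{K}_{s+l+r-1}+\ti{K}_{s+l+r-2}\big).
\]
By \eqref{4.34c} the remaining bracket is $(l-r)(\ti{\sigma}_{l+r}+\ti{\sigma}_{l+r-1})$. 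Meanwhile,
\[
(l-r)(\ti{\tau}_{s,l+r}+\ti{\tau}_{s,l+r-1})
=(l-r)\Big[st_s\big(\ti{K}_{s+l+r}+2\ti{K}_{s+l+r-1}+\ti{K}_{s+l+r-2}\big)+\ti{\sigma}_{l+r}+\ti{\sigma}_{l+r-1}\Big],
\]
which matches the expansion term by term.

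The only delicate point is the index range. Relation \eqref{4.34c} is stated for $m\ge1$, $s\ge0$. If $l=r$ the bracket vanishes trivially. Otherwise antisymmetry lets us assume the first index is at least $1$, so the case where one index is $0$ is still covered. We also need the convention $\ti{K}_0=0$ in the edge cases where the index $s+l+r-2$ or $s+r-1$ equals $0$. With these checks the corollary follows.
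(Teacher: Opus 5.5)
Your proof is correct and takes essentially the paper's approach. The paper states this corollary without a written proof, as an immediate consequence of Theorem~\ref{Th-4-1}. Your argument supplies that verification: the criterion $\partial_{t_s}\omega=\llbracket \ti{K}_s,\omega\rrbracket$ for symmetries depending explicitly on $t_s$, followed by bilinear expansion of the brackets using \eqref{lie-sdburgers}, with the $\ti{K}_0=0$ edge cases handled correctly.
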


\section{Constraint of D$\Delta$KP system to sdBurgers hierarchy}\label{sec-5}

In  continuous case, it is known that the time evolution of the
eigenfunction for the KP hierarchy can be converted to the Burgers hierarchy
under certain constraint \cite{CL-JPA-1992}.
We have similar results in the semi-discrete case.

In Sec.\ref{sec-2-2} we already presented the D$\Delta$KP hierarchy \eqref{dkp-Km} together with
the   evolutions \eqref{dkp-phi_x} and \eqref{dkp-iso-lax-t} for the eigenfunction $\phi$.
We introduce a linear eigenfunction constraint between $u$ and $\phi$:
\begin{equation}\label{u=delta phi}
u = \Delta \phi,
\end{equation}
where we assume $\phi \to 0,~ |n|\to \infty$. Define $z$ by
\begin{equation}\label{def-w_n}
 \phi=z- 1,
\end{equation}
under which the $x$-evolution \eqref{dkp-phi_x} gives rise to
\begin{equation}\label{phi_x-R}
z_x = z \Delta z,
\end{equation}
which is the sdBurgers equation \eqref{bur-k1} with $x=t_1$.
Using the above three equations one can replace all the derivatives $\partial_x^{j}u^{(n+i)}$
in the D$\Delta$KP flows $\{K_m\}$ and operators $\{A_m\}$,
and then express them in terms of only $z$ and its shifts.
For convenience, we denote
\begin{equation}
F(u) |_{R_1}= F(u) |_{\eqref{u=delta phi},\eqref{def-w_n},\eqref{phi_x-R}}.
\end{equation}
Then we have the following results.

\begin{theorem}\label{Th-5-1}
Under the constraints \eqref{u=delta phi}, \eqref{def-w_n} and \eqref{phi_x-R},
both the isospectral  D$\Delta$KP hierarchy \eqref{dkp-Km} and the time  evolution \eqref{dkp-iso-lax-t}
give rise to the sdBurgers hierarchy \eqref{sdBurgers-Km} respectively.
\end{theorem}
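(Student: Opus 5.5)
The plan is to follow the scheme of Section \ref{sec-3}. Under \eqref{u=delta phi} and \eqref{def-w_n} we have $u=\Delta z$, so $u_{t_m}=\Delta z_{t_m}$. The theorem therefore splits into two identities:
\begin{equation*}
K_m(u)|_{R_1}=\Delta\,\ti{K}_m(z),\qquad A_m(u)|_{R_1}\,\phi=\ti{K}_m(z),\qquad m\ge 1 .
\end{equation*}
The first says that the D$\Delta$KP hierarchy reduces to $\Delta$ applied to the sdBurgers hierarchy, hence to \eqref{sdBurgers-Km} after inverting $\Delta$ under the decay assumptions. The second says that $\phi_{t_m}=z_{t_m}=\ti K_m$. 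Both will be proved by induction on $m$, driven by the master symmetries: $\sigma_2$ on the D$\Delta$KP side (relation \eqref{dkp-master-sym} and Proposition \ref{prop-3-3}) and $\ti\sigma_1$ on the sdBurgers side (relation \eqref{4.35a} in Theorem \ref{Th-4-1}).

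\emph{Step 1 (analogue of Lemma \ref{lem-sdakns-1}).} The base cases $m=1,2$ are checked by direct computation. For $m=1$,
\begin{equation*}
u_x|_{R_1}=\Delta z_x=\Delta(z\Delta z)=\Delta\ti K_1 .
\end{equation*}
Next I would determine a modified master symmetry of the form
\begin{equation*}
\hat\sigma_1=x(\ti K_2+\ti K_1)+\ti\sigma_1+c\,\ti K_1
\end{equation*}
with a suitable constant $c$, and verify $\sigma_2|_{R_1}=\Delta\hat\sigma_1$ by direct computation from \eqref{dkp-sigma-2}. The $x$-terms are forced by the explicit $x$ in $\sigma_2$. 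Since the $\ti K_j$ commute and $z_x=\ti K_1$, Theorem \ref{Th-4-1} gives $\llbracket\ti K_m,\hat\sigma_1\rrbracket=m(\ti K_{m+1}+\ti K_m)$.

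Because the constraint $u=\Delta z$ is linear, Gâteaux derivatives transport directly: $K_m'(u)[\sigma_2]|_{R_1}=(K_m|_{R_1})'(z)[\hat\sigma_1]$, and similarly with the roles exchanged. Hence, assuming $K_m|_{R_1}=\Delta\ti K_m$,
\begin{equation*}
\llbracket K_m,\sigma_2\rrbracket|_{R_1}=\Delta\llbracket\ti K_m,\hat\sigma_1\rrbracket=m\,\Delta(\ti K_{m+1}+\ti K_m),
\end{equation*}
and \eqref{dkp-master-sym} closes the induction. The same transport argument gives the analogue of Lemma \ref{lem-3.2}:
\begin{equation*}
A_m'[\sigma_2]|_{R_1}=(A_m|_{R_1})'[\hat\sigma_1],\qquad B_2'[K_m]|_{R_1}=(B_2|_{R_1})'[\ti K_m].
\end{equation*}

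\emph{Step 2 (analogue of Lemmas \ref{lem-3.3} and \ref{thm-dkp-sdakns}).} I would compute $B_2|_{R_1}\phi$ from \eqref{B_2} and write it as $\hat\sigma_1+\zeta_1$. Here $\zeta_1$ is a correction, presumably built from terms like $z$, $n\ti K_1$ and constants, that collects everything not matching $\hat\sigma_1$. I would then define recursively
\begin{equation*}
\zeta_{m+1}=\zeta_m'[\hat\sigma_1]-B_2|_{R_1}\zeta_m .
\end{equation*}
The goal is to show, by the induction of Lemma \ref{lem-3.3}, that
\begin{equation*}
\mathcal L|_{R_1}\zeta_m=0,\qquad A_s|_{R_1}\zeta_m=\zeta_m'[\ti K_s].
\end{equation*}
The key input is $(\mathcal L|_{R_1})'[\hat\sigma_1]=\Delta\hat\sigma_1=\sigma_2|_{R_1}=[B_2,\mathcal L]|_{R_1}$, which is the same mechanism as in Lemma \ref{lem-3.3}. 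Finally, inserting the inductive hypothesis $A_m|_{R_1}\phi=\ti K_m$ into \eqref{A_{m+1}-res} and repeating the manipulation of Lemma \ref{lem-3.3}'s second half gives
\begin{equation*}
A_{m+1}|_{R_1}\phi=\tfrac1m\bigl(\ti K_m'[\hat\sigma_1]-\hat\sigma_1'[\ti K_m]\bigr)-\ti K_m=\ti K_{m+1},
\end{equation*}
where the $\zeta_1$-terms cancel by $A_m\zeta_1=\zeta_1'[\ti K_m]$, and the last equality is \eqref{4.35a}.

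The main obstacle is the base computation: finding the exact form of $\hat\sigma_1$ (the constant $c$) and of $\zeta_1$, and checking $\mathcal L|_{R_1}\zeta_1=0$. Care is needed with $\Delta^{-1}$ under the nonzero asymptotics $z\to1$; in particular $\Delta^{-1}\Delta z=z-1$ must be used, and terms such as $\Delta^{-1}u=\phi$ appearing in $B_2$ must be handled consistently with this convention.
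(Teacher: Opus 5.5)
Your proposal is correct and follows the paper's own route. Your Step 1 is Lemma \ref{lem-5-1}, with the constant turning out to be $c=-1$, i.e.\ $\hat\sigma_1=x\tilde K_2+x\tilde K_1+\tilde\sigma_1-\tilde K_1$. Your final induction through \eqref{A_{m+1}-res} and \eqref{4.35a} is Lemma \ref{lem-5-2}.

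The only difference is that the $\zeta_m$ hierarchy and the analogue of Lemma \ref{lem-3.3} are unnecessary here. Since $B_2|_{R_1}\phi=x\tilde K_2+(x+n)\tilde K_1+\phi^2$, you get $\zeta_1=-\phi=1-z$. The cancellation $A_m|_{R_1}\zeta_1=\zeta_1'[\tilde K_m]$ that you need is then literally the inductive hypothesis $A_m|_{R_1}\phi=\tilde K_m$, which is how the paper closes the argument. Indeed, the $m=1$ case of your proposed analogue of Lemma \ref{lem-3.3} is already the theorem's second claim.
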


We prove this theorem via the following lemmas.

\begin{lemma}\label{lem-5-1}
For the flows $\{K_m\}$, $\sigma_2$,   $\{\ti{K}_m\}$ and $\ti{\sigma}_1$
defined in \eqref{dkp-Km}, \eqref{dkp-sigma-2},  \eqref{sdBurgers-Km} and \eqref{sdBurgers-sigma_1}, respectively,
under the constraints \eqref{u=delta phi}, \eqref{def-w_n} and \eqref{phi_x-R},
we have
\begin{subequations}
\begin{align}
& K_m(u)|_{R_1} =\Delta \ti{K}_m(z), \label{5.5a}\\
& \sigma_2(u)|_{R_1} =\Delta \ti{\ti{\sigma}}_1(z), \label{5.5b}
\end{align}
\end{subequations}
where
\begin{equation}
\ti{\ti{\sigma}}_1 = x \ti{K}_2 + x \ti{K}_1 + \ti{\sigma}_1 - \ti{K}_1.
\end{equation}
\end{lemma}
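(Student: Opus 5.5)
We follow the pattern of the proof of Lemma \ref{lem-sdakns-1}: verify the base cases directly, then run an induction on $m$ driven by the master-symmetry relations \eqref{dkp-master-sym} and \eqref{4.35a}.

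\emph{Base cases.} Under \eqref{u=delta phi}, \eqref{def-w_n} and \eqref{phi_x-R} we have $u=\Delta z$ and $u_x=\Delta z_x=\Delta(z\Delta z)$. Hence
\[
K_1(u)|_{R_1}=u_x|_{R_1}=\Delta\ti{K}_1(z),
\]
which is \eqref{5.5a} for $m=1$. For $m=2$ we insert $u=\Delta z$, $u_x=\Delta(z\Delta z)$ and $u_{xx}=\Delta\big((z\Delta z)'[z\Delta z]\big)$ into \eqref{dkp-k2}. We then use $\Delta^{-1}\Delta f=f$ for $f\to0$; this requires $z_x$ and $z_{xx}$ to decay, which holds since $z\to1$ and $\Delta z\to 0$. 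The goal is to check $K_2|_{R_1}=\Delta\ti{K}_2$. Next we check \eqref{5.5b} directly from \eqref{dkp-sigma-2}. The explicit $x$-terms match at once from the cases $m=1,2$, because $xK_2+xK_1\mapsto \Delta(x\ti{K}_2+x\ti{K}_1)$. The remaining term $nK_1+u_x+3\Delta^{-1}u_x+u^2-u$ must equal $\Delta(\ti{\sigma}_1-\ti{K}_1)$. The only delicate point is $n\Delta\ti{K}_1$. We handle it with the discrete Leibniz rule $\Delta(nf)=n\Delta f+Ef$, together with the explicit form \eqref{sdBurgers-sigma_1}, $\ti{\sigma}_1=n\ti{K}_1+z^{(n+1)}z^{(n)}-z^{(n)}$. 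This is a finite computation.

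\emph{Induction.} Assume $K_m|_{R_1}=\Delta\ti{K}_m$. Since $u=\Delta z$ is linear in $z$, for any $F(u)$ and any direction $g$ we have
\[
F'(u)[\Delta g]\big|_{R_1}=(F|_{R_1})'(z)[g].
\]
Compatibility with the $x$-derivatives is automatic, because $|_{R_1}$ replaces $\partial_x$ by the $t_1$-flow; this is exactly the reasoning used for $K_m'[\sigma_2]|_R$ in Lemma \ref{lem-sdakns-1}. Taking $g=\ti{\ti{\sigma}}_1$ via \eqref{5.5b}, and $g=\ti{K}_m$ via the induction hypothesis, gives
\[
\llbracket K_m,\sigma_2\rrbracket\big|_{R_1}
=(\Delta\ti{K}_m)'[\ti{\ti{\sigma}}_1]-(\Delta\ti{\ti{\sigma}}_1)'[\ti{K}_m]
=\Delta\llbracket\ti{K}_m,\ti{\ti{\sigma}}_1\rrbracket .
\]
We then split $\ti{\ti{\sigma}}_1=x(\ti{K}_2+\ti{K}_1)-\ti{K}_1+\ti{\sigma}_1$. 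The isospectral flows are $x$-independent and commute with each other by \eqref{Km-Kn}, and the bracket of $\ti{K}_m$ with $x(\ti{K}_2+\ti{K}_1)$ vanishes for the same reason, since $x$ is a parameter here. It follows that $\llbracket \ti{K}_m,\ti{\ti{\sigma}}_1\rrbracket=\llbracket\ti{K}_m,\ti{\sigma}_1\rrbracket=m(\ti{K}_{m+1}+\ti{K}_m)$ by \eqref{sdburgers-Km-tau-r}. Substituting into \eqref{dkp-master-sym} yields
\[
K_{m+1}|_{R_1}=\Delta(\ti{K}_{m+1}+\ti{K}_m)-\Delta\ti{K}_m=\Delta\ti{K}_{m+1}.
\]

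The main obstacle is the base verification of \eqref{5.5b}, and to a lesser extent of $K_2$. Two things must be handled carefully there: the nonlocal terms $\Delta^{-1}u_x$, whose inversion depends on the asymptotic condition $z\to1$, and the explicit $n$-dependence, which requires the shifted Leibniz identity rather than commuting $n$ past $\Delta$. We would compute $\sigma_2|_{R_1}$ term by term, write every term as $\Delta$ of something using $\Delta(nf)=n\Delta f+f^{(n+1)}$, and compare the result with $\Delta\big(n\ti{K}_1+z^{(n+1)}z^{(n)}-z^{(n)}\big)$.
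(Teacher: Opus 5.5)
Your proposal is correct and follows the paper's proof essentially step for step. Both arguments check $K_1$, $K_2$ and $\sigma_2$ directly; your handling of the $n$-dependent part of $\sigma_2|_{R_1}$ via $\Delta(nf)=n\Delta f+Ef$ is right and supplies details the paper omits. Both then induct on $m$ through \eqref{dkp-master-sym}, transfer G$\hat{\mathrm{a}}$teaux derivatives under $u=\Delta z$, and use $\llbracket \ti{K}_m,\ti{\ti{\sigma}}_1\rrbracket=\llbracket \ti{K}_m,\ti{\sigma}_1\rrbracket=m(\ti{K}_{m+1}+\ti{K}_m)$.

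One justification should be tightened; the paper also passes over it silently. The identity $F'(u)[\Delta g]|_{R_1}=(F|_{R_1})'(z)[g]$ is not automatic for arbitrary $g$. It requires $\partial_x g+g'[\ti{K}_1]=\ti{K}_1'[g]$, i.e.\ that $g$ be a symmetry of $z_x=\ti{K}_1$. This holds for $g=\ti{K}_m$ by \eqref{Km-Kn}, and for $g=\ti{\ti{\sigma}}_1$ because its explicit term $x(\ti{K}_2+\ti{K}_1)$ compensates $\llbracket\ti{K}_1,\ti{\sigma}_1\rrbracket=\ti{K}_2+\ti{K}_1$. So your argument goes through.
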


\begin{proof}
For the first flow $K_1(u)$, under the constraints \eqref{u=delta phi} \eqref{def-w_n} and \eqref{phi_x-R},
it is easy to have
\begin{equation}
K_1(u) |_{R_1} = u_x |_{R_1}
= \Delta \phi_{x}
= \Delta (z \Delta z)
= \Delta \ti{K}_1(z).
\end{equation}
For the second flow \eqref{dkp-k2}, by direct calculation, it is not difficult to  see that
\eqref{5.5a} holds for $m=2$ as well.
With these in hand, one can check the validity of  \eqref{5.5b},
while here we skip the details.

In the next we look at \eqref{5.5a} for general $m$.
Assume we have
\begin{equation}\label{5.9}
K_m(u)|_{R_1} =\Delta  \ti{K}_m(z),
\end{equation}
and we are going to prove it is true for $K_{m+1}(u)$ and $\ti{K}_{m+1}(z)$.
It  indicates
\begin{equation}\label{5.10}
(K_m(u)|_{R_1})'(z) [f]=\Delta  (\ti{K}_m^{\,\prime}(z)[f]), ~~~ f\in S[w],
\end{equation}
where $w$ is introduced in \eqref{w} and we note that here the G$\hat{\mathrm{a}}$teaux derivatives
on both sides are all taken with respect to $z$.
Note also that, in light of \eqref{sdburgers-Km-tau-r}, we have
\begin{equation}\label{Km-titi-sigma-1}
\llbracket \ti{K}_m, \ti{\ti{\sigma}}_1 \rrbracket
= \llbracket \ti{K}_m, \ti{\sigma}_1 \rrbracket
= m (\ti{K}_{m+1} + \ti{K}_m).
\end{equation}
Now we calculate
\begin{align*}
K_m^{\,\prime} (u)[\sigma_2 (u)] |_{R_1}
&=\Bigl(\frac{\mathrm{d}}{\mathrm{d}\,\varepsilon}
K_m( u +\varepsilon \sigma_2)\big|_{\varepsilon=0} \Bigr)\Bigr|_{R_1}
\\
&=\frac{\mathrm{d}}{\mathrm{d}\,\varepsilon}
{K_m( \Delta \phi +\varepsilon \Delta \ti{\ti{\sigma}}_1) }\big|_{\varepsilon=0}\\
&=\frac{\mathrm{d}}{\mathrm{d}\,\varepsilon}
{K_m( \Delta (z +\varepsilon \ti{\ti{\sigma}}_1)) }\big|_{\varepsilon=0}\\
&=(K_m(u)|_{R_1})'(z) [\ti{\ti{\sigma}}_1 (z)]\\
&= \Delta \bigl(\ti{K}_m^{\,\prime} (z) [\ti{\ti{\sigma}}_1 (z)]\bigr),
\end{align*}
where we have made use of \eqref{5.5b} and \eqref{5.9}.
On the other hand, in a similar way, we get
\begin{equation*}
\sigma_2^{\prime} (u)  [K_m (u)] |_{R_1}
= \Delta \big(\ti{\ti{\sigma}}_1{}^{\prime} (z)  [\ti{K}_m (z)]\big).
\end{equation*}
Thus, from  \eqref{dkp-master-sym} and the above results, we have
\begin{align*}
K_{m+1} (u)|_{R_1}
&= \frac{1}{m}
\llbracket K_m(u), \sigma_2(u)\rrbracket \big|_{R_1}
- K_m(u) |_{R_1}\\
&= \frac{1}{m}\Big(K_m^{\,\prime}(u)[\sigma_2(u)]- \sigma_2^{\,\prime}(u)[K_m(u)]\Bigr )
- K_m(u) |_{R_1}\\
&= \frac{1}{m}\Big(
\Delta \bigl(\ti{K}_m^{\,\prime} (z)[\ti{\ti{\sigma}}_1 (z)]\bigr)
- \Delta \bigl(\ti{\ti{\sigma}}_1{}^{\prime} (z) [\ti{K}_m (z)]\bigr)
\Big)
- \Delta \ti{K}_m (z)\\
&= \Delta \bigg(
\frac{1}{m}
\llbracket \ti{K}_m (z),
\ti{\ti{\sigma}}_1 (z) \rrbracket
- \ti{K}_m (z)
\bigg)\\
&= \Delta \ti{K}_{m+1} (z),
\end{align*}
where we have made use of \eqref{5.9}, \eqref{Km-titi-sigma-1} and \eqref{4.35a}.
Thus, \eqref{5.5a} is true for general $m$.

\end{proof}

\begin{lemma}\label{lem-5-2}
Under the constraints \eqref{u=delta phi}, \eqref{def-w_n} and \eqref{phi_x-R}, we have
\begin{subequations}
\begin{align}
&A_m (u) |_{R_1} \, \phi
= \ti{K}_m (z), \label{5.11a}\\
&B_2 (u) |_{R_1} \, \phi
= \ti{\ti{\sigma}}_1 (z) - \phi
= \ti{\ti{\sigma}}_1 (z) - z + 1, \label{5.11b}
\end{align}
\end{subequations}
where $B_2$ is given in \eqref{B_2}.
\end{lemma}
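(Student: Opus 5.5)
The plan mirrors the proof of Lemma~\ref{thm-dkp-sdakns}, with the scalar sdBurgers flows in place of the vector sdAKNS flows. We first verify \eqref{5.11b} directly. By \eqref{B_2} we have $B_2=xA_2+(x+n)A_1+\Delta^{-1}u$. Substituting $u=\Delta\phi$ and $\phi=z-1$, and using $z_x=z\Delta z$ to eliminate $u_1=\Delta^{-1}u_x=\phi_x$, we expect
\begin{equation*}
B_2|_{R_1}\phi=x\,A_2|_{R_1}\phi+(x+n)\,A_1|_{R_1}\phi+\phi^2 .
\end{equation*}
Once \eqref{5.11a} is known for $m=1,2$, this should reduce to $x(\ti K_2+\ti K_1)+n\ti K_1+\phi^2$. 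We then compare with $\ti{\ti\sigma}_1-\phi$, using \eqref{sdBurgers-sigma_1}: $\ti\sigma_1-\ti K_1-\phi=n\ti K_1+z^{(n+1)}z-z-z\Delta z-z+1=n\ti K_1+(z-1)^2$. The two expressions agree, so \eqref{5.11b} follows. The base cases of \eqref{5.11a} are immediate. For $m=1$, $A_1|_{R_1}\phi=\Delta\phi+\phi\Delta\phi=z\Delta z=\ti K_1$. For $m=2$ we verify directly from \eqref{A2} and the expression for $\ti K_2$.

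For the induction, we need the analogues of Lemma~\ref{lem-3.2} for this constraint:
\begin{equation*}
A_m'(u)[\sigma_2]|_{R_1}=(A_m|_{R_1})'(z)[\ti{\ti\sigma}_1],\qquad B_2'(u)[K_m]|_{R_1}=(B_2|_{R_1})'(z)[\ti K_m].
\end{equation*}
These follow exactly as before from \eqref{5.5b} and \eqref{5.5a}. The reason is that $u\mapsto u+\varepsilon\Delta f$ corresponds to $z\mapsto z+\varepsilon f$, and $u_1$ is expressed through $u$ alone, so the derivative commutes with the restriction.

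Assuming $A_m|_{R_1}\phi=\ti K_m$, we apply \eqref{A_{m+1}-res} to $\phi$. For the first term, $(A_m|_{R_1})'[\ti{\ti\sigma}_1]\phi=(A_m|_{R_1}\phi)'[\ti{\ti\sigma}_1]-A_m|_{R_1}\ti{\ti\sigma}_1$, using $\phi'(z)[f]=f$. Similarly, $(B_2|_{R_1})'[\ti K_m]\phi=(B_2|_{R_1}\phi)'[\ti K_m]-B_2|_{R_1}\ti K_m$. The commutator term contributes $A_m|_{R_1}B_2|_{R_1}\phi-B_2|_{R_1}A_m|_{R_1}\phi$, which equals $A_m|_{R_1}(\ti{\ti\sigma}_1-\phi)-B_2|_{R_1}\ti K_m$. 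Collecting terms, the $B_2|_{R_1}\ti K_m$ pieces cancel, and the $A_m\ti{\ti\sigma}_1$ pieces also cancel. What remains is
\begin{equation*}
A_{m+1}|_{R_1}\phi=\tfrac1m\bigl(\ti K_m'[\ti{\ti\sigma}_1]-(\ti{\ti\sigma}_1-\phi)'[\ti K_m]-A_m|_{R_1}\phi\bigr)-\ti K_m .
\end{equation*}
Here $(\ti{\ti\sigma}_1-\phi)'[\ti K_m]=\ti{\ti\sigma}_1'[\ti K_m]-\ti K_m$ and $A_m|_{R_1}\phi=\ti K_m$, so the extra terms cancel. By \eqref{Km-titi-sigma-1} and \eqref{4.35a}, the result is $\ti K_{m+1}$.

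This is where the argument is simpler than the sdAKNS case. The $-\phi$ in $B_2\phi$ plays the role of $\zeta_1$, but its contribution cancels explicitly, so no analogue of Lemma~\ref{lem-3.3} is needed. I expect the main obstacle to be the direct verification of \eqref{5.11b}, which requires careful handling of the asymptotics $z\to1$ when inverting $\Delta$. The other delicate point is checking that restriction commutes with the Gâteaux derivatives, given that $u_1$ involves $\Delta^{-1}$.
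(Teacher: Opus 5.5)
Your proposal is correct and follows essentially the same route as the paper. The paper verifies \eqref{5.11b} directly via $B_2|_{R_1}\phi=x\ti K_2+(x+n)\ti K_1+(z-1)^2$, then runs the induction on $m$ through \eqref{A_{m+1}-res} using the restriction--derivative identities \eqref{Am-r} and \eqref{B2-r}, and the $-\phi$ term in \eqref{5.11b} cancels against $-A_m|_{R_1}\phi$ exactly as you describe, so no analogue of Lemma~\ref{lem-3.3} is needed.
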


\begin{proof}
Direct calculation shows that \eqref{5.11a} holds for $m=1,2$.
For \eqref{5.11b}, there is
\begin{equation*}
\begin{aligned}
B_2 (u) |_{R_1} \, \phi
&= x \ti{K}_2 (z)
+ (x+n)\ti{K}_1 (z)
+(z - 1 )^2 \\
&= x \ti{K}_2 (z) + x \ti{K}_2 (z) + \ti{\sigma}_1 - \ti{K}_1 - z + 1 \\
&= \ti{\ti{\sigma}}_1 - z + 1,	
\end{aligned}
\end{equation*}
which means \eqref{5.11b} holds.

Then we assume
\begin{equation}\label{5.12}
A_m(u)|_{R_1} \, \phi
=\ti{K}_m (z)
\end{equation}
and in the next we go to prove it to be true for $A_{m+1}$ and $\ti{K}_{m+1}$ as well.
It is easy to get
\begin{align}
A_m'[\sigma_2] |_{R_1}
&= \Bigl(\frac{\mathrm{d}}{\mathrm{d}\,\varepsilon}
A_m( u +\varepsilon \sigma_2)\big|_{\varepsilon=0}\Bigr)\Big|_{R_1}
\nonumber \\
&= \frac{\mathrm{d}}{\mathrm{d}\,\varepsilon}
{A_m( \Delta \phi +\varepsilon \Delta \ti{\ti{\sigma}}_1)} \big|_{\varepsilon=0} \nonumber \\
&= \frac{\mathrm{d}}{\mathrm{d}\,\varepsilon}
{A_m( \Delta (z +\varepsilon \ti{\ti{\sigma}}_1))}\big|_{\varepsilon=0} \nonumber \\
&= \big( A_m |_{R_1}\big)'(u)[\ti{\ti{\sigma}}_1 (z)], \label{Am-r}
\end{align}
and similarly
\begin{equation}\label{B2-r}
B_2^{\,\prime} (u)[K_m (u)] |_{R_1}
= \big( B_2 |_{R_1}\big)'(z)[\ti{K}_m (z)].
\end{equation}
Now, using the relation \eqref{A_{m+1}-res}  we calculate $A_{m+1}(u)|_{R_1}  \phi$:
\begin{align*}
A_{m+1}(u)|_{R_1} \phi
=& \frac{1}{m}
\big(
A_m ^{\prime}(u)[\sigma_2] - B_2^{\prime}(u)[K_m ] + [A_m, B_2]
\big) \big|_{R_1} \phi
-A_m|_{R_1}  \phi\\
=& \frac{1}{m}
\Big(
\big(A_m |_{R_1}\big)'(z)[\ti{\ti{\sigma}}_1 (z)] \phi
- \big( B_2|_{R_1}\big)'(z) [\ti{K}_m (z)] \phi \\
&~~~~~ + A_m |_{R_1} B_2 |_{R_1} \phi - B_2 |_{R_1} A_m |_{R_1}  \phi
\Big)
-A_m |_{R_1}  \phi,
\end{align*}
where we have substituted \eqref{Am-r} and \eqref{B2-r}.
By noticing that $\phi'(z)[f]=z'(z)[f]=f$ for $f\in S[w]$ and also  inserting \eqref{5.12}, it is rewritten as
\begin{align*}
A_{m+1}(u)|_{R_1} \, \phi
=& \frac{1}{m}
\Big( \big( A_m |_{R_1} \,\phi  \big)'(z)[\ti{\ti{\sigma}}_1 (z)]
-  \big( B_2 |_{R_1} \phi \big)'(z)[\ti{K}_m (z)]
- A_m |_{R_1} \phi
\Big)
-A_m |_{R_1}  \phi\\
=& \frac{1}{m}
\Big( \ti{K}_m^{\,\prime} (z) [\ti{\ti{\sigma}}_1 (z)]
-  \big( B_2 |_{R_1} \phi \big)'(z)[\ti{K}_m (z)]
- \ti{K}_m (z)
\Big)
-\ti{K}_m (z).
\end{align*}
Next, substituting  \eqref{5.11b}, we arrive at
\begin{align*}
A_{m+1}(u)|_{R_1} \, \phi
=& \frac{1}{m}
\Big( \ti{K}_m^{\,\prime} (z) [\ti{\ti{\sigma}}_1 (z)]
- \ti{\ti{\sigma}}_1^{\,\prime} (z) [\ti{K}_m (z)]
+ z'(z)[\ti{K}_m (z)]
-\ti{K}_m (z)
\Big)
-\ti{K}_m (z)\\
=& \frac{1}{m}
\llbracket \ti{K}_m (z), \ti{\ti{\sigma}}_1 (z) \rrbracket
-\ti{K}_m (z)\\
=& \ti{K}_{m+1} (z),
\end{align*}
where in the last step the relations  {\eqref{Km-titi-sigma-1}} and \eqref{4.35a} are used.
Thus, we have proved \eqref{5.11a} holds for general $m$.

\end{proof}

Based on these two lemmas, Theorem \ref{Th-5-1} is proved.

\section{Constraint of D$\Delta$mKP system to sdBurgers hierarchy}\label{sec-6}

In the continuous case, the squared eigenfunction constraint of the mKP system
gives rise to the CLL hierarchy which can be further reduced to the Burgers hierarchy \cite{CL-JPA-1992,C-JMP-2002}.
There are more results in semi-discrete case:
through two different squared eigenfunction symmetry constraints, the D$\Delta$mKP system
can yield the relativistic Toda hierarchy (see \cite{CZZ-SAMP-2021})
and the semi-discrete CLL hierarchy (see \cite{LZZ-SAMP-2024}),
both of which can be reduced to the sdBurgers hierarchy \cite{CZZ-SAMP-2021,LZZ-SAMP-2024}).
In the following we show a linear eigenfunction constraint
that leads the D$\Delta$mKP hierarchy to the sdBurgers hierarchy.

\subsection{D$\Delta$mKP system to sdBurgers hierarchy and algebraic structure}\label{sec-6-1}

We first quickly review the D$\Delta$mKP hierarchy.
One can refer to \cite{CZZ-SAMP-2021} for more details.
Consider the pseudo-difference operator
\begin{equation}\label{L-dmKP}
M = v\Delta + v_0 + v_1\Delta^{-1} + \cdots + v_j \Delta^{-j}+\cdots,
\end{equation}
and the isospectral Lax triad \cite{CZZ-SAMP-2021}
\begin{subequations}\label{dmkp-iso-lax}
\begin{align}
& M \psi = \eta \psi,~~~\eta_{t_m}=0, \label{dmkp-sp}\\
& \psi_x = \hat{A}_1 \psi,~~~\hat{A}_1=v\Delta, \label{dmkp-iso-A1}\\
& \psi_{t_m}=\hat{A}_m \psi,~~~m=1, 2, \cdots, \label{dmkp-flow-lax}
\end{align}
\end{subequations}
where $v$ and $\{v_j\}$ are functions of $(n, x, \mathbf{t}=(t_1,t_2,\cdots,))$ satisfying
\begin{equation}\label{6.3}
v \to 1,~~ v_j \to 0, ~~~ |n|\to \infty,
\end{equation}
$\ti{A}_s=(M^s)_{\geq 1}$ contains the pure difference part (in terms of $\Delta$)
of $M^s$,
satisfying the asymptotic condition 
$\ti{A}_s \to \Delta^s, ~ |n|\to \infty$,
and the first three are
\begin{subequations}\label{dmkp-Aj}
\begin{align}
&\hat{A}_1=v\Delta,\label{dmkp-A1}\\
&\hat{A}_2=v(E v)\Delta^2+v(\Delta v)\Delta+v(E v_0)\Delta+v v_0\Delta,\\
&\hat{A}_3
= \hat{a}_1\Delta^3+\hat{a}_2\Delta^2+\hat{a}_3\Delta,
\end{align}
\end{subequations}
in which
\begin{align*}
\hat{a}_1&=v (E v)(E^2 v),\\
\hat{a}_2&=2 v(E v)(E\Delta v)+ v(E v)(E^2 v_0)+ v(\Delta  v)(E v)+
v({E} v_0)(E v)
+ v v_0(E v),\\
\hat{a}_3&= v(E v)(\Delta^2 v)+2 v(E v)(E\Delta v_0)+v(E v)(E^2 v_1)+ v(\Delta v)^2+2v(\Delta v)(E v_0)\\
&~~~+ v(E v_0)^2+ v v_0(\Delta v)
+  {v} v_0(E v_0)
+ v^2(\Delta v_0)
+{v}^2(E v_1)+v v_1(E^{-1} v)
+{v_0^2 v}.
\end{align*}
The compatibility  of \eqref{dmkp-iso-lax} gives rise to
\begin{subequations}\label{dmkp-iso-com}
\begin{align}
&M_x=[\hat{A}_1,M],\label{dmkp-iso-Lx}\\
&M_{t_m}=[\hat{A}_m,M],\label{dmkp-iso-Ltm}\\
&\hat{A}_{1,t_m}-\hat{A}_{m,x}+[\hat{A}_1,\hat{A}_m]=0,\label{dmkp-iso-zcc}
\end{align}
\end{subequations}
Among them, equation \eqref{dmkp-iso-Lx} yields the expression of $v_j$ in terms of $v$,
e.g.
\begin{subequations}\label{dmkp-v}
\begin{align}
&v\Delta v_0=v_x,\label{dmkp-v0}\\	
&v(Ev_1)-(E^{-1}v)v_1
=v_{0,x}-v(\Delta v_0), \label{dmkp-v1}
\end{align}
\end{subequations}
from which one can `integrate' $v_j, ~j \geq 1$ with boundary condition \eqref{6.3},
and then one has
\begin{align}\label{dmkp-v0-v1}
v_0=\Delta^{-1}(\ln v)_x,~~~
v_1=\frac{\Delta^{-2}(\ln v)_{xx}
-\Delta^{-1}v_x}{(E^{-1}v)}.
\end{align}
Eq.\eqref{dmkp-iso-zcc} provides the zero curvature representation of the isospectral D$\Delta$mKP hierarchy,
\begin{equation}\label{dmkp-iso-km}
v_{t_m}=\hat{K}_m(v)
= (\hat{A}_{m,x}-[\hat{A}_1, \hat{A}_m])\Delta^{-1},~~~m=1,2,\cdots,
\end{equation}
where the first three of them are
\begin{subequations}\label{dmkp-k}
\begin{align}
v_{t_1}=\hat{K}_1(v)&=v_x, \label{6.9a}\\
v_{t_2}=\hat{K}_2(v)
&=v(1+2\Delta^{-1})(\ln v)_{xx}
+v_{x}(1+2\Delta^{-1})(\ln v)_{x}-2 v v_{x}, \label{6.9b}\\
v_{t_3}=\hat{K}_3(v)
&=v (3\Delta^{-2}+3\Delta^{-1}+1)(\ln v)_{xxx}+ v_{x}(3\Delta^{-2}+3\Delta^{-1}+1)
(\ln v)_{xx}\nonumber\\
&~~~+v_{x}((1+2\Delta^{-1})(\ln v)_{x})^2+2v((1+2\Delta^{-1})(\ln v)_{x})((1+2\Delta^{-1})
(\ln v)_{xx})\nonumber\\
&~~~-v_{x}(\Delta^{-1}(\ln v)_{x})((\Delta^{-1}+1)(\ln v)_{x})
-v(\Delta^{-1}(\ln v)_{xx })((\Delta^{-1}+1)(\ln v)_{x})\nonumber\\
&~~~-v(\Delta^{-1}(\ln v)_{x})((\Delta^{-1}+1)(\ln v)_{xx})+3 v^2v_{x}-3v(1+\Delta^{-1})v_{xx}\nonumber\\
&~~~-6v v_{x}\Delta^{-1}(\ln v)_{x}-3 v^2\Delta^{-1}(\ln v)_{xx}-3v_{x}(1+\Delta^{-1})v_{x}
\end{align}
\end{subequations}

To get the nonisospectral flows $\{\hat{\sigma}_s\}$,
we consider the nonisospectral Lax triad \cite{LZZ-SAMP-2024}:
\begin{subequations}\label{dmkp-non-lax}
\begin{align}
&M \hat \psi=\eta  \psi,~~~\eta_{t_s}= \eta^s+\eta^{s-1},\label{4.11}\\
&\psi_x=\hat{A}_1 \psi,\\
&\psi_{t_s}=\hat{B}_s \psi,~~~s=2, 3, \cdots,
\end{align}
\end{subequations}
where $\hat B_s$ is assumed to be a difference operator with the form
\begin{equation*}
\hat{B}_s=\sum_{j=0}^{s-1}\hat{b}_j\Delta^{s-j},
\end{equation*}
with the asymoptotic condition
\begin{align}\label{dmkp-non-bc}
{\hat{B}_s}|_{v=1}=
x\Delta^s+(x+n)\Delta^{s-1}, ~~~ s \ge 2.
\end{align}
The compatibility of \eqref{dmkp-non-lax} gives rise to
\begin{subequations}\label{dmkp-non-com}
\begin{align}
&M_{x}=[\hat{A}_1,M],\label{dmkp-non-Lx}\\
&M_{t_s}=[\hat{B}_s,M]+M^s+M^{s-1},\label{dmkp-non-Ltm}\\
&\hat{A}_{1,t_s}-\hat{B}_{s,x}+[\hat{A}_1,\hat{B}_s]=0.\label{dmkp-non-zcc}
\end{align}
\end{subequations}
$\{\hat B_m\}$ are uniquely determined from \eqref{dmkp-non-Ltm} together with the boundary condition
\eqref{dmkp-non-bc},
but no $\hat{B}_1$ can be obtained in this approach, therefore there is no the
nonisospectral D$\Delta$mKP flow $\hat\sigma_1$ is defined from \eqref{dmkp-non-com}.
The first two $ \hat B_s $ are
\begin{subequations}\label{dmkp-B1-B2}
\begin{align}
&\hat{B}_2=x\hat{A}_2+(x+n)\hat{A}_1,\label{dmkp-B2}\\
&\hat{B}_3=x\hat{A}_3+(x+n)\hat{A}_2
+v(\Delta^{-1}v_0)\Delta-v(\Delta^{-1}v)\Delta
+n v\Delta,
\end{align}
\end{subequations}
where $\hat A_j$ are given in \eqref{dmkp-Aj}.
Then, \eqref{dmkp-non-zcc} defines the  nonisopectral D$\Delta$mKP  hierarchy
\begin{equation}\label{dmkp-non-sigma-m}
v_{t_m}=\hat{\sigma}_m(v)= (\hat{B}_{m,x}+[\hat{B}_m,\hat{A}_1])\Delta^{-1},
~~  m=2,3,\cdots,
\end{equation}
where the first nonisospectral flow reads
\begin{equation}\label{dmkp-sigma-2}
\hat{\sigma}_2 =
x \hat{K}_2
+(x+n)\hat{K}_1
+v(1+2\Delta^{-1})(\ln v)_{x}+v-v^2,
\end{equation}

Ref.\cite{LZZ-SAMP-2024} shows that the isospectral and ninisospectral flows
$\{\hat{K}_m,\hat{\sigma}_s\}$ compose a Lie algebra in terms of Lie product
$\llbracket \cdot , \cdot  \rrbracket$,
where the G$\hat{\mathrm{a}}$teaux derivatives are taken with respect to $\hat v$:
\begin{equation}\label{v-hat}
\hat v= v-1.
\end{equation}
The algebraic structure is presented in Theorem 3 in \cite{LZZ-SAMP-2024}.
However, noticing that \eqref{v-hat} indicates the algebraic structure keeps invariant
if the G$\hat{\mathrm{a}}$teaux derivatives are taken with respect to $v$,
therefore we can reach to the following theorem immediately.

\begin{theorem}\label{Th-6-1}
The D$\Delta$mKP flows $\{\hat{K}_m,\hat{\sigma}_s\}_{m \ge 1, s \ge 2}$
compose a Lie algebra with the following structure
\begin{subequations}\label{K-S-alg}
\begin{align}
&\llbracket \hat{K}_m,\hat{K}_s \rrbracket=0,\\
&\llbracket \hat{K}_m,\hat{\sigma}_s \rrbracket=m ( \hat{K}_{m+s-1}+\hat{K}_{m+s-2}), \label{dmkp-Km-sigma-r}\\
&\llbracket \hat{\sigma}_m,\hat{\sigma}_s \rrbracket=(m-s) (\hat{\sigma}_{m+s-1}+\hat{\sigma}_{m+s-2}).
\end{align}
\end{subequations}
Here the G$\hat{\mathrm{a}}$teaux derivatives are taken with respect to $v$,
e.g. $\hat{K}_m^{\,\prime}[\hat{\sigma}_s]=\hat{K}_m^{\,\prime}(v)[\hat{\sigma}_s]$.
The master symmetry of the D$\Delta$mKP hierarchy is $\hat{\sigma}_2$, given in \eqref{dmkp-sigma-2},
which provides the recursion relation for the isospectral flows $\{\hat{K}_s\}$:
\begin{equation}
\hat{K}_{s+1} = \frac{1}{s} \llbracket \hat{K}_s, \hat{\sigma}_2\rrbracket - \hat{K}_s.\label{dmkp-master-sym}
\end{equation}
\end{theorem}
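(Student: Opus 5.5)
The plan is to derive Theorem \ref{Th-6-1} from the algebraic structure established in \cite{LZZ-SAMP-2024} (Theorem 3 there). That result computes the Lie products with Gâteaux derivatives taken with respect to $\hat v=v-1$, so the only thing to check is that the Lie product $\llbracket\cdot,\cdot\rrbracket$ does not change when we differentiate with respect to $v$ instead.

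\textbf{Step 1: the two Gâteaux derivatives agree.} Take any $f\in S[\hat v]$ and regard it as a function of $v$ through $\hat v=v-1$. For any direction $g$,
\[
f'(v)[g]=\frac{\mathrm{d}}{\mathrm{d}\varepsilon}f(v+\varepsilon g)\Big|_{\varepsilon=0}
=\frac{\mathrm{d}}{\mathrm{d}\varepsilon}f\big((\hat v+\varepsilon g)+1\big)\Big|_{\varepsilon=0}=f'(\hat v)[g].
\]
This holds because the map $v\mapsto \hat v$ is a translation by a constant, so its Jacobian is the identity. It commutes with shifts $E$ and with $\partial_x$, and it preserves the boundary behaviour \eqref{6.3}, so that $\Delta^{-1}$ is interpreted the same way in both variables. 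Hence $\llbracket f,g\rrbracket$ is the same whichever variable we use.

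\textbf{Step 2: the flows are the same objects.} The flows $\hat K_m$ and $\hat\sigma_s$ are defined intrinsically by \eqref{dmkp-iso-km} and \eqref{dmkp-non-sigma-m}. Rewriting them in terms of $\hat v$ is just a substitution; it does not change them as vector fields.

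\textbf{Step 3: transfer the relations.} The relations \eqref{K-S-alg} therefore carry over verbatim from \cite{LZZ-SAMP-2024}. Setting $s=2$ in \eqref{dmkp-Km-sigma-r} gives
\[
\llbracket\hat K_m,\hat\sigma_2\rrbracket=m(\hat K_{m+1}+\hat K_m).
\]
Dividing by $m$ and subtracting $\hat K_m$ yields \eqref{dmkp-master-sym}, which identifies $\hat\sigma_2$ as the master symmetry.

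\textbf{Main obstacle.} The only delicate point is to confirm that the normalisation in \cite{LZZ-SAMP-2024} matches the present one. That is, their $\hat\sigma_s$ must be built from $\eta_{t_s}=\eta^s+\eta^{s-1}$ with the asymptotic condition \eqref{dmkp-non-bc}, so that the structure constants are $m$ and $(m-s)$ with the shifted indices $m+s-1$ and $m+s-2$. To settle this, I would compare $\hat\sigma_2$ in \eqref{dmkp-sigma-2} with theirs. If that comparison were inconclusive, I would fall back on a direct argument mirroring Proposition \ref{prop-3-3}: write $\hat K_m=[\hat A_m,\mathcal M]\Delta^{-1}$ with $\mathcal M=\hat A_1-\partial_x$, and use a uniqueness statement analogous to Proposition \ref{prop-3-1} for the operator $M$.
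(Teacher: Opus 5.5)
Your proposal is correct and follows the paper's own argument: the paper likewise cites Theorem 3 of \cite{LZZ-SAMP-2024} and notes that the constant shift $\hat v=v-1$ leaves the G\^ateaux derivatives, and hence the Lie product, unchanged. Your extra step of setting $s=2$ in \eqref{dmkp-Km-sigma-r} to read off \eqref{dmkp-master-sym} is exactly how the master-symmetry recursion follows.
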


In addition, one can prove the following.
\begin{proposition}\label{prop-6-1}
The operators $\{\hat A_m\}$ and $\{\hat B_s\}$ satisfy the following relation
\begin{subequations}
\begin{align}
&\hat A_{m+1} = \frac{1}{m}
\left(
\hat A_m^{\,\prime}(v)[\hat \sigma_2] - \hat B_2^{\,\prime}(v)[\hat K_m] + [\hat A_m, \hat B_2]
\right)
-\hat A_m,~~m\geq 1,\\
&\hat B_{s+1} = \frac{1}{s-2}
\big(
\hat B_s^{\,\prime}(v)[\hat \sigma_2] - \hat B_2^{\,\prime}(v)[\hat \sigma_s] + [\hat B_s, \hat B_2]
\big) -\hat B_s, ~~~ s \geq 3,
\end{align}
\end{subequations}
where $\hat A_1$ is defined as \eqref{dmkp-A1}, $\hat B_2, \hat B_3$ are given in \eqref{dmkp-B1-B2}.
This also indicates that $\hat B_2$ acts as a ``master operator'' in the above recursive structures.
\end{proposition}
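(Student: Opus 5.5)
We follow the proof of Proposition \ref{prop-3-3} line by line, with the D$\Delta$mKP objects in place of the D$\Delta$KP ones. The only structural change is that the zero curvature representations \eqref{dmkp-iso-km} and \eqref{dmkp-non-sigma-m} carry a right factor $\Delta^{-1}$, and the operators $\hat A_m,\hat B_s$ contain only pure difference parts (no zeroth order term).

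\textbf{Step 1: commutator representations.} Using $\hat A_{m,x}=[\partial_x,\hat A_m]$ and $\hat B_{m,x}=[\partial_x,\hat B_m]$, we rewrite \eqref{dmkp-iso-km} and \eqref{dmkp-non-sigma-m} as
\[
\hat K_m\Delta=[\hat A_m,\hat{\mathcal L}],\qquad \hat\sigma_m\Delta=[\hat B_m,\hat{\mathcal L}],\qquad \hat{\mathcal L}=\hat A_1-\partial_x=v\Delta-\partial_x .
\]
Here $\hat K_m\Delta$ denotes the multiplication operator by $\hat K_m$ composed with $\Delta$. The Gâteaux derivative of $\hat{\mathcal L}$ with respect to $v$ in direction $f$ is $\hat{\mathcal L}'[f]=f\Delta$, so $\hat{\mathcal L}'[\hat\sigma_2]=[\hat B_2,\hat{\mathcal L}]$ and $\hat{\mathcal L}'[\hat K_m]=[\hat A_m,\hat{\mathcal L}]$.

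\textbf{Step 2: the bracket $\llbracket \hat K_m,\hat\sigma_2\rrbracket$.} Differentiating the representations gives
\[
\hat K_m'[\hat\sigma_2]\Delta=[\hat A_m'[\hat\sigma_2],\hat{\mathcal L}]+[\hat A_m,[\hat B_2,\hat{\mathcal L}]],
\]
and symmetrically for $\hat\sigma_2'[\hat K_m]\Delta$. Subtracting and applying the Jacobi identity yields
\[
\llbracket \hat K_m,\hat\sigma_2\rrbracket\Delta=\bigl[\hat A_m'[\hat\sigma_2]-\hat B_2'[\hat K_m]+[\hat A_m,\hat B_2],\,\hat{\mathcal L}\bigr].
\]
By \eqref{dmkp-Km-sigma-r}, the left-hand side equals $m(\hat K_{m+1}+\hat K_m)\Delta=[m(\hat A_{m+1}+\hat A_m),\hat{\mathcal L}]$.

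\textbf{Step 3: boundary values.} We check that at $v=1$ (so $v_j=0$) both operators equal $m(\Delta^{m+1}+\Delta^m)$. For $m(\hat A_{m+1}+\hat A_m)$ this follows from $\hat A_s|_{v=1}=\Delta^s$. For the other operator, at $v=1$ we have $\hat K_m=0$, so $\hat B_2'[\hat K_m]=0$. Next, $\hat\sigma_2|_{v=1}=0$ from \eqref{dmkp-sigma-2}, hence $\hat A_m'[\hat\sigma_2]=0$. Finally $[\Delta^m,x\Delta^2+(x+n)\Delta]=m(\Delta^{m+1}+\Delta^m)$, using $[\Delta^m,n]=m\Delta^{m}E^{-1}\cdot E$-type identities; more precisely $[\Delta^m,n]=m\Delta^{m-1}E$, and $E\Delta=\Delta^2+\Delta$ then gives the stated value.

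\textbf{Step 4: uniqueness.} We need the analogue of \eqref{A-LL}: if $\mathcal A=\sum_{j}a_j\Delta^{m+1-j}$ is a pure difference operator with $[\mathcal A,v\Delta-\partial_x]=0$ and $\mathcal A|_{v=1}=0$, then $\mathcal A=0$. The plan is to compare coefficients of $\Delta^{k}$ from the top order down. The leading relation is a first-order difference–differential equation for $a_0$, and each lower coefficient is then determined recursively, with the decay/boundary conditions forcing every $a_j=0$. This step is the main obstacle. Two features differ from the D$\Delta$KP case: the factor $v$ in front of $\Delta$, and the absence of a $\Delta^0$ term. I would handle it exactly as in Proposition \ref{prop-3-1}, by viewing $\hat{\mathcal L}$ through the Lax operator $M$ and invoking the uniqueness of $\hat A_m$ under \eqref{dmkp-iso-Ltm} with its asymptotics.

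With Steps 2–4, the first relation follows. The second relation is obtained identically, starting from $\llbracket\hat\sigma_s,\hat\sigma_2\rrbracket=(s-2)(\hat\sigma_{s+1}+\hat\sigma_s)$ and checking the boundary values of the $\hat B$'s via \eqref{dmkp-non-bc}.
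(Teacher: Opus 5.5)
Your proposal is correct and is the adaptation of the proof of Proposition \ref{prop-3-3} that the paper intends (it states Proposition \ref{prop-6-1} without proof). You make the right modifications: $\hat K_m\Delta=[\hat A_m,\hat{\mathcal L}]$ with $\hat{\mathcal L}=v\Delta-\partial_x$, the derivative $\hat{\mathcal L}'[f]=f\Delta$, and the boundary value $m(\Delta^{m+1}+\Delta^m)$ at $v=1$ obtained via $[\Delta^m,n]=m\Delta^{m-1}E$.

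For Step 4, the coefficient comparison is what actually closes the argument; the appeal to uniqueness for $M$ does not transfer directly, because commuting with $\hat{\mathcal L}$ is not the same as commuting with $M$. The top-order relation is the pure difference equation $a_0v^{(n+m+1)}=v\,a_0^{(n+1)}$ (not a difference–differential one), so $a_0=\alpha\prod_{j=0}^{m}v^{(n+j)}$ with $\alpha$ independent of $n$. The boundary condition then forces $\alpha=0$, and the same equation recurs for each lower coefficient in turn.
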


\subsection{Constraint of D$\Delta$mKP system to sdBurgers hierarchy}\label{sec-6-2}

To get the sdBurgers hierarchy from the D$\Delta$mKP system,
we take a linear eigenfunction constraint
\begin{equation}\label{v=psi_n}
v = \psi
\end{equation}
where we note that
\begin{equation}\label{psi-1}
\psi \to 1,~ |n|\to \infty.
\end{equation}
Under this constraint, it follows from \eqref{dmkp-iso-A1} that
\begin{equation}\label{6.14}
v_{x} = v  \Delta v,
\end{equation}
which is the sdBurgers equation \eqref{bur-k1} with $x=t_1$.
Furhter, both $\psi_{t_1}=\hat{A}_1 \psi$ and the first isospectral equation \eqref{6.9a}
are converted to
\begin{equation}
v_{t_1} = v  \Delta v,
\end{equation}
which is exactly the sdBurgers equation \eqref{bur-k1}.
Using \eqref{6.14}  one can replace all the derivatives $\partial_x^{j}v^{(n+i)}$
in the D$\Delta$mKP flows $\{\hat{K}_m\}$ and operators $\{\hat{A}_m\}$,
and then one can express them in terms of $v$ and its shifts.

Let us denote
\begin{equation}
F(v) |_{R_2}= F(v) |_{\eqref{v=psi_n},\eqref{6.14}}.
\end{equation}
Again, one should notice the asymptotic condition \eqref{psi-1}.
For example, under the constraints \eqref{v=psi_n} and \eqref{6.14},
from  \eqref{dmkp-v0-v1} and \eqref{6.3} we have
\begin{equation}
v_0 (v)|_{R_2}
= \Delta^{-1} (\ln \psi)_x |_{R_2}
= \Delta^{-1} \frac{\psi_x}{\psi} \big|_{R_2}
= \Delta^{-1} \Delta \psi
= \Delta^{-1} \Delta (\psi - 1)
= \psi - 1;
\end{equation}
and for the term $\Delta^{-1} (\ln v)_{xx}$ in \eqref{6.9b}, since $\psi_x \to 0$, we have
\begin{equation}
\Delta^{-1} (\ln v)_{xx}|_{R_2}
= \Delta^{-1} \Delta \psi_x |_{R_2}
= \psi_x |_{R_2}
= \psi \Delta \psi.
\end{equation}

Then we present the following constraint results of the D$\Delta$mKP system.

\begin{theorem}\label{Th-6-2}
Under the constraints \eqref{v=psi_n} and \eqref{6.14},
both the isospectral D$\Delta$mKP hierarchy \eqref{dmkp-iso-km}
and the time evolution \eqref{dmkp-flow-lax} for $\psi$
are converted to the combined sdBurgers hierarchy \eqref{sdBurgers-Km} with $z=v$.
\end{theorem}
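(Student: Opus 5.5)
I will follow the scheme used for Theorem \ref{Th-5-1}, with the D$\Delta$mKP structures of Theorem \ref{Th-6-1} and Proposition \ref{prop-6-1} in place of Propositions \ref{prop-3-2} and \ref{prop-3-3}. Here the constraint is $v=\psi$ itself rather than $u=\Delta\phi$, so no outer $\Delta$ is expected. The claim splits into two statements, proved by induction on $m$:
\begin{equation*}
\hat K_m(v)|_{R_2}=\ti K_m(v),\qquad \hat A_m(v)|_{R_2}\,\psi=\ti K_m(v).
\end{equation*}
Both statements will be driven by the master symmetry $\hat\sigma_2$ on the D$\Delta$mKP side and $\ti\sigma_1$ on the sdBurgers side.

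\emph{Step 1 (base cases and the master symmetry).} By \eqref{6.14}, $\hat K_1|_{R_2}=v_x=v\Delta v=\ti K_1$ and $\hat A_1|_{R_2}\psi=v\Delta v$. I will verify $m=2$ directly from \eqref{6.9b}, using reductions such as $\Delta^{-1}(\ln v)_{xx}|_{R_2}=v\Delta v$, $\Delta^{-1}(\ln v)_x|_{R_2}=v-1$ and the asymptotics \eqref{psi-1}. I will then compute $\hat\sigma_2|_{R_2}$ from \eqref{dmkp-sigma-2} and expect
\begin{equation*}
\hat\sigma_2|_{R_2}=\ti{\ti\sigma}_1=x\ti K_2+x\ti K_1+\ti\sigma_1-\ti K_1 ,
\end{equation*}
possibly plus a multiple of $\ti K_1$ or of $\ti\sigma_0=z$, whose only effect would be to shift constants. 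In the same way I will compute $\hat B_2|_{R_2}\psi=x\ti K_2+(x+n)\ti K_1$, and relate it to $\ti{\ti\sigma}_1$ up to an explicit correction, as in \eqref{5.11b}. These explicit identifications are the main obstacle. The key point is that $\llbracket \ti K_m,\ti{\ti\sigma}_1\rrbracket=m(\ti K_{m+1}+\ti K_m)$ must hold exactly. For this the extra terms must be $x$-explicit isospectral pieces, or terms whose brackets with $\ti K_m$ combine with the $-\ti K_m$ correction as in Lemma \ref{lem-5-2}. Careful bookkeeping of $\Delta^{-1}$ against the boundary values $v\to1$ is therefore essential.

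\emph{Step 2 (induction for the flows).} Since the constraint $v=\psi$ is the identity substitution, I will argue as in Lemma \ref{lem-5-1}: $\hat K_m'(v)[\hat\sigma_2]|_{R_2}=(\hat K_m|_{R_2})'(v)[\hat\sigma_2|_{R_2}]=\ti K_m'[\ti{\ti\sigma}_1]$, and symmetrically for $\hat\sigma_2'[\hat K_m]$. Then \eqref{dmkp-master-sym} gives $\hat K_{m+1}|_{R_2}=\frac1m\llbracket\ti K_m,\ti{\ti\sigma}_1\rrbracket-\ti K_m=\ti K_{m+1}$ by \eqref{4.35a}.

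\emph{Step 3 (induction for the eigenfunction evolution).} I will apply the first relation of Proposition \ref{prop-6-1} to $\psi$. Because $\psi'(v)[f]=f$, the same manipulation as in Lemma \ref{lem-5-2} yields
\begin{equation*}
\frac1m\Big((\hat A_m|_{R_2}\psi)'[\hat\sigma_2|_{R_2}]-(\hat B_2|_{R_2}\psi)'[\ti K_m]\Big)-\hat A_m|_{R_2}\psi .
\end{equation*}
The cross terms $\hat A_m\hat B_2\psi$ are absorbed exactly by differentiating the product. Substituting the induction hypothesis and the formula for $\hat B_2|_{R_2}\psi$ from Step 1 reduces this expression to $\frac1m\llbracket\ti K_m,\ti{\ti\sigma}_1\rrbracket-\ti K_m=\ti K_{m+1}$. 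Any discrepancy between $\hat B_2|_{R_2}\psi$ and $\ti{\ti\sigma}_1$, such as a term $-\psi$, must produce exactly the $\pm\ti K_m$ that cancels. Checking this cancellation is the delicate part, and I would verify it first at $m=1$ against $\ti K_2$.
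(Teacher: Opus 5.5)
Your Steps 1--2 are sound and agree with Lemma~\ref{lem-dmkp-sdburgers}. In fact $\hat\sigma_2|_{R_2}=x\ti K_2+x\ti K_1+\ti\sigma_1=\ti{\ti\sigma}_1+\ti K_1$, and the extra $\ti K_1$ is harmless because $\llbracket \ti K_m,\ti K_1\rrbracket=0$.

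The gap is in Step~3, where the cross terms are not absorbed exactly. Write $\hat B_2|_{R_2}\psi=\hat\sigma_2|_{R_2}+\tau_1$ and redo the manipulation of Lemma~\ref{lem-5-2}, using the induction hypothesis. This gives
\begin{equation*}
m\big(\hat A_{m+1}+\hat A_m\big)\big|_{R_2}\psi=\llbracket \ti K_m,\hat\sigma_2|_{R_2}\rrbracket+\hat A_m|_{R_2}\,\tau_1-\tau_1'[\ti K_m].
\end{equation*}
So the inductive step is equivalent to $\hat A_m|_{R_2}\tau_1=\tau_1'[\ti K_m]$. Since $\hat B_2=x\hat A_2+(x+n)\hat A_1$, one finds $\tau_1=\psi^{(n)}-\psi^{(n+1)}\psi^{(n)}$.

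In Section~\ref{sec-5} the analogous correction was $-\phi$, a multiple of the eigenfunction itself, so it cancelled by the induction hypothesis. Here too, any correction of the form $c_1+c_2\psi$ would be harmless, since $\hat A_m$ kills constants. But $\tau_1$ is quadratic in $\psi$, and $\hat A_m|_{R_2}\psi=\ti K_m$ tells you nothing about $\hat A_m|_{R_2}(\psi^{(n+1)}\psi^{(n)})$. What you need is an independent statement: $\hat A_s|_{R_2}\tau_1=\tau_1'[\ti K_s]$ for every $s$, i.e.\ $\tau_1$ is a second eigenfunction transported by the sdBurgers flows. Your planned check at $m=1$ would pass, because $\tau_{1,x}=\psi\Delta\tau_1$, i.e.\ $\mathcal{M}|_{R_2}\tau_1=0$. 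So that check would not expose the problem.

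This is why the paper says the proof follows Section~\ref{sec-3-2} rather than Section~\ref{sec-5}. The missing ingredient is Lemma~\ref{lem-dmkp-sdburgers-3}, the analogue of Lemma~\ref{lem-3.3}. It works as follows:
\begin{itemize}
\item Introduce the family $\tau_{m+1}=\tau_m'[\hat{\ti\sigma}_1]-\hat B_2|_{R_2}\tau_m$.
\item Show $\mathcal{M}|_{R_2}\tau_m=0$ for all $m$, which gives the case $s=1$.
\item Prove $\hat A_s|_{R_2}\tau_m=\tau_m'[\ti K_s]$ by induction on $s$, simultaneously for all $m$. The tools are Proposition~\ref{prop-6-1}, the identities of Lemma~\ref{lem-dmkp-sdburgers} for $\hat A_s'[\hat\sigma_2]|_{R_2}$ and $\hat B_2'[\hat K_s]|_{R_2}$, and $(f'[g])'[h]-(f'[h])'[g]=f'[\llbracket g,h\rrbracket]$.
\end{itemize}
The step $s\to s+1$ for $\tau_m$ uses the level-$s$ statement for $\tau_{m+1}$. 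So an identity for $\tau_1$ alone does not close, and the whole family is needed. Once you have that lemma, your Step~3 goes through with $\hat B_2|_{R_2}\psi=\hat{\ti\sigma}_1+\tau_1$, as in Lemma~\ref{lem-dmkp-sdburgers-4}.
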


This theorem can be proved in a way similar to the case for the D$\Delta$KP system in Sec.\ref{sec-3-2}.
Here we list the necessary lemmas for the proof and skip presenting the detailed proofs.

\begin{lemma}\label{lem-dmkp-sdburgers}
For the isospectral D$\Delta$mKP flows $\{\hat{K}_m(v)\}$ and master symmetry $\hat{\sigma}_2(v)$
given in \eqref{dmkp-iso-km} and \eqref{dmkp-sigma-2},
and for  the isospectral sdBurgers  flows $\{\ti{K}_m\}$ and master symmetry $\ti{\sigma}_1$ defined in \eqref{sdBurgers-Km} and \eqref{sdBurgers-sigma_1},
under the constraints \eqref{v=psi_n} and \eqref{6.14},
we have
\begin{subequations}
\begin{align}
& \hat{K}_m(v) |_{R_2}
= \ti{K}_m(\psi), \\
& \hat{\sigma}_2(v) |_{R_2}
= \hat{\ti{\sigma}}_1(\psi),
\end{align}
\end{subequations}
where
\begin{equation}
\hat{\ti{\sigma}}_1(v)
= x \ti{K}_2 + x \ti{K}_1  +\ti{\sigma}_1.\label{hati-sigma-1}
\end{equation}
In addition, for the D$\Delta$mKP operators $\hat{A}_m$ and $\hat{B}_2$, we have
\begin{subequations}
\begin{align}
& \hat A_m^{\,\prime}(v) [\hat \sigma_2(v)] |_{R_2}
= (\hat A_m |_{R_2})' (\psi)[\hat{ \ti{\sigma}}_1(\psi)],\\
& \hat B_2^{\,\prime}(v)[\hat K_m(v)] |_{R_2}
= (\hat B_2 |_{R_2})' (\psi)[\ti{K}_m (\psi)].
\end{align}
\end{subequations}
\end{lemma}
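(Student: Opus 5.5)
We follow the template of Lemma \ref{lem-5-1} and Lemma \ref{lem-3.2}, now with the identity-type constraint $v=\psi$ in place of $u=\Delta\phi$. The proof has three parts: base cases checked by direct computation, an induction on $m$ driven by the master symmetries, and the two operator identities, which follow from the chain rule.

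\emph{Base cases.} For $m=1$ we have $\hat K_1|_{R_2}=v_x|_{R_2}=\psi\Delta\psi=\ti K_1(\psi)$ by \eqref{6.14}. For $m=2$ we substitute \eqref{6.14} into \eqref{6.9b}, using the boundary-sensitive evaluations given just before Theorem \ref{Th-6-2}:
\begin{equation*}
(\ln v)_x|_{R_2}=\Delta\psi,\qquad \Delta^{-1}(\ln v)_x|_{R_2}=\psi-1,\qquad \Delta^{-1}(\ln v)_{xx}|_{R_2}=\psi\Delta\psi .
\end{equation*}
Rewriting the result in terms of shifts of $\psi$, we check that it equals $\ti K_2(\psi)=\psi^{(n)}\Delta(\psi^{(n+1)}\psi^{(n)})-2\ti K_1$. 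Inserting these two results into \eqref{dmkp-sigma-2}, the $x\hat K_2+(x+n)\hat K_1$ part becomes $x\ti K_2+(x+n)\ti K_1$. The remaining terms $v(1+2\Delta^{-1})(\ln v)_x+v-v^2$ become
\begin{equation*}
\psi\Delta\psi+2\psi(\psi-1)+\psi-\psi^2=\psi^{(n+1)}\psi^{(n)}-\psi^{(n)}.
\end{equation*}
Comparing with \eqref{sdBurgers-sigma_1}, this gives exactly $\hat{\ti\sigma}_1=x\ti K_2+x\ti K_1+\ti\sigma_1$. The main obstacle is this bookkeeping: each $\Delta^{-1}$ must be integrated with the correct asymptotics $\psi\to1$. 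I would organise the check by writing everything in terms of $\psi-1$ before applying $\Delta^{-1}$.

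\emph{Induction on $m$.} Assume $\hat K_m|_{R_2}=\ti K_m(\psi)$. Since the constraint is $v=\psi$ and the $x$-derivatives are replaced via \eqref{6.14}, substitution commutes with Gâteaux differentiation in directions that are themselves expressed through $R_2$. Together with the $\hat\sigma_2$ identity, this gives
\begin{equation*}
\hat K_m'(v)[\hat\sigma_2]|_{R_2}=\frac{\mathrm d}{\mathrm d\varepsilon}\hat K_m\big(\psi+\varepsilon\hat{\ti\sigma}_1\big)\Big|_{\varepsilon=0}=(\hat K_m|_{R_2})'(\psi)[\hat{\ti\sigma}_1]=\ti K_m'(\psi)[\hat{\ti\sigma}_1],
\end{equation*}
and similarly $\hat\sigma_2'[\hat K_m]|_{R_2}=\hat{\ti\sigma}_1'[\ti K_m]$. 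Subtracting, $\llbracket\hat K_m,\hat\sigma_2\rrbracket|_{R_2}=\llbracket\ti K_m,\hat{\ti\sigma}_1\rrbracket$.

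The terms $x\ti K_2+x\ti K_1$ in $\hat{\ti\sigma}_1$ commute with $\ti K_m$, by Theorem \ref{Th-4-1} and the fact that $x$ acts as a constant under Gâteaux differentiation with respect to $z$. Hence $\llbracket\ti K_m,\hat{\ti\sigma}_1\rrbracket=\llbracket\ti K_m,\ti\sigma_1\rrbracket=m(\ti K_{m+1}+\ti K_m)$. Combining \eqref{dmkp-master-sym} with \eqref{4.35a} then gives $\hat K_{m+1}|_{R_2}=\ti K_{m+1}(\psi)$, which closes the induction.

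\emph{Operator identities.} These follow by the same chain-rule computation as in Lemma \ref{lem-3.2}. For the first, replace $v$ by $\psi+\varepsilon\hat{\ti\sigma}_1$ in $\hat A_m$, which is legitimate because $\hat\sigma_2|_{R_2}=\hat{\ti\sigma}_1$, and differentiate at $\varepsilon=0$. For the second, replace $v$ by $\psi+\varepsilon\ti K_m$ in $\hat B_2$, using the already proved $\hat K_m|_{R_2}=\ti K_m$, and differentiate at $\varepsilon=0$.
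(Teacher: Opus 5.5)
Your proposal is correct and is essentially the argument the paper has in mind. The paper omits this proof and points to the D$\Delta$KP template of Lemmas \ref{lem-3.2} and \ref{lem-5-1}, which you follow: direct checks for $m=1,2$ and for $\hat\sigma_2|_{R_2}$ (your computations are right), induction via \eqref{dmkp-master-sym} and \eqref{4.35a}, and the chain rule for the operator identities.

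The one step you justify too loosely is interchanging $|_{R_2}$ with G\^ateaux differentiation. This interchange does not hold for arbitrary directions written in terms of $\psi$. For example, for $\hat K_1=v_x$ in the direction $\ti\sigma_1$, the two sides differ by $\llbracket \ti K_1,\ti\sigma_1\rrbracket=\ti K_2+\ti K_1$. It holds only for directions $g$ that are symmetries of $\psi_x=\psi\Delta\psi$, i.e.\ $g_x=\ti K_1'[g]$. This is true for $\ti K_m$, and, precisely because of the terms $x\ti K_2+x\ti K_1$, also for $\hat{\ti\sigma}_1=\ti\tau_{1,1}$ from Corollary \ref{cor-4-1}. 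The paper leaves this point implicit as well.
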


\begin{lemma}\label{lem-dmkp-sdburgers-3}
Define flows $\{\tau_m(\psi)\}$ by
\begin{subequations}
\begin{align}
&\tau_1 (\psi)
=-\psi^{(n+1)} \psi^{(n)} + \psi^{(n)},\\
&\tau_{m+1} (\psi)
=\tau_m^{\,\prime}[\hat{\ti{\sigma}}_1 (\psi)]
- \hat B_2 |_{R_2} \tau_m(\psi),~~m \ge 1.
\end{align}
\end{subequations}
where $\hat B_2$ is the operator given in \eqref{dmkp-B2} and $\hat{\ti{\sigma}}_1(\psi)$ is defined in \eqref{hati-sigma-1}.
Then, for the operator $\hat A_s$ and operator $\mathcal{M}$ defined as
\begin{equation}
\mathcal{M} = \hat{A}_1 - \partial_x,
\end{equation}
we have
\begin{subequations}\label{M,An-tau}
\begin{align}
&\mathcal{M} |_{R_2} \tau_m
= 0,\\
&\hat A_s |_{R_2} \tau_m
= \tau_m^{\,\prime}(\psi)[\ti{K}_s(\psi)].
\end{align}
\end{subequations}
\end{lemma}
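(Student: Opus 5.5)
Both identities in Lemma \ref{lem-dmkp-sdburgers-3} will be proved by induction, copying the scheme of Lemma \ref{lem-3.3} for the scalar D$\Delta$mKP setting. The ingredients are: the zero-curvature forms $\hat K_m=[\hat A_m,\mathcal{M}]\Delta^{-1}$ and $\hat\sigma_2=[\hat B_2,\mathcal{M}]\Delta^{-1}$, obtained from \eqref{dmkp-iso-km} and \eqref{dmkp-non-sigma-m} via $\hat A_{m,x}=[\partial_x,\hat A_m]$; the reductions in Lemma \ref{lem-dmkp-sdburgers}; and the recursion of Proposition \ref{prop-6-1}.

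\emph{First identity.} The base case $\mathcal{M}|_{R_2}\tau_1=0$ is a direct computation. We have $\mathcal{M}|_{R_2}\tau_1=\psi\Delta\tau_1-(\tau_1)_x$, and we substitute $\psi_x=\psi\Delta\psi$ into $(-\psi^{(n+1)}\psi^{(n)}+\psi^{(n)})_x$. For the induction step, assume $\mathcal{M}|_{R_2}\tau_m=0$. Then
\[
\mathcal{M}|_{R_2}\tau_{m+1}=\mathcal{M}|_{R_2}\tau_m'[\hat{\ti{\sigma}}_1]+[\hat B_2,\mathcal{M}]|_{R_2}\tau_m .
\]
Since $\mathcal{M}=v\Delta-\partial_x$ and $v=\psi$, we get $(\mathcal{M}|_{R_2})'(\psi)[\hat{\ti{\sigma}}_1]=\hat{\ti{\sigma}}_1\Delta=\hat\sigma_2|_{R_2}\Delta$ by Lemma \ref{lem-dmkp-sdburgers}. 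We must therefore show that $[\hat B_2,\mathcal{M}]|_{R_2}=\hat\sigma_2|_{R_2}\Delta$ as operators. This holds because $[\hat B_2,\mathcal{M}]$ is $\hat B_{2,x}+[\hat B_2,\hat A_1]$ up to sign conventions, and by \eqref{dmkp-non-zcc} it equals $\hat A_{1,t_2}=\hat\sigma_2\Delta$. It follows that $\mathcal{M}|_{R_2}\tau_{m+1}=(\mathcal{M}|_{R_2}\tau_m)'[\hat{\ti{\sigma}}_1]=0$.

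\emph{Second identity, base case $s=1$.} Write $0=\mathcal{M}|_{R_2}\tau_m=\hat A_1|_{R_2}\tau_m-\tau_m'[\psi_x]$. Because $\psi_x=\ti K_1(\psi)$, this gives $\hat A_1|_{R_2}\tau_m=\tau_m'[\ti K_1]$ for all $m$.

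\emph{Second identity, induction on $s$.} Assume the identity holds for $s$ and all $m$. Differentiate $\hat A_s|_{R_2}\tau_m=\tau_m'[\ti K_s]$ in the direction $\hat{\ti{\sigma}}_1$, and use Lemma \ref{lem-dmkp-sdburgers} to replace $(\hat A_s|_{R_2})'[\hat{\ti{\sigma}}_1]$ by $\hat A_s'[\hat\sigma_2]|_{R_2}$. Using the definition of $\tau_{m+1}$ gives
\[
(\tau_m'[\ti K_s])'[\hat{\ti{\sigma}}_1]=(\hat A_s'[\hat\sigma_2]+\hat A_s\hat B_2)|_{R_2}\tau_m+\tau_{m+1}'[\ti K_s].
\]
Symmetrically, differentiating the recursion for $\tau_{m+1}$ in the direction $\ti K_s$ and applying Lemma \ref{lem-dmkp-sdburgers} to $\hat B_2$ gives
\[
(\tau_m'[\hat{\ti{\sigma}}_1])'[\ti K_s]=(\hat B_2'[\hat K_s]+\hat B_2\hat A_s)|_{R_2}\tau_m+\tau_{m+1}'[\ti K_s].
\]
We subtract the two and use $(f'[g])'[h]-(f'[h])'[g]=f'[\llbracket g,h\rrbracket]$. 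We also need $\llbracket\ti K_s,\hat{\ti{\sigma}}_1\rrbracket=\llbracket\ti K_s,\ti\sigma_1\rrbracket=s(\ti K_{s+1}+\ti K_s)$. This holds because the $x$-dependent terms $x\ti K_2+x\ti K_1$ commute with $\ti K_s$, by \eqref{Km-Kn}. Dividing by $s$, subtracting $\tau_m'[\ti K_s]=\hat A_s|_{R_2}\tau_m$, and applying Proposition \ref{prop-6-1} yields $\tau_m'[\ti K_{s+1}]=\hat A_{s+1}|_{R_2}\tau_m$.

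\emph{Main obstacle.} The delicate step is the operator identity $[\hat B_2,\mathcal{M}]|_{R_2}=\hat{\ti{\sigma}}_1\Delta$ in the proof of the first identity. Two issues arise there. First, in the mKP case the flows carry a trailing $\Delta^{-1}$ and $\mathcal{M}'=\Delta$ rather than $I$ (where $\mathcal{M}'$ is the Gâteaux derivative of $\mathcal{M}$ with respect to $v$), so the identification of $\hat\sigma_2|_{R_2}$ with $\hat{\ti{\sigma}}_1$ must be matched carefully, including the exact form of $\hat{\ti{\sigma}}_1$ without the $-\ti K_1$ shift that appeared in the KP case. Second, the boundary condition \eqref{psi-1} must be respected when evaluating $\Delta^{-1}$ terms. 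I would settle the base case $\tau_1$ and this identity by explicit computation first, checking signs against \eqref{dmkp-non-zcc}, before running the inductions.
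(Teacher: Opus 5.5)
You follow the paper's route. The paper proves this lemma only by reference to the proof of Lemma~\ref{lem-3.3}, which you transplant faithfully, and your first identity is fine. The step you call the main obstacle is routine: $[\hat B_2,\mathcal M]=\hat B_{2,x}+[\hat B_2,\hat A_1]=\hat\sigma_2\Delta$ by \eqref{dmkp-non-zcc}. What actually lets you treat $\partial_x$ as independent of $\psi$ in $(\mathcal M|_{R_2}\tau_m)'[\hat{\tilde\sigma}_1]$ is that $\hat{\tilde\sigma}_1$ is a symmetry of $\psi_x=\tilde K_1$; the terms $x\tilde K_2+x\tilde K_1$ are there for this purpose.

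\textbf{The gap.} There is a genuine gap in the base case $s=1$ of the second identity. Both $\hat{\tilde\sigma}_1$ and $\hat B_2|_{R_2}=x\hat A_2|_{R_2}+(x+n)\hat A_1|_{R_2}$ contain $x$ explicitly, so $\tau_m$ is a priori $x$-dependent for $m\ge2$. The first identity then gives only $\hat A_1|_{R_2}\tau_m=\tau_m'[\tilde K_1]+\partial_x^{\mathrm{expl}}\tau_m$. Hence the case $s=1$ is equivalent to $\tau_m$ being $x$-free, and your induction cannot deliver this:
\begin{itemize}
\item If $\tau_m$ is $x$-free and satisfies the case $s=1$, then $\partial_x^{\mathrm{expl}}\tau_{m+1}=\tau_m'[\tilde K_2]-\hat A_2|_{R_2}\tau_m$. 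So $x$-freeness of $\tau_{m+1}$ is exactly the case $s=2$ at level $m$.
\item Your step $s\to s+1$ at level $m$ needs the case $s$ at level $m+1$. Once $\mathcal M|_{R_2}\tau_m=0$ is used, these two relations are the same identity, so the argument is circular.
\item You can check $\hat A_2|_{R_2}\tau_1=\tau_1'[\tilde K_2]$ by hand, which makes $\tau_2$ $x$-free. Continuing this way, however, needs the identity for $\tau_1$ with ever larger $s$.
\end{itemize}
The template proof of Lemma~\ref{lem-3.3} skips the same point when it writes $(\zeta_{m,1})_x=\zeta_{m,1}'(U)[U_x]$.

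\textbf{A way to close it.} Set $\tau_0=-\psi$, $\Phi(X)=X'[\hat{\tilde\sigma}_1]-\hat B_2|_{R_2}X$ and $\mathcal D_sX=X'[\tilde K_s]-\hat A_s|_{R_2}X$. The second formula of Lemma~\ref{lem-dmkp-sdburgers-4} is a direct computation and gives $\tau_1=\Phi(\tau_0)$, so $\tau_m=\Phi^m(\tau_0)$. The computation behind your two displayed formulas, done without inserting the induction hypothesis, is the operator identity $[\mathcal D_s,\Phi]=-s(\mathcal D_{s+1}+\mathcal D_s)$. It holds for $x$-dependent $X$ too. You can then induct on $m$ for all $s$ simultaneously, via $\mathcal D_s\tau_{m+1}=\Phi\mathcal D_s\tau_m-s(\mathcal D_{s+1}+\mathcal D_s)\tau_m$.

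The base case $\hat A_s|_{R_2}\psi=\tilde K_s$ for all $s$ must then be proved independently, because in the paper's scheme (cf.\ Lemma~\ref{thm-dkp-sdakns}) it is derived from this lemma. It does hold. Under $R_2$ one has $v_0=\psi-1$, and $\psi E-1$ satisfies \eqref{dmkp-iso-Lx}. Uniqueness under \eqref{6.3} then forces $v_j=0$ for $j\ge1$, so $M|_{R_2}=\psi E-1$ and
\[
\hat A_s|_{R_2}=\sum_{k=1}^{s}(-1)^{s-k}\mathrm{C}_s^k\,\Pi_k\,(E^k-1),\qquad \Pi_k=\psi^{(n)}\psi^{(n+1)}\cdots\psi^{(n+k-1)}.
\]
Consequently $\hat A_s|_{R_2}\psi=\sum_k(-1)^{s-k}\mathrm{C}_s^kD_k$ with $D_k=\psi\Delta\Pi_k$. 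This equals $\tilde K_s$, since $\tilde K_1=D_1$ and $\tilde L D_k=D_{k+1}-D_k-D_1$. Finally, the case $s=1$ together with $\mathcal M|_{R_2}\tau_m=0$ shows that every $\tau_m$ is $x$-free.
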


\begin{lemma}\label{lem-dmkp-sdburgers-4}
Under the constraint \eqref{v=psi_n}, we have isospectral sdBurgers flows
\begin{equation}
\hat A_m(v) |_{R_2} \psi
= \ti{K}_m (\psi).
\end{equation}
In addition, we have
\begin{equation}
\hat B_2(v) |_{R_2} \psi
= \hat{\ti{\sigma}}_1(\psi) + \tau_1(\psi).
\end{equation}
\end{lemma}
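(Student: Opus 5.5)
The plan is to follow the structure of the proof of Lemma \ref{thm-dkp-sdakns} (and of Lemma \ref{lem-5-2}): verify the base cases by direct computation, then run an induction on $m$. The induction step uses the recursive formula for $\hat A_{m+1}$ from Proposition \ref{prop-6-1}, the transfer identities of Lemma \ref{lem-dmkp-sdburgers}, and the intertwining relations of Lemma \ref{lem-dmkp-sdburgers-3}.

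\textbf{Base cases.} First I verify the statement for $\hat B_2$ directly. Since $\hat B_2=x\hat A_2+(x+n)\hat A_1$, applying it to $\psi$ under $R_2$ gives $x\hat A_2|_{R_2}\psi+(x+n)\psi\Delta\psi$. For $\hat A_2|_{R_2}\psi$ I use $v_0|_{R_2}=\psi-1$ together with \eqref{6.14}. Expanding $v(Ev)\Delta^2\psi+v(\Delta v+Ev_0+v_0)\Delta\psi$ should give $\ti K_2(\psi)$, which settles the case $m=2$; the case $m=1$ is just \eqref{6.14}. Substituting into $\hat B_2$ and comparing with $\hat{\ti\sigma}_1=x\ti K_2+x\ti K_1+\ti\sigma_1$ and $\ti\sigma_1=n\ti K_1+\psi^{(n+1)}\psi^{(n)}-\psi^{(n)}$, the difference is exactly $-\psi^{(n+1)}\psi^{(n)}+\psi^{(n)}=\tau_1$. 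This confirms the second identity.

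\textbf{Induction step.} Assume $\hat A_m|_{R_2}\psi=\ti K_m(\psi)$. By Proposition \ref{prop-6-1},
\[
m\,\hat A_{m+1}|_{R_2}\psi=\Big(\hat A_m'[\hat\sigma_2]-\hat B_2'[\hat K_m]+[\hat A_m,\hat B_2]\Big)\Big|_{R_2}\psi-m\,\ti K_m .
\]
I treat the terms as follows.
\begin{enumerate}[label=(\roman*)]
\item By Lemma \ref{lem-dmkp-sdburgers}, the first term equals $(\hat A_m|_{R_2})'(\psi)[\hat{\ti\sigma}_1]\psi$. The Leibniz rule with $\psi'[f]=f$ turns this into $(\ti K_m)'[\hat{\ti\sigma}_1]-\hat A_m|_{R_2}\hat{\ti\sigma}_1$.
\item In the same way, the second term becomes $(\hat B_2|_{R_2}\psi)'[\ti K_m]-\hat B_2|_{R_2}\ti K_m=\hat{\ti\sigma}_1'[\ti K_m]+\tau_1'[\ti K_m]-\hat B_2|_{R_2}\ti K_m$.
\item The commutator gives $\hat A_m|_{R_2}(\hat{\ti\sigma}_1+\tau_1)-\hat B_2|_{R_2}\ti K_m$.
\end{enumerate}
Collecting terms, the pieces $\hat A_m\hat{\ti\sigma}_1$ and $\hat B_2\ti K_m$ cancel. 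What remains is
\[
\llbracket\ti K_m,\hat{\ti\sigma}_1\rrbracket+\hat A_m|_{R_2}\tau_1-\tau_1'[\ti K_m].
\]
By Lemma \ref{lem-dmkp-sdburgers-3} (the relation $\hat A_s|_{R_2}\tau_m=\tau_m'[\ti K_s]$ with $s=m$ and index $1$), the last two terms cancel. Since $\llbracket\ti K_m,x\ti K_2+x\ti K_1\rrbracket=0$, the first term equals $\llbracket\ti K_m,\ti\sigma_1\rrbracket=m(\ti K_{m+1}+\ti K_m)$ by Theorem \ref{Th-4-1}. Dividing by $m$ and subtracting $\ti K_m$ yields $\hat A_{m+1}|_{R_2}\psi=\ti K_{m+1}(\psi)$.

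\textbf{Main obstacle.} The hard part is the bookkeeping with the explicit $x$ in $\hat{\ti\sigma}_1$. The claim that $\llbracket \ti K_m, x\ti K_j\rrbracket$ vanishes needs care: $\ti K_m$ has no explicit $x$ and the Gâteaux derivative treats $x$ as a parameter, so the vanishing should follow from $\llbracket\ti K_m,\ti K_j\rrbracket=0$. The same issue affects consistency with the analogous cancellations in Lemma \ref{lem-dmkp-sdburgers}.

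A second delicate point is the base-case computation. There, the nonlocal terms $\Delta^{-1}$ must be evaluated using the asymptotic condition \eqref{psi-1}, so that the conversion of $v_0$ and of terms like $\Delta^{-1}(\ln v)_{xx}$ is correct. I would check these conversions explicitly before comparing with $\ti K_2$ and $\hat{\ti\sigma}_1$.
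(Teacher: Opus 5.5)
Your proposal is correct and is essentially the argument the paper intends; the paper omits this proof and defers to the D$\Delta$KP/sdAKNS case. You establish the base cases ($m=1,2$ and the $\hat B_2$ identity) by direct computation with $v_0|_{R_2}=\psi-1$, then run the induction via Proposition \ref{prop-6-1}, Lemma \ref{lem-dmkp-sdburgers} and the cancellation $\hat A_m|_{R_2}\tau_1=\tau_1'[\ti K_m]$ from Lemma \ref{lem-dmkp-sdburgers-3}, exactly as in the proof of Lemma \ref{thm-dkp-sdakns} with $\zeta_1$ replaced by $\tau_1$. Your worry about the explicit $x$ is harmless, since $\llbracket \ti K_m, x\ti K_j\rrbracket = x\llbracket \ti K_m,\ti K_j\rrbracket=0$, which is the same fact the paper uses in \eqref{Km-titi-sigma-1}.
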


Let end this section with the following remark.
\begin{remark}\label{Rem-6-1}
It is well known that their is a Miura transformation between
the D$\Delta$KP and D$\Delta$mKP hierarchies, which reads (e.g.\cite{CZZ-SAMP-2021,JC-MPLB-2018,TK-CSF-2000})
\begin{equation}\label{trans-u-v}
u = \Delta^{-1} (\ln v)_x + 1- v.
\end{equation}
However, when $v$ satisfies the sdBurgers equation \eqref{6.14} with asymptotic property
$v\to 1~(|n|\to \infty)$, it turns out that $u=0$ from \eqref{trans-u-v},
which means the transformation \eqref{trans-u-v}
becomes trivial under the constraint \eqref{v=psi_n}.
\end{remark}

\section{Conclusions}\label{sec-7}

In this paper we have investigated three eigenfunction constraints of
the D$\Delta$KP  hierarchy and the D$\Delta$mKP hierarchy.
We revisited the squared eigenfunction symmetry constraint of the D$\Delta$KP  hierarchy,
which gives rise to the sdAKNS hierarchy
from the time evolutions of the eigenfunction and  the adjoint eigenfunction.
This fact has been proved before in \cite{CDZ-JNMP-2017} by using the recursion operator of the sdAKNS hierarchy,
but in this paper we proved it by using the recursive algebraic structures of the involved flows
generated by their master symmetries.
These recursive structures are useful.
In light of the linear eigenfunction constraint \eqref{u=delta phi},
both the D$\Delta$mKP hierarchy  \eqref{dkp-Km} and the time  evolution \eqref{dkp-iso-lax-t}
of the eigenfunction are converted to the combined the sdBurgers hierarchy \eqref{sdBurgers-Km}.
To prove this, in Sec.\ref{sec-4} we formulated the (combined) isospectral and nonisospectral sdBergurs hierarchies,
their algebraic structure and the related master symmetry.
Note that in the formulation one needs to take care of the asymptotic condition \eqref{z-asym}
in dealing with discrete `integration'.
The constraint results are proved by comparing their recursive algebraic structures.
In addition, we also showed that the linear eigenfunction constraint \eqref{v=psi_n}
leads to the same combined sdBurgers hierarchy \eqref{sdBurgers-Km}
from the D$\Delta$mKP hierarchy and the time  evolution \eqref{dmkp-flow-lax}
of the eigenfunction.
The proof can be implemented along the same line.
Note that the D$\Delta$mKP system allows two different squared eigenfunction constraints:
one yields  the relativistic Toda hierarchy (see \cite{CZZ-SAMP-2021})
and the other yields the semi-discrete CLL hierarchy (see \cite{LZZ-SAMP-2024}),
both of which can be reduced to the sdBurgers hierarchy.
If one revisits the reductions in \cite{CZZ-SAMP-2021} by  strictly considering the asymptotic condition
of the involved variables,
then the sdBurgers hierarchy reduced from \cite{CZZ-SAMP-2021} will be the same as
what we have obtained in this paper.
The research of this paper shows more interesting mathematical structures in differential-difference case
as well as applications of master symmetries.

\vskip 20pt
\subsection*{Data availability statement}

No new data were created or analysed in this study.


\subsection*{Conflict of interest statement }

The authors declare that they have no conflict of interest.

\vskip 20pt
\subsection*{Acknowledgments}

This project is supported by the National Natural Science Foundation of China (Grant No. 12271334).

\end{document}